\pgfplotsset{compat=1.4}
\newcommand{\eps}{\varepsilon}
\newcommand{\Oish}{\widetilde{O}}
\newcommand{\tee}{\mathcal{T}}
\newcommand{\ba}{\mathbf{a}}
\newcommand{\bd}{\mathbf{d}}
\newcommand{\bD}{\mathbf{D}}
\newcommand{\be}{\mathbf{e}}
\newcommand{\beff}{\mathbf{f}}
\newcommand{\bu}{\mathbf{u}}
\newcommand{\bv}{\mathbf{v}}
\newcommand{\bp}{\mathbf{p}}
\newcommand{\bs}{\mathbf{s}}
\newcommand{\bt}{\mathbf{t}}
\newcommand{\bx}{\mathbf{x}}
\newcommand{\by}{\mathbf{y}}
\newcommand{\rr}{\mathbb{R}}
\newcommand{\bz}{\mathbf{0}}
\newcommand{\bo}{\mathbf{1}}
\newcommand{\rrp}{\mathbb{R}_{\ge \bz}}
\newtheorem{claim}{Claim}
\theoremstyle{definition}
\newtheorem*{definition*}{Definition}
\newtheorem*{theorem*}{Theorem}
\newtheorem*{mtheorem*}{Main Theorem}
\newtheorem*{corollary*}{Corollary}
\newtheorem*{lemma*}{Lemma}
\newtheorem*{mlemma*}{Main Lemma}
\newtheorem{program}{Program}
\subjclass{J.4 Social and Behavioral Sciences}
\keywords{Algorithmic Game Theory, Economics, Algorithms, Public Goods, Coalitional Stability}
\title{Testing Core Membership in Public Goods Economies}
\titlerunning{Testing Core Membership in Public Goods Economies}
\author{Greg Bodwin}
\affil{MIT EECS}
\date{}
\begin{document}
\maketitle

\begin{abstract}
\indent This paper develops a recent line of economic theory seeking to understand public goods economies using methods of topological analysis.
Our first main result is a very clean characterization of the economy's core (the standard solution concept in public goods).
Specifically, we prove that a point is in the core iff it is Pareto efficient, individually rational, and the set of points it dominates is path connected.

While this structural theorem has a few interesting implications in economic theory, the main focus of the second part of this paper is on a particular algorithmic application that demonstrates its utility.
Since the 1960s, economists have looked for an efficient computational process that decides whether or not a given point is in the core.
All known algorithms so far run in exponential time (except in some artificially restricted settings).
By heavily exploiting our new structure, we propose a new algorithm for testing core membership whose computational bottleneck is the solution of $O(n)$ convex optimization problems on the utility function governing the economy.
It is fairly natural to assume that convex optimization should be feasible, as it is needed even for very basic economic computational tasks such as testing Pareto efficiency.
Nevertheless, even without this assumption, our work implies for the first time that core membership can be efficiently tested on (e.g.) utility functions that admit ``nice'' analytic expressions, or that appropriately defined $\eps$-approximate versions of the problem are tractable (by using modern black-box $\eps$-approximate convex optimization algorithms).
\end{abstract}

\section{Introduction}

\subsection{Background on Public Goods Economics}

A basic question in economics is to understand the forces governing the production of public goods.
A good is \emph{public} if its use by one person does not reduce its availability to others, and if none are excluded from using the good.
Examples include public parks, research information, a clean environment, national defense, radio broadcasts, and so on.

Public goods economies were first explicitly abstracted in a classic paper by Samuelson in 1954 \cite{Samuelson54}, and have since become central objects of study for economists.
An important feature of public goods economies is that they are not well modeled by the individualistic ``best-response dynamics'' which govern familiar economic equilibrium concepts such as the Nash or Walrasian equilibrium.
Rather, public goods typically arise as the result of a communal process -- e.g. negotiations, treaties, or taxes -- that allow the cost of production to be amortized over all agents that stand to benefit from the good.
Accordingly, public goods economics inspired the development of \emph{cooperative game theory}, which seeks to understand these cooperative dynamics and when an agreement to produce public goods is ``stable'' or when it is doomed to fall apart.

What, exactly, does ``stability'' mean in this context?
The most standard notion is \emph{coalitional stability}, which is given as follows:

\begin{definition*} [Informal]
Let $\ba$ be an outcome in a public goods economy with agents $N$ (i.e. $\ba$ describes the amount of work each agent contributes to produce a public good).
We say that $\ba'$ is a \emph{deviation} on $\ba$ for a nonempty coalition of agents $C \subseteq N$ if no agents in $N \setminus C$ perform work (i.e. $a'_i = 0$ for all $i \notin C$) and all agents in $C$ prefer $\ba'$ to $\ba$.
If there is no deviation on $\ba$ for any coalition, then we say $\ba$ is \emph{coalitionally stable}.
The set of coalitionally stable outcomes is called the \emph{core} of the economy.
\end{definition*}

Since its inception in the late 1800's \cite{Edgeworth81},\footnote{Some of the early work on cooperative game theory used the name \emph{contract curve} instead of core.  The term \emph{core} was coined in \cite{Gillies59}.} the core and cooperative game theory in general have played major roles in many successful economic research programs.
More background can be found in most modern game theory textbooks, e.g. \cite{PS07, Curriel97}.

\subsection{(Non-)Algorithmic Properties of Public Goods Economies}

An inherent conceptual drawback of coalitional stability is its exponential-size definition.
In other words, for an outcome $\ba$ to be coalitionally stable, \emph{every single one} of the $2^n -1$ possible coalitions of agents must not have a deviation.
Hence, the naive algorithms for testing the coalitional stability of $\ba$ must perform computations for all $2^n-1$ coalitions, and so they suffer exponential runtime.
Coalitional stability is not a very convincing solution concept if its implicit notion of ``instability'' assumes that agents can quickly make exponential time computations in order to find deviations whenever they exist.




This problem has led economists studying public goods to seek more clever methods for solving computational problems related to the core of a public goods economy, which avoid this exponential behavior.
Some of the initial work in this vein attacked the closely-related problem of simply outputting any core outcome.
The first such solution appeared in the 1960s, when Scarf \cite{Scarf67a} proved that ``balanced'' games have nonempty cores, by means of an (exponential time) algorithm that outputs a core point and provably always terminates.
Similar results were proved in the setting of public goods economies by Chander and Tulkens \cite{CT97} and Elliot and Golub \cite{EG13}.
Meanwhile, followup work has suggested that the slow runtime of Scarf's algorithm may be inherent to the problem: Kintali et al \cite{KPRST09} showed that Scarf's algorithm cannot be improved to polynomial runtime (unless P = PPAD), and Deng and Papadimitriou \cite{DP94} showed that it is NP Complete just to detect whether or not the core is empty (let alone find a point in the core), even in the simple class of graphical games, and even when Scarf's assumption of ``balance'' is dropped.
Other notable hardness results in this vein have come from Conitzer and Sandholm \cite{CS06} and Greco et al \cite{GMPS11}.

There has also been considerable prior work on the ``membership testing'' problem of determining whether a point taken on input is in the core (this is the question addressed in this paper).
Deng and Papadimitriou's work \cite{DP94} also implies that membership testing is NP hard even in the restricted setting of graphical games, although there are straightforward efficient algorithms in the further restricted setting where the game is superadditive.
Conitzer and Sandholm \cite{CS04} showed that membership testing is co-NP complete in games where coalition values have a ``multiple-issue'' representation in polynomial space.
Faigle et al \cite{FFHK97} showed that the problem is NP complete in a variant of graphical games where payoffs are given by minimum spanning trees of subgraphs.
Sung and Dimitrov \cite{SD07} showed co-NP completeness for membership testing in ``hedonic coalition formation games.''
Goemans and Skutella showed NP completeness for both emptiness and membership testing in ``facility location games,'' and gave formulations of these problems as LP relaxations \cite{GS00}.
There has been work on games defined by \emph{marginal contribution nets} (MC-nets) \cite{IS05, EGGW09}, in which values attainable by coalitions are determined by succinct logical formulae.
Li and Conitzer \cite{LC14} studied emptiness testing and membership testing under various classes of formulae, and obtained various algorithms or NP hardness results depending on the complexity of the formulae allowed.


\subsection{Our Results}


A recently popular trend in public goods research has been to model economies as networks, and then seek to analyze the economy by studying the topological properties of the underlying ``benefits network,'' describing the ability of agents to transfer utility to each other at any given point (see for example \cite{BCZ06, BK07, Allouch13, Allouch15, EG13}).
Much of the initial work focused on Nash equilibria \cite{BCZ06, BK07, Allouch13, Allouch15} of the economy.
A major stride was recently taken by Elliot and Golub \cite{EG13}, who extended the theory to show that the Lindahl equilibria\footnote{The Lindahl equilibria of an economy are the competitive equilibria that would be reached if market externalities were truthfully reported and then bought and sold on an open market.  A formal definition is not necessary to read this paper, but can be found in e.g. \cite{PS07, Curriel97}.} of an economy are precisely the points that are eigenvalues of their own benefits network.
They further discuss connections between the Lindahl equilibria and the core -- in particular, in any standard model (including theirs) all Lindahl equilibria are also core outcomes \cite{Foley70}.
However, they raise an interesting open question to more precisely characterize the core \cite{GolubPC} in a similar vein.

Our first main result achieves this goal.
We show that the core can be characterized as follows:
\begin{theorem} \label{thm:main}
Let $\ba$ be an outcome in a public goods economy and let $\bD_{\ba}$ be the set of points that no agent prefers to $\ba$.
Then $\ba$ is in the core if and only if it is Pareto efficient, individually rational, and $\bD_{\ba}$ is path connected.
\end{theorem}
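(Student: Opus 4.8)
The plan is to first peel off the two ``extreme'' families of coalitions and reduce the theorem to a purely topological statement about the rest. Recall that $\ba$ lies in the core iff no coalition $C\subseteq N$ admits a deviation, i.e.\ no point supported on the face $F_C:=\{\bx\in\rrp: x_i=0\ \forall i\notin C\}$ is strictly preferred to $\ba$ by every agent of $C$. I would partition the coalitions into singletons, the grand coalition $N$, and proper intermediate coalitions. A singleton $\{i\}$ can deviate exactly when $i$ does strictly better working alone than at $\ba$, so \emph{no singleton deviation} is precisely individual rationality; a deviation of the grand coalition is a point every agent strictly prefers, so \emph{no grand deviation} is precisely Pareto efficiency (in the weak sense, reconciled with the stated notion using the strict monotonicity of the utility model). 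It then remains to show, over intermediate coalitions only, that the absence of a deviation is equivalent to $\bD_{\ba}$ being path connected; combined with the two observations above, this yields both directions of the theorem.

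Next I would recast everything in the geometry of $\bD_{\ba}$. Writing $u_i$ for agent $i$'s utility and $V_i:=\{\bx\in\rrp: u_i(\bx)>u_i(\ba)\}$ for its strict upper-contour set, quasi-concavity of $u_i$ makes each $V_i$ convex, and by definition $\bD_{\ba}=\rrp\setminus\bigcup_{i\in N}V_i$. In this language a deviation for $C$ is exactly a point of $F_C\cap\bigcap_{i\in C}V_i$, and monotonicity of $u_i$ in the other agents' contributions makes $V_i$ upward closed in the coordinates outside $i$. So the whole matter reduces to two topological claims about a union of convex sets inside the orthant: (i) if $\bD_{\ba}$ is disconnected then some face $F_C$ meets $\bigcap_{i\in C}V_i$, and (ii) if some \emph{intermediate} $F_C$ meets $\bigcap_{i\in C}V_i$ then $\bD_{\ba}$ is disconnected. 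Claim (i) drives the direction ``core $\Rightarrow$ path connected'' (a disconnection yields \emph{some} deviation), while (ii) drives ``Pareto efficient, individually rational, path connected $\Rightarrow$ core'' (an intermediate deviation destroys connectivity).

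For both claims the governing picture is that, by monotonicity, the preferred region $\bigcap_{i\in C}V_i$ is not confined to $F_C$ but sweeps upward into a convex ``wall'' that spans the orthant, and a wall reaching from boundary to boundary can separate $\rrp$ into a low-contribution region and a high-contribution region. For (ii) I would start from the witnessing point on $F_C$, and use convexity together with upward closure to argue that the wall separates a neighborhood of $\bz$ (which lies in $\bD_{\ba}$ by individual rationality) from $\ba$ itself (which lies on the Pareto boundary of $\bD_{\ba}$ on the far side of the wall), placing two points of $\bD_{\ba}$ in distinct path components. For (i) I would take two points in different components, join them by a path in $\rrp$, and track where it crosses $\bigcup_i V_i$; a nerve/Helly-type localization of the crossing to a minimal overlapping subfamily $\{V_i\}_{i\in C}$, pushed to the boundary stratum where the non-$C$ coordinates vanish, should furnish the required common point on $F_C$.

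The main obstacle, I expect, is making the ``spanning wall separates the orthant'' intuition rigorous in arbitrary dimension: a single convex set never disconnects $\rrp$, so the separation relies genuinely on the wall touching the coordinate faces, and one must rule out paths that escape ``around'' the wall through higher-dimensional faces. I anticipate the cleanest route is to deformation-retract $\bD_{\ba}$ onto a well-chosen lower-dimensional slice (or onto the boundary of the orthant), so that the two components are certified by the sublevel sets of a single monotone scalar function, and then to run the localization of claim (i) against the same retraction. The subtlety of which canonical pair of points ($\bz$ and $\ba$, versus $\bz$ and a heavy-contribution point) realizes the separation is where I would expect to spend the most care.
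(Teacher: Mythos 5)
Your reduction (singleton deviations $\leftrightarrow$ individual rationality, grand-coalition deviations $\leftrightarrow$ Pareto efficiency, everything else $\leftrightarrow$ topology of $\bD_{\ba}$) matches the paper's framing, and the picture of convex, upward-closed upper-contour sets $V_i$ is sound. But both of your topological claims are left at the level of intuition, and in each case the missing step is the entire technical content. For claim (i) --- disconnected $\Rightarrow$ some deviation, which is the hard direction --- your ``nerve/Helly-type localization of the crossing'' contains no mechanism for producing what a deviation actually requires: a single point lying simultaneously in \emph{every} $V_i$ for $i \in C$ \emph{and} on the face $F_C$. A path between two components only certifies that some point lies in some one $V_i$; neither Helly-type arguments nor ``pushing to the boundary stratum'' converts that into a simultaneous strict improvement for a whole coalition supported on that coalition's own face. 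The paper's proof instead takes a path component of $\bD_{\ba}$ that misses $\bz$, picks a minimal (``lower envelope'') point $\bx$ of that component with support $C$, shows $\bu_C(\bx) = \bu_C(\ba)$ there (otherwise $\bx$ could be slid strictly lower, contradicting minimality), and then invokes a genuinely nontrivial trichotomy on one-sided directional derivatives --- proved via Sperner's Lemma together with Cantor's intersection theorem --- to exhibit a direction $\bv < \bz$ in the projected economy for $C$ along which all of $C$'s utilities strictly increase; moving slightly along $\bv$ yields the deviation. Nothing in your sketch plays the role of this derivative trichotomy, and it is the crux.

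For claim (ii), used in the backwards direction, you explicitly concede that you cannot make the ``spanning wall separates the orthant'' idea rigorous and propose an unconstructed deformation retraction. Here the paper shows that heavy topology is unnecessary: Pareto efficiency forces any deviation $\ba'$ to satisfy $\ba' \lneq \ba$ (raising the actions of any agents who went up would produce a Pareto improvement), and then a coordinate-wise first-crossing argument does the separation. Namely, walk along any path in $\bD_{\ba}$ from $\ba$ to $\bz$ and consider the first point $\bp$ at which some coordinate $i$ with $a'_i \neq 0$ satisfies $p_i = a'_i$; since $\bp \gneq \ba'$ and $p_i = a'_i$, positive externalities give $u_i(\bp) > u_i(\ba') > u_i(\ba)$, contradicting $\bp \in \bD_{\ba}$. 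Hence $\ba' \lneq \bz$, which is impossible, so no deviation exists. This disposes of exactly the ``paths escaping around the wall'' obstruction you identify, with no retraction and no auxiliary scalar function. One further small inaccuracy: individual rationality gives only $\bz \in \bD_{\ba}$, not that a neighborhood of $\bz$ lies in $\bD_{\ba}$ (if $u_i(\bz) = u_i(\ba)$ for some $i$, no neighborhood is contained), but the single point $\bz$ is all the argument needs.
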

(Here, the definition of path connectedness is the standard topological one: for any two points $\bx, \by \in \bD_{\ba}$, there is a continuous function $f : [0, 1] \to \rrp^n$ satisfying $f(0) = \bx, f(1) = \by,$ and $f(\lambda) \in \bD_{\ba}$ for all $0 \le \lambda \le 1$.  The image of $[0, 1]$ under $f$ is called a path.)

An interesting consequence of this theorem is a precise description of the relationship between Lindahl equilibria and core outcomes:
\begin{theorem}
Assuming that the utility function $\bu$ is differentiable, the Lindahl equilibria of a public goods economy are precisely the core points whose core membership can be certified using only local information.
\end{theorem}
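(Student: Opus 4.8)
The plan is to route everything through Theorem~\ref{thm:main}. By that characterization, $\ba$ lies in the core exactly when it is Pareto efficient, individually rational, and $\bD_{\ba}$ is path connected. Assuming $\bu$ is differentiable, the first condition is decided by purely local (first-order) data -- Pareto efficiency at $\ba$ is the Gordan-type condition that no feasible direction $\bv$ satisfies $\langle \nabla u_i(\ba), \bv\rangle > 0$ for all $i$ -- and individual rationality is a fixed comparison $u_i(\ba) \ge u_i(\bz)$ against the baseline. The only genuinely global ingredient is path connectedness of $\bD_{\ba}$. So before anything else I would pin down the meaning of ``certified using only local information'': I would say $\ba$'s core membership is \emph{locally certifiable} if, given that $\ba$ is Pareto efficient and individually rational, the path connectedness of $\bD_{\ba}$ is already forced by the gradient configuration $\{\nabla u_i(\ba)\}$ -- equivalently, $\bD'_{\ba}$ is path connected for every utility profile $\bu'$ agreeing with $\bu$ to first order at $\ba$ (and with the same individual-rationality status). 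With this definition the theorem becomes: $\ba$ is a Lindahl equilibrium $\iff$ the path connectedness of $\bD_{\ba}$ is implied by $\{\nabla u_i(\ba)\}$ alone.

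Under this formalization the forward direction (Lindahl $\Rightarrow$ locally certifiable) is short. By the Elliot--Golub benefits-network description \cite{EG13}, the Lindahl condition is an eigenvector relation on the benefits matrix, hence a function of the gradients $\{\nabla u_i(\ba)\}$ only. Thus if $\ba$ is a Lindahl equilibrium for $\bu$, it remains one for every first-order-equivalent $\bu'$; by Foley \cite{Foley70} each such $\bu'$ places $\ba$ in its core, so by Theorem~\ref{thm:main} each $\bD'_{\ba}$ is path connected. Connectedness is therefore forced by the frozen first-order data, which is exactly local certifiability.

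The reverse direction (locally certifiable $\Rightarrow$ Lindahl) is the crux, and I expect it to be the main obstacle. I would argue the contrapositive: if $\ba$ is a core point that is \emph{not} a Lindahl equilibrium, then by the benefits-network characterization its gradient configuration admits no common nonnegative supporting functional. I would use this failure to hand-build a utility profile $\bu'$ with exactly the same value and gradient as $\bu$ at $\ba$, but whose level sets are bent away from $\ba$ so as to pinch off a second component of $\bD'_{\ba}$; by Theorem~\ref{thm:main} this $\bu'$ puts $\ba$ outside the core, so no certificate using only the shared local data can be valid. The delicate part is showing that the absence of the eigenvalue/supporting-price condition is \emph{precisely} the obstruction to connectivity -- that the disconnection can always be realized while holding the first-order data at $\ba$ fixed, and that conversely no non-Lindahl configuration forces connectivity. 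I would isolate this as a standalone geometric lemma: \emph{the linearized dominated cone $K_{\ba} = \{\bv : \langle \nabla u_i(\ba), \bv\rangle \le 0 \ \forall i\}$ forces path connectedness of every first-order-equivalent $\bD_{\ba}$ if and only if the gradients $\{\nabla u_i(\ba)\}$ admit a common nonnegative supporting functional}, and then identify that functional with the Lindahl prices via \cite{EG13}. Proving the ``only if'' half of this lemma -- the explicit pinching construction -- is where the real work lies.
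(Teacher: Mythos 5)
Your forward direction is essentially sound, and it parallels the paper's: since by Theorem \ref{thm:eg} the Lindahl condition is exactly the first-order condition $\bd_{\ba}\,\bu(\ba) = \bd_{-\ba}\,\bu(\ba) = \bz$, it is preserved by any profile agreeing with $\bu$ to first order at $\ba$. (The paper does not even need Foley here: concavity plus $\bd_{-\ba}\,\bu(\ba)=\bz$ places the whole segment from $\bz$ to $\ba$ inside $\bD_{\ba}$, and then Lemmas \ref{lem: down dev} and \ref{lem: path down dev} kill every potential deviation directly, so the derivative condition itself is the local certificate.) The genuine gap is your reverse direction: you reduce it to a standalone geometric lemma --- that every non-Lindahl gradient configuration admits a first-order-equivalent utility profile whose dominated set is disconnected --- and then explicitly defer its proof (``where the real work lies''). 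As written, the proposal does not prove the harder implication at all; it only names it.

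Moreover, the pinching route is harder than necessary, and your own formalization may make it unworkable. The paper's argument for this direction never touches connectivity. If $\ba$ is not a Lindahl equilibrium then $\bd_{-\ba}\,\bu(\ba) \ne \bz$, and there are two cases: either $\bd_{-\ba}\,\bu(\ba) \lneq \bz$, whence $\bd_{\ba}\,\bu(\ba) \gneq \bz$ and $\ba$ is not even Pareto efficient (so not a core point); or some agent $i$ has $d_{-\ba}\, u_i(\ba) > 0$, in which case the affine-linear extension $\bu'$ of the first-order data at $\ba$ is locally indistinguishable from $\bu$ yet satisfies $u'_i(\bz) > u'_i(\ba)$, so individual rationality fails under $\bu'$ and $\ba$ is not in its core. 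Note that this counterexample works by \emph{changing} the individual-rationality status of $\ba$ --- exactly the degree of freedom your definition removes by requiring first-order-equivalent profiles to have ``the same individual-rationality status.'' Since individual rationality is itself a non-local property (it compares $u_i(\ba)$ against values of $u_i$ along agent $i$'s axis, far from $\ba$), freezing it simultaneously rules out the simple affine counterexample and forces you into the delicate disconnection construction, for which you would additionally need to show that a concave, positive-externality $\bu'$ exists that keeps $\ba$ Pareto efficient and individually rational while disconnecting $\bD'_{\ba}$ --- a claim you neither prove nor make plausible, and which the paper's proof shows is never needed.
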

The proof of this corollary is essentially immediate by combining Theorem \ref{thm:main} with a more technical phrasing of Elliot and Golub's result.

While we believe that these two structural theorems hold intrinsic interest, the second half of this paper is intended to demonstrate their power by an application to the algorithmic problem discussed earlier.
We have previously suggested the intuition that the algorithmic core membership testing problem is hard because the naive algorithms must check an exponential number of coalitions for a potential deviation.
However, Theorem \ref{thm:main} lets us avoid this brute-force behavior: after checking for Pareto efficiency and individual rationality (which is quite easy), we are left only with the task of checking whether or not $\bD_{\ba}$ is path-connected.
The complexity of this task is non-obvious, but we show that it can be done fairly efficiently, yielding the following result:

\begin{theorem} \label{thm:alg}
Given an outcome $\ba$ in an $n$-agent public goods economy, there is an algorithm that decides whether or not $\ba$ is in the core of a public goods economy.
The computational bottleneck in this algorithm is the solution of $O(n)$ convex programming problems on the utility function of the economy.
\end{theorem}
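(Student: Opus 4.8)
The plan is to invoke Theorem~\ref{thm:main}, reducing core membership to three independent tests: Pareto efficiency, individual rationality, and path-connectedness of $\bD_{\ba}$. Intuitively these rule out deviations by the grand coalition, by the singleton coalitions, and by all remaining coalitions, respectively. The first two are standard and cheap. Write $V_i := \{\bx \in \rrp^n : u_i(\bx) > u_i(\ba)\}$ for the (open) upper-contour set of agent $i$; assuming, as is conventional and as is already implicit in any model where Pareto efficiency is meaningfully testable, that each $u_i$ is quasiconcave, every $V_i$ is convex. Weak Pareto efficiency of $\ba$ then amounts to a single convex feasibility program --- emptiness of the convex set $\bigcap_{i=1}^n V_i$, since a simultaneous strict improvement for all agents exists precisely when $\ba$ is inefficient. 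Individual rationality asks, for each agent $i$, whether $i$ profits by deviating alone, i.e. whether $\sup_{s \ge 0} u_i(s\be_i) > u_i(\ba)$; this is one univariate convex program per agent, $O(n)$ in all. The entire difficulty is concentrated in the third test.

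For connectivity, note that
\[
\bD_{\ba} \;=\; \rrp^n \setminus \bigcup_{i=1}^n V_i ,
\]
so we must decide whether the nonnegative orthant, with $n$ convex open obstacles deleted, remains path-connected. A helpful reduction fixes a canonical hub: having verified individual rationality, we have $u_i(\bz) \le \sup_{s \ge 0} u_i(s\be_i) \le u_i(\ba)$ for every $i$, so the origin $\bz$ lies in $\bD_{\ba}$. Thus $\bD_{\ba}$ is path-connected iff every one of its points can be joined to $\bz$ within it, and it suffices to detect whether a ``trapped'' point exists --- one lying in $\bD_{\ba}$ but enclosed by the obstacles and unreachable from $\bz$.

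The heart of the proof, and the step I expect to be hardest, is to show that the presence of a trapped region can be certified by only $O(n)$ convex feasibility programs, rather than by reasoning about all $2^n$ ways the obstacles may interact. The intended mechanism is to exploit convexity of the $V_i$ in full: because each obstacle is convex, any separation of $\bz$ from a point of $\bD_{\ba}$ ought to be realizable by a structured, low-complexity configuration of the $V_i$ --- a Helly- or separating-hyperplane-type phenomenon --- that can be searched for one agent at a time. Concretely, I would exhibit an explicit family of candidate connecting paths (for instance paths decreasing monotonically toward the origin, or paths routed along the faces of the orthant) and show that the existence of such a path, or the obstruction to it, is the feasibility of a single convex program indexed by one agent. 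Adding the one program for efficiency, the $n$ for individual rationality, and the $O(n)$ for connectivity gives the claimed bound.

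The main obstacle is exactly this sufficiency claim: ruling out ``higher-order'' disconnections that no agent-by-agent test can detect and that would force many obstacles to be considered simultaneously. Showing that convexity of the upper-contour sets forbids such configurations is where the real work lies, and it is what converts the purely topological criterion of Theorem~\ref{thm:main} into a polynomial number of optimization calls. The remaining obstacles I expect to be routine but not vacuous: separating weak from strong Pareto efficiency, handling strict versus non-strict preference on obstacle boundaries, controlling the behavior of $\bD_{\ba}$ in unbounded directions, and verifying that each subroutine is a genuinely convex program, so that the black-box and $\eps$-approximate guarantees advertised in the introduction indeed apply.
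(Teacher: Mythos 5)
Your reduction via Theorem~\ref{thm:main}, and your treatment of the two easy conditions (Pareto efficiency as a convex feasibility problem, individual rationality as $n$ univariate programs, with $\bz \in \bD_{\ba}$ serving as the hub for connectivity), match the paper's starting point. But the proposal has a genuine gap exactly where you admit ``the real work lies'': you never give a mechanism for deciding path-connectedness of $\bD_{\ba}$ with $O(n)$ convex programs. The appeal to a ``Helly- or separating-hyperplane-type phenomenon'' and to ``a single convex program indexed by one agent'' is a hope, not an argument, and the shape you envision --- a family of \emph{independent}, agent-indexed feasibility tests --- is not how the problem gets solved. The difficulty is that whether agent $i$ can be ruled out of every deviation depends on which agents have already been ruled out; any correct certificate structure is inherently sequential, and your outline supplies no candidate for what the $i$-th program should be or why its infeasibility certifies anything.

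The paper's algorithm is an adaptive elimination procedure rather than a static test of the three conditions. It maintains an active coalition $C_A$ (initially $N$) with the invariant that agents outside $C_A$ must play $0$ in any deviation. Each round works in the projected economy for $C_A$: Program~\ref{prg:PE} maximizes $\min_i u_i(\bx) - u_i(\ba)$ over the box $\bz \le \bx \le \ba_{C_A}$ (bounded thanks to Lemma~\ref{lem: down dev}, which is why Pareto efficiency is checked first by the derivative programs~\ref{prg:uppe} and~\ref{prg:downpe}). If the optimum $\bx^*$ is a deviation, the algorithm halts; otherwise $\bx^*$ is Pareto efficient in the projected economy (Lemma~\ref{lem:deriv-quality}), and Program~\ref{prg:downdir} produces a direction $\bv^* \le \bz$ with $\bd_{\bv^*}\,\bu(\bx^*) \le \bz$, whose existence comes from the trichotomy of Lemma~\ref{lem: deriv types}. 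Following the ray from $\bx^*$ along $\bv^*$ until a coordinate hits zero names, via Lemma~\ref{lem: path down dev}, one agent who must play $0$ in every remaining deviation; that agent is deleted and the next round runs in the smaller projected economy, so each elimination certificate is conditional on the previous ones. This is the missing idea in your proposal: apply the directional-derivative machinery at the \emph{optimum of the current round's program}, not at $\ba$, and let the exit coordinate of the resulting descent ray choose the next agent to eliminate. Without it, your plan rests on an unproved claim (that convexity of the upper-contour sets forbids ``higher-order'' disconnections detectable only by considering many obstacles jointly), and in the independent per-agent form you state it, that claim is not what makes the $O(n)$ bound true.
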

Hence we essentially have the first polynomial-time tester for coalitional stability in an unrestricted public goods economy, up to the implementation of the necessary convex programming oracle.
It is fairly natural in our economic metaphor to assume that convex programming should be tractable: it corresponds to the negotiation process of a group of agents trying to determine how well they can maximize a joint utility function as a group.
If even this is impossible, then it is essentially hopeless to efficiently test core membership. One cannot even test the more basic property of Pareto efficiency, a necessary step towards testing core membership, without assuming some computational power along these lines.

Even so, if one does not wish to introduce such assumptions, Theorem \ref{thm:alg} implies that several broad special cases of membership testing have efficient algorithms.
The most obvious of these is when the utility function and its derivative can be described by a ``nice'' analytic function on which the standard derivative-based method for exact convex optimization goes through.
Less obviously, if convex optimization really is hard for the given utility function (or the utility function of the economy is unknown), one can employ modern $\eps$-approximate convex optimization solvers, which treat the utility function as a black box that can be queried, to solve certain natural $\eps$-approximate relaxations of the core membership testing problem.
We discuss this point in the conclusion of the paper, since it is easier to be specific here once the economic model is familiar.

We consider it somewhat surprising that these algorithmic results are possible, given a general dearth of positive results in the area.
Moreover, the approach taken by the algorithm is fairly intuitive and seems to plausibly reflect practical behavior.
Starting with the grand coalition, we show (via Theorem \ref{thm:main}) that we can either determine that the current coalition has a deviation, or we can identify a ``least-valuable player'' who is formally the least likely agent to participate in a deviation.
We then kill this agent and repeat the analysis on the survivors.
After $n$ rounds, we have either killed every agent (and thus determined that the given point is coalitionally stable), or we have explicitly found a surviving coalition with a deviation.
It is quite reasonable to imagine that a practical search for a deviating coalition might employ a ``greedy'' method of iteratively killing the agent who seems to be least pulling their weight at the current agreement; an insight of Theorem \ref{thm:alg} is that this search heuristic is in fact thorough and will provably produce the right answer.


\subsection{Comparison with Prior Work.}

Elliot and Golub \cite{EG13} recently studied the Lindahl equilibria in public goods economies, with a focus on characterizing the set of solutions rather than algorithmically computing/testing them.
More specifically, they frame the typical model of public goods economies in the language of networks, and use this to equate the eigenvectors of the ``benefits network'' with the Lindahl equilibria of the economy.
A less general version of this networks interpretation was implicitly used in several other papers concerning Nash equilibria of public goods economies, for example \cite{BCZ06, BK07, Allouch13, Allouch15}.
In this paper, we will adopt the more general networks-based phrasing of public goods economies used by Elliot and Golub, and we will rely on this insight in a critical way to prove our main results.

Per the discussion above, there has been lots of prior work on the algorithmic properties of the core, largely intended to confirm/refute the bounded rationality argument in some economic model.
Three questions are commonly studied:
\begin{itemize}
\item The \emph{membership testing} problem (discussed above): is a given outcome in the core of the game?
\item The \emph{emptiness testing} problem: is the core empty?
\item The \emph{member finding} problem: output any solution in the core of the game (if nonempty).
\end{itemize}

We remark that the latter two problems are already closed in public goods economies: Elliot and Golub \cite{EG13} show that the core is never empty except in certain degenerate cases, and it can be seen from the model below that the member finding problem is essentially identical to the general problem of convex optimization (which is well beyond the scope of this economically-minded research program).
Hence, this work is entirely focused on the membership testing problem.

In order to frame these three questions as proper computational problems, past work has commonly defined a ``compressed'' cooperative game that allows the payoffs achievable by all $2^n$ possible coalitions to be expressed on only $\poly(n)$ input bits.
For example, in a seminal paper by Deng and Papadimitriou introducing this line of research \cite{DP94}, the authors studied \emph{graphical games} in which weighted edges are placed between agents and the value attainable by a coalition is equal to the total weight contained in its induced subgraph.
Upper and lower bounds are often obtained for these problems by exploiting particular features of the compression scheme.
By contrast, our goal is to assume as little structure for the problem as possible (since our main results are upper bounds, this is the more general approach).
Thus, we allow the economy to be governed by an arbitrarily complex utility function, which does not need to have a succinct representation, or even any algorithmic representation at all.
Instead, we allow ourselves black-box constant-time query access to the utility function, which acts as an oracle and thus may have arbitrary complexity.
The goal in this substantial generalization is to ensure that our results reveal structure of the core itself, rather than the nature of an assumed compression.

%
%

\section{The Model and Basic Definitions}

\subsection{Notation Conventions}


Given vectors $\ba, \ba' \in \rr^n$, we will use the following (partial) ordering operations:
\begin{itemize}
\item $\ba \ge \ba'$ means that $a_i \ge a'_i$ for all $1 \le i \le n$,
\item $\ba > \ba'$ means that $a_i > a'_i$ for all $i \le i \le n$, and
\item $\ba \gneq \ba'$ means that $a_i \ge a'_i$ for all $1 \le i \le n$, and $a_j > a'_j$ for some $1 \le j \le n$.
\end{itemize}

Given a subset $C \subseteq \{1, \dots, n\}$ and a vector $\bv \in \rr^n$, we write $\bv_C$ to denote the restriction of $\bv$ to the indices in $C$; that is, $\bv_C$ is the $|C|$ length vector built by deleting the entry $v_i$ from $\bv$ for each $i \notin C$.

We use $\bz, \bo$ as shorthands for the vectors $\langle 0, \dots, 0 \rangle, \langle 1, \dots, 1 \rangle$ respectively.

\subsection{Economic Model}

We adopt the terminology of Elliot and Golub \cite{EG13} when possible.
The salient pieces of our economy are defined as follows:

\begin{itemize}
\item The set of agents in the economy is given by $N = [n] = \{1, \dots, n\}$.
A nonempty subset of agents in the economy $C \subseteq N$ is called a \emph{coalition}.
The coalition $C = N$ is called the \emph{grand coalition}.

\item Each agent $i$ chooses an \emph{action} $a_i$, which can be any real number in the interval $[0, 1]$.
An \emph{outcome} or \emph{point} is a vector $\ba \in \rr^n$ built by concatenating the actions of all agents.

\item There is a continuous \emph{utility function} $\bu : [0,1]^n \to [0, 1]^n$, which maps outcomes to a level of ``utility'' for each agent.
In particular, agent $i$ prefers outcome $\ba$ to outcome $\ba'$ iff $u_i(\ba) > u_i(\ba')$.
The utility function has the following two properties:
\begin{itemize}
\item \emph{Positive Externalities}: whenever $\ba \gneq \ba'$ with $a_i = a'_i$, we have $u_i(\ba) > u_i(\ba')$.
This assumption is what places us in the setting of public goods economies; intuitively, it states that an agent gains utility when other agents increase their production of public goods.

\item \emph{Convex Preferences}: we assume that $\bu$ is concave.\footnote{Confusingly, when $\bu$ is mathematically concave, one says that preferences are ``economically convex'' -- hence, \textit{convex preferences}.}
That is, for any outcomes $\ba, \ba'$ and any $\lambda \in [0, 1]$, we have $\bu(\lambda \ba + (1 - \lambda) \ba') \ge \lambda \bu(\ba) + (1 - \lambda) \bu(\ba')$.
This standard assumption corresponds to the economic principle of diminishing marginal returns.
\end{itemize}
\end{itemize}

\subsection{Game Theory Definitions}

We recap some well-known definitions from the game theory literature.

\begin{definition} [Pareto Efficiency]
An outcome $\ba$ is a \emph{Pareto Improvement} on another outcome $\ba'$ if $\bu(\ba) \gneq \bu(\ba')$.  An outcome $\ba$ is \emph{Pareto Efficient} if there is no Pareto improvement on $\ba$.  The set of Pareto efficient outcomes is called the \emph{Pareto Frontier}.
\end{definition}

The main solution concept that will be discussed in this paper is \emph{the core}:
\begin{definition} [Deviation]
Given an outcome $\ba$, an outcome $\ba'$ is a \emph{deviation} from $\ba$ for a coalition $C$ if $\ba'_{N \setminus C} = \bz$ and $\bu_C(\ba') > \bu_C(\ba)$.
\end{definition}

\begin{definition} [The Core]
An outcome $\ba$ is in the \emph{core} of the economy if no coalition has a deviation from $\ba$ (equivalently, $\ba$ is \emph{coalitionally stable}). 
\end{definition}

%
%

The next definition that will be useful in our proofs is the \emph{projected economy}:
\begin{definition}
Given an economy described by agents $N$ and a utility function $\bu$, the \emph{projected economy} for a coalition $C$ is the economy described by agents $N$ and utility function $\bu^C(\ba_C)$, where
$$\bu^C(\ba_C) := \bu_C(\ba_C \cdot \bz_{N \setminus C}).$$
\end{definition}
In other words, the new $|C|$-dimensional utility function $\bu^C$ is obtained by fixing the actions of $N \setminus C$ at $0$, allowing any action for $C$, and then using the old utility function $\bu$ to determine the utilities for $C$ in the natural way.
We suppress the superscript $\bu^C$ when clear from context.

\begin{definition}
The \emph{dominated set} of $\ba$, denoted $\bD_{\ba}$, is defined as:
$$\bD_{\ba} := \{ \ba' \ \mid \ \bu(\ba) \ge \bu(\ba') \}$$
\end{definition}
In other words, $\bD_{\ba}$ is the set of points that no agent prefers to $\ba$.
Note that this is an unusually weak definition of dominance, in the sense that (for example) $\bD_{\ba}$ contains $\ba$ itself.

\section{A Topological Characterization of the Core}

Our goal in this section is to prove the following structural theorem:

\begin{theorem} \label{thm: core classification}
Let $\ba$ be an outcome in a public goods economy.
Then $\ba$ is in the core if and only if it is Pareto efficient, individually rational, and $\bD_{\ba}$ is path connected.
\end{theorem}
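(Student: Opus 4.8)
The plan is to first restate the three conditions in a common geometric language. For each agent $i$, let $V_i := \{ \bx \in [0,1]^n : u_i(\bx) > u_i(\ba)\}$ be its strict superlevel set. Convexity of preferences makes each $V_i$ convex, and positive externalities make each $V_i$ \emph{monotone in the off-diagonal coordinates}: if $\bx \in V_i$ and $\by \gneq \bx$ with $y_i = x_i$, then $\by \in V_i$. By definition $\bD_{\ba} = [0,1]^n \setminus \bigcup_{i} V_i$, and a coalition $C$ has a deviation precisely when the coordinate subspace $\{\bx : x_j = 0 \text{ for all } j \notin C\}$ meets $\bigcap_{i \in C} V_i$. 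Thus individual rationality is exactly the statement that each axis $\{x_j = 0, j\ne i\}$ misses $V_i$ (so, in particular, $\bz \in \bD_{\ba}$), and a grand-coalition deviation is exactly $\bigcap_i V_i \neq \emptyset$.

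For the forward direction, individual rationality is immediate since singletons are coalitions. For Pareto efficiency I would show that any weak improvement $\bu(\ba') \gneq \bu(\ba)$ can be upgraded to a strict one: pick an agent $s$ whose utility strictly increased, nudge up $s$'s action by a small $\delta$, and invoke positive externalities (every other agent strictly gains) together with continuity (the strict slack at $s$ survives a small perturbation) to obtain $\ba''$ with $\bu(\ba'') > \bu(\ba)$, a grand-coalition deviation; the boundary case $a'_s = 1$ is routine. Combined with the reformulation, this reduces the theorem to proving, under the standing assumptions of Pareto efficiency and individual rationality (which rule out grand-coalition and singleton deviations), the equivalence: $\bD_{\ba}$ is path connected if and only if no \emph{intermediate} coalition $C$ (with $1 < |C| < n$) has a deviation.

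I would prove this equivalence by induction on $n$ through the projected economies. The engine is the monotonicity of the $V_i$: shrinking the off-diagonal coordinates can only move a point deeper into $\bD_{\ba}$, so I expect the inclusion of a coordinate face $\{x_k = 0\}$ into $\bD_{\ba}$ to admit a partial deformation retraction, letting me trade connectivity questions in the full box for connectivity questions on its faces, where the relevant utility functions are exactly the projected economies $\bu^C$ and the inductive hypothesis applies. For the contrapositive, given an intermediate deviation witnessed by a point $\bp$ in the face of $C$, the set $W := \bigcap_{i \in C} V_i$ is convex and, by monotonicity, contains the entire ``column'' obtained from $\bp$ by raising the coordinates outside $C$; I would combine this convex obstacle with the remaining superlevel sets to build a codimension-one barrier inside $\bigcup_i V_i$ that separates $\bz$ from the opposite side of $W$, certifying disconnection. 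Conversely, to connect an arbitrary $\bx \in \bD_{\ba}$ to $\bz$ when no coalition deviates, I would eliminate agents one at a time, in the ``least-valuable-player'' order of Theorem~\ref{thm:alg}, driving each chosen coordinate to $0$; monotonicity guarantees that every \emph{other} agent's constraint is preserved along such a move, so the only thing to verify is that some surviving agent can always be zeroed out without its own utility crossing $u_i(\ba)$.

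The crux, and the step I expect to be hardest, is exactly this last verification and its dual, because positive externalities give no control over an agent's \emph{own} action: along the $x_i$-axis $u_i$ is merely concave, so its sublevel set is the complement of an interval, and an ``overworked'' agent sitting past its own peak cannot be zeroed directly. The substance of the argument is therefore to show that if \emph{no} surviving agent admits a safe elimination, then the survivors jointly form a deviation, that is, to promote the fact that each agent can individually exceed $u_i(\ba)$ by adjusting its own coordinate into a \emph{single} point at which all of them simultaneously do so. I expect this promotion to require the convexity of the $V_i$ in an essential way, beyond the coordinate monotonicity used everywhere else, for instance via a separation or averaging argument that reconciles the competing own-action adjustments; the same convexity is what makes the separating barrier in the disconnection direction genuinely codimension one. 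Once this bridge between ``coordinatewise'' and ``simultaneous'' improvement is in place, the induction and the two constructions above close the equivalence, and hence the theorem.
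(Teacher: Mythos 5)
Your reduction and the easy steps (individual rationality, upgrading a weak Pareto improvement to a strict one via a small nudge) are sound and essentially match the paper's Claim~\ref{clm: vec adjust}. The genuine gap is exactly at the step you yourself flag as the crux, and it is not a repairable detail: the claim that ``if no surviving agent admits a safe elimination, then the survivors jointly form a deviation'' is \emph{false}. Concretely, take $n=2$, $u_1(x_1,x_2) = x_1(1-x_1)+x_2$ and $u_2(x_1,x_2) = x_1 + x_2(1-x_2)$ (jointly concave, positive externalities; rescale the range to fit in $[0,1]^2$ if desired), and $\ba = (1,1)$, so $u_1(\ba)=u_2(\ba)=1$. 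Since $u_1+u_2 = 2-(1-x_1)^2-(1-x_2)^2 \le 2$, the grand coalition has no deviation, and an agent acting alone gets at most $1/4$, so $\ba$ is in the core. Yet starting your elimination process at $\bx=\ba$, \emph{any} decrease of $x_1$ alone (partial or full) immediately enters $V_1$, because $\partial u_1/\partial x_1 = 1-2x_1 = -1 < 0$ there, and symmetrically for agent $2$: both eliminations are blocked, each agent can individually push its utility above $u_i(\ba)$ by lowering its own action, but there is provably no single point at which both do so simultaneously. So the ``promotion'' you hope to extract from convexity (averaging or separation) cannot exist in general, and your induction has no legal move at such a point; the only way to reach $\bz$ from this $\bx$ inside $\bD_{\ba}$ is to decrease both coordinates \emph{together} (e.g.\ along the diagonal), which is outside your move set.

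This is precisely the difficulty the paper's machinery is built to resolve, and it uses tools of a different character than those you propose. The paper does not analyze points where coordinate eliminations are blocked; it analyzes points on the \emph{lower envelope} of a path component of $\bD_{\ba}$, where Claim~\ref{clm: le points} pins $u_i(\bx)=u_i(\ba)$ with equality for every surviving agent. It then invokes a trichotomy of directional derivatives (Lemma~\ref{lem: deriv types}): at any interior point there is either a direction $\bv>\bz$ improving all agents, a direction $\bv<\bz$ improving all agents, or directions $\bv_{up}>\bz$, $\bv_{down}<\bz$ along which no agent gains. The existence of such \emph{simultaneous} directions is established via Sperner's Lemma together with Cantor's Intersection Theorem -- a fixed-point-type argument, not a consequence of convexity plus coordinatewise monotonicity. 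At a lower-envelope point the first two cases are excluded, leaving a direction $\bv_{down}<\bz$ that improves all surviving agents at once; because their utilities are pinned at equality, a small move in that direction is a deviation, giving the contradiction. In the example above this is visible: at $(1,1)$ the good directions are diagonal and invisible to single-coordinate reasoning. Your disconnection direction has a milder version of the same issue: the barrier you want is supplied not by convexity of $\bigcap_{i \in C} V_i$ but by positive externalities alone (the paper's Lemma~\ref{lem: path down dev}: a path in $\bD_{\ba}$ starting coordinatewise above a deviation point can never touch one of its nonzero coordinates, since the first touching point would strictly dominate $\ba$ for that agent). Without a substitute for the trichotomy lemma, the proposal does not prove the theorem.
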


The vast majority of the technical depth of this theorem is tied up in the implication
$$\ba \text{ is in the core } \longrightarrow \bD_{\ba} \text{ is path connected.}$$
The remainder of this forwards implication ($\ba$ is in the core $\to$ neither the grand coalition nor any singleton coalition has a deviation from $\ba$) is extremely straightforward: $\ba$ is individually rational iff each agent $i$ prefers it to the outcome they can guarantee acting alone, which coincides with the notion that the singleton coalition $\{i\}$ has no deviation from $\ba$.
In our model, Pareto efficiency coincides with the notion that the grand coalition $N$ has no deviation from $\ba$:
\begin{claim} \label{clm: vec adjust}
Let $\ba$ be an outcome.
If there is an outcome $\ba'$ satisfying $\ba \gneq \ba'$ ($\ba \lneq \ba'$) and $\bu(\ba) \gneq \bu(\ba')$, then there is an outcome $\ba''$ satisfying $\ba > \ba''$ ($\ba < \ba'$) and $\bu(\ba) > \bu(\ba'')$.
\end{claim}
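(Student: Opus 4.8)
The plan is to prove this by ``strictifying'' a weak domination, using concavity of $\bu$ as the main engine. I will treat the first version (hypotheses $\ba \gneq \ba'$ and $\bu(\ba) \gneq \bu(\ba')$); the parenthetical version is symmetric. Let $\bd := \ba - \ba' \geq \bz$, which is nonzero since $\ba \gneq \ba'$. For a small parameter $s > 0$ I would extrapolate past $\ba'$ in the direction pointing away from $\ba$, setting $\ba''(s) := \ba' - s \bd$. Since $\ba'$ is the convex combination $\tfrac{1}{1+s}\ba''(s) + \tfrac{s}{1+s}\ba$, concavity of $\bu$ yields the coordinatewise bound $\bu(\ba''(s)) \leq \bu(\ba') + s\bigl(\bu(\ba') - \bu(\ba)\bigr)$. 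Because $\bu(\ba) \gneq \bu(\ba')$ gives $\bu(\ba') - \bu(\ba) \leq \bz$, this already shows $\bu(\ba''(s)) \leq \bu(\ba)$, and moreover the inequality is \emph{strict} in every coordinate $i$ with utility slack, i.e.\ $u_i(\ba') < u_i(\ba)$, for every $s > 0$.

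The remaining work is to upgrade the weak inequalities to strict ones in the ``tied'' coordinates, and here I would lean on Positive Externalities. First observe the convenient dichotomy it forces: if a coordinate $i$ has tied action ($a_i = a'_i$), then applying Positive Externalities to $\ba \gneq \ba'$ gives $u_i(\ba) > u_i(\ba')$, so that coordinate has utility slack; contrapositively, any coordinate with tied utility must have strictly larger action, i.e.\ $d_i > 0$. Thus the only coordinates not yet handled by the extrapolation are tied-utility coordinates, and each of these already satisfies $a''_i(s) < a'_i \leq a_i$, so action strictness there comes for free. To make the action strict on the (utility-slack) tied-action coordinates as well, I would further nudge those coordinates downward by a tiny amount $\eta > 0$; by continuity this preserves their utility slack while only lowering the other agents' utilities. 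Finally, to get \emph{strict} utility domination on a tied-utility coordinate $i$, I would compare $\ba''$ against the point that agrees with it in coordinate $i$ but holds the remaining agents higher, and invoke Positive Externalities: since some other agent's action has been strictly lowered, $u_i$ drops strictly below $u_i(\ba)$.

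The symmetric version (starting from $\ba \lneq \ba'$) is proved identically, extrapolating instead to $\ba' + s\bd$ with $\bd = \ba' - \ba$. I expect the main obstacle to be precisely the tied-utility coordinates: the concavity bound alone is only weak there (and is in fact met with equality when $\bu$ is linear), so strictness must come from Positive Externalities, and the delicate point is to drive \emph{every} tied coordinate strictly below $u_i(\ba)$ simultaneously while keeping all perturbations small enough that $\ba''$ remains a valid outcome and no previously-gained strict inequality is lost. Reconciling these competing requirements -- choosing the extrapolation parameter $s$ and the auxiliary decrease $\eta$ together, and confirming that lowering other agents' actions genuinely strictifies each tied coordinate -- is where the real care is required.
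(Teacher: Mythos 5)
Your proposal and the paper's proof go in opposite directions, and the difference matters. You hold $\ba$ fixed and push the dominated point $\ba'$ downward/outward; the paper instead perturbs the \emph{dominating} point: pick an agent $i$ with strict slack $u_i(\ba) > u_i(\ba')$, raise $a_i$ slightly (continuity preserves $i$'s slack, and Positive Externalities then makes every \emph{other} agent's comparison strict), then raise all actions slightly to make the action comparison strict. The point $\ba''$ so produced satisfies $\ba'' > \ba'$ and $\bu(\ba'') > \bu(\ba')$. Admittedly the printed statement is garbled and invites your reading, but the outward perturbation is the only version consistent with the paper's proof and with every later invocation: it is used to turn a \emph{weak} Pareto improvement on a point into a \emph{strict} one (i.e.\ a grand-coalition deviation) in the backwards implication of Theorem \ref{thm: core classification} and in Lemma \ref{lem:deriv-quality}, and to strengthen $\ba^{\max} \gneq \ba$ to $\ba^{\max} > \ba$ in Claim \ref{clm: hilo pf}. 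A point that is dominated \emph{by} $\ba$, which is what you construct, cannot play any of those roles.

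The deeper problem is that the version you set out to prove is not merely different but false, because the action space is bounded below by $\bz$. Your extrapolation $\ba''(s) := \ba' - s\bd$ and the subsequent $\eta$-nudges leave the feasible region whenever a coordinate of $\ba'$ that must be lowered is already $0$, and no alternative construction can fix this. Concretely, take $n = 2$, $u_1(a_1,a_2) = a_2 - a_1/2$ and $u_2(a_1,a_2) = a_1 - 2\max(0,\, a_2 - 1/2)$ (both concave, continuous, and satisfying Positive Externalities; rescale affinely into $[0,1]$ if desired), with $\ba = (1,1)$ and $\ba' = \bz$. Then $\ba \gneq \ba'$ and $\bu(\ba) = (1/2,\,0) \gneq (0,0) = \bu(\ba')$, yet no outcome $\ba'' \ge \bz$ satisfies $\bu(\ba) > \bu(\ba'')$: making $u_2(\ba'') < 0$ requires $a''_2 > 1/2$ and $a''_1 < 2a''_2 - 1$, which forces $u_1(\ba'') = a''_2 - a''_1/2 > 1/2 = u_1(\ba)$. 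This also exposes the structural reason the paper perturbs upward rather than downward: Positive Externalities only signs the effect of \emph{other} agents' actions, and lowering others' actions is exactly the move the boundary can forbid, whereas raising them is always available. (Even away from the boundary your final step has a gap: comparing $\ba''$ to a point with the same $i$-th coordinate but higher remaining coordinates gives $u_i(\ba'')$ strictly below that point's utility, but you still must bound \emph{that} point's $u_i$ by $u_i(\ba)$, and the two differ by a pure own-action change, about which the model promises nothing.)
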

\begin{proof}
We will prove the claim for the case $\ba \gneq \ba'$; the case $\ba \lneq \ba'$ follows from a symmetric argument.

Choose an agent $i$ for whom $u_i(a) > u_i(a')$, and then slightly increase $a_i$.
Since $\bu$ is continuous, if we increase $a_i$ by a sufficiently small amount then we still have $u_i(a) > u_i(a')$.
Additionally, by positive externalities we then have $\bu(\ba) > \bu(\ba')$.
We can then slightly increase the actions of all agents, such that $\ba > \ba'$, but with sufficiently small increases we do not destroy the property that $\bu(\ba) > \bu(\ba')$.
\end{proof}

In this section, we will first give a complete proof of the (easier) backwards implication of Theorem \ref{thm: core classification}, and then we sketch the proof of the forwards implication.
Due to space constraints, a full proof of the forwards implication can be found in Appendix \ref{app:class proof}.

\subsection{Backwards Implication of Theorem \ref{thm: core classification}}
First:
\begin{lemma} \label{lem: down dev}
If $\ba$ is Pareto efficient, then every deviation $\ba'$ from $\ba$ satisfies $\ba' \lneq \ba$.
\end{lemma}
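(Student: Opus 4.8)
The plan is to argue by contradiction: assuming some deviation $\ba'$ for a coalition $C$ has a coordinate strictly exceeding the corresponding coordinate of $\ba$, I will build an explicit Pareto improvement on $\ba$ and thereby contradict its Pareto efficiency. Concretely, I would take the coordinatewise maximum $\ba^* := \max(\ba, \ba')$ (so $a^*_i = \max(a_i, a'_i)$) as the candidate improvement. Since by hypothesis $a'_k > a_k$ for some $k$, and $\ba^* \ge \ba$ in every coordinate, we immediately get $\ba^* \gneq \ba$.

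First I would dispense with the agents outside $C$: if $a'_i > a_i$ for some $i$, then in particular $a'_i > 0$, but a deviation requires $\ba'_{N \setminus C} = \bz$, so every coordinate where $\ba'$ strictly exceeds $\ba$ must belong to $C$. This lets me control exactly those coordinates using the deviation's utility guarantee.

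The heart of the argument is to show $u_i(\ba^*) > u_i(\ba)$ for every agent $i$, which I would do by splitting on how $a^*_i$ is attained. If $a^*_i = a_i$ (i.e. $a'_i \le a_i$), then $\ba^*$ and $\ba$ agree in coordinate $i$ while $\ba^* \gneq \ba$, so Positive Externalities gives $u_i(\ba^*) > u_i(\ba)$ directly. If instead $a^*_i = a'_i > a_i$, then $i \in C$ by the previous step, so the deviation gives $u_i(\ba') > u_i(\ba)$; since $\ba^* \ge \ba'$ with $a^*_i = a'_i$, Positive Externalities (or plain equality in the case $\ba^* = \ba'$) yields $u_i(\ba^*) \ge u_i(\ba') > u_i(\ba)$. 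In both cases the inequality is strict, so $\bu(\ba^*) \gneq \bu(\ba)$ and $\ba^*$ is a Pareto improvement on $\ba$ --- the desired contradiction.

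Having ruled out any coordinate of $\ba'$ exceeding $\ba$, we have $\ba' \le \ba$; and since the deviation requires $\bu_C(\ba') > \bu_C(\ba)$, we cannot have $\ba' = \ba$, so $\ba' \lneq \ba$. The one delicate point --- and the only place the coalition structure is really used --- is the second case above: the coordinates where $\ba'$ genuinely grows are precisely those where Positive Externalities cannot be applied against $\ba$ directly, and it is exactly there that membership in $C$ together with the deviation inequality must be invoked to route the comparison through $\ba'$ rather than $\ba$. I expect this routing step to be the main obstacle to stating the argument cleanly. Notably, the proof appears to need neither concavity nor continuity of $\bu$, only Positive Externalities.
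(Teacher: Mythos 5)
Your proof is correct and follows essentially the same route as the paper: the paper's point $\ba''$ (defined by $\ba''_{N\setminus I} := \ba_{N\setminus I}$, $\ba''_I := \ba'_I$ where $I$ is the set of agents with $a'_i > a_i$) is exactly your coordinatewise maximum $\ba^*$, and both arguments use positive externalities on the two groups of coordinates plus the observation that $I \subseteq C$ to derive a Pareto improvement. Your handling of the edge cases (the possibility $\ba^* = \ba'$, and ruling out $\ba' = \ba$ at the end) is slightly more careful than the paper's write-up, but the underlying idea is identical.
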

\begin{proof}
Let $I$ be the set of agents $i$ for which $a'_i > a_i$, and suppose towards a contradiction that $I$ is nonempty.

Consider the point $\ba''$ defined such that $\ba''_{N \setminus I} := \ba_{N \setminus I}$ and $\ba''_I := \ba'_I$.
We then have $\bu_{N \setminus I}(\ba'') > \bu_{N \setminus I}(\ba)$ by positive externalities, since these points differ only in that the (nonempty) coalition $I$ has increased their actions.
We also have $\bu_I(\ba'') \ge \bu_I(\ba') > \bu_I(\ba)$, where the first inequality follows from positive externalities (since these points differ only in that the coalition $N \setminus I$ has weakly increased their action), and the second follows from the fact that $\ba'$ is a deviation from $\ba$ for a coalition $C$ with $I \subseteq C$ (since $\ba_I > \ba_I$).

We thus have $\bu(\ba'') > \bu(\ba)$, which contradicts the fact that $\ba$ is Pareto efficient.
Thus $I$ is empty and the lemma follows.
\end{proof}

Second:

\begin{lemma} \label{lem: path down dev}
Suppose there is a path $P \subset \rrp^n$ with endpoints $\bx, \by$ such that for any $\bp \in P$ we have $\bu(\bp) \le \bu(\ba)$.
If $\ba'$ is a deviation from $\ba$ for some coalition $C$ satisfying $\ba' \lneq \bx$, then $\ba'$ also must satisfy $\ba' \lneq \by$.
\end{lemma}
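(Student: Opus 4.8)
The plan is to fix the coalition $C$ witnessing the deviation $\ba'$ and to transport the strict coordinatewise domination $a'_i < p_i$ (for $i \in C$) continuously along the path $P$, from the endpoint $\bx$, where we can already control it, to the endpoint $\by$. Concretely, I would introduce the set
$$S := \{ \bp \in P \ \mid \ a'_i < p_i \text{ for all } i \in C \}$$
and argue that $S$ is a nonempty subset of $P$ that is both open and closed in $P$. Since $P$ is connected (it is a continuous image of $[0,1]$), this forces $S = P$, and in particular $\by \in S$. Because $\ba'$ is a deviation for $C$ we have $\ba'_{N \setminus C} = \bz$, so $a'_i = 0 \le y_i$ for $i \notin C$, while $\by \in S$ gives $a'_i < y_i$ on the nonempty coalition $C$; combining these yields $\ba' \lneq \by$, which is exactly the conclusion.

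Openness of $S$ in $P$ is immediate, since $S$ is carved out by finitely many strict inequalities in the continuously varying coordinates $p_i$. For nonemptiness I would check $\bx \in S$: we are given $\ba' \lneq \bx$, so if some $i \in C$ had $a'_i = x_i$, then positive externalities applied to $\bx \gneq \ba'$ with $x_i = a'_i$ would give $u_i(\bx) > u_i(\ba')$; but $\bx \in P$ means $\bu(\bx) \le \bu(\ba)$, so $u_i(\ba') > u_i(\ba) \ge u_i(\bx)$, a contradiction. Hence $a'_i < x_i$ for every $i \in C$ and $\bx \in S$.

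The crux is closedness, and the same positive-externalities device does the work. Take $\bp$ in the closure of $S$ within $P$. Passing to the limit of strict inequalities gives $a'_i \le p_i$ for $i \in C$, and $a'_i = 0 \le p_i$ for $i \notin C$, so $\ba' \le \bp$. First, $\ba' \ne \bp$: otherwise $\bp \in P \subseteq \bD_{\ba}$ would force $\bu(\ba') \le \bu(\ba)$, contradicting $u_i(\ba') > u_i(\ba)$ for $i \in C$. Thus $\bp \gneq \ba'$. Now if some $j \in C$ had $a'_j = p_j$, positive externalities would give $u_j(\bp) > u_j(\ba')$, whereas $\bp \in \bD_{\ba}$ gives $u_j(\ba') > u_j(\ba) \ge u_j(\bp)$ — again a contradiction. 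So $a'_i < p_i$ for all $i \in C$, i.e. $\bp \in S$, and $S$ is closed.

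I expect the main difficulty to be bookkeeping rather than conceptual depth. One must keep straight that the deviation yields a strict utility gain only on coordinates of $C$, that membership in the path guarantees $\bu(\bp) \le \bu(\ba)$ at \emph{every} point, and that each invocation of positive externalities needs both a genuine increase ($\bp \gneq \ba'$) and an equality in the tested coordinate; the two contradictions in the nonemptiness and closedness steps are structurally identical, so care is mostly in applying them to the right agent. Connectedness of $P$ is the only topological ingredient, and it is precisely what the hypothesis provides.
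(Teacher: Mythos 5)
Your proof is correct and is essentially the paper's own argument in cleaner topological dress: the paper walks along $P$ from $\bx$ and derives a contradiction at the first point where some coordinate $p_i$ meets $a'_i \ne 0$, using exactly your chain $u_i(\bp) > u_i(\ba') > u_i(\ba) \ge u_i(\bp)$ from positive externalities, the deviation property, and the hypothesis $\bu(\bp) \le \bu(\ba)$. Your clopen-set formulation of that walk (which, incidentally, makes rigorous the paper's informal ``first such point'' step) is the same idea, so there is nothing substantive to add.
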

\begin{proof}
We walk along $P$ from $\bx$ towards $\by$ until we find the first point $\bp$ with $p_i = a'_i \ne 0$ for some $i$.
If we reach $\by$ before we find any such point $\bp$, it follows that $a'_i = 0$ or $a'_i < x_i$ for all $i$, and so $\ba' \lneq \bx$, as claimed.
Otherwise, we find such a point $\bp$, and we argue towards a contradiction.

We have $\bp \ne \ba'$, since $\bu_C(\ba') \gneq \bu_C(\ba)$ but $\bu_C(\bp) \le \bu_C(\ba)$.
By construction we then have $\bp \gneq \ba'$.
Since $p_i = a'_i$, by positive externalities we then have $u_i(p) > u_i(a')$.
Since $a'_i \ne 0$ we have $i \in C$, and since $\ba'$ is a deviation for $C$, this implies $u_i(a') > u_i(a)$.
We then have $u_i(p) > u_i(a)$, which contradicts the assumption that $\bu(\bp) \le \bu(\ba)$.
Therefore no such point $\bp$ may be found.
\end{proof}

We can now show:
\begin{proof} [Proof of Theorem \ref{thm: core classification}, Backwards Implication]
Assume that $\ba$ is robust to deviations by the grand coalition or any singleton coalition, and that $\bD_{\ba}$ is path connected.
Our goal is now to show that $\ba$ is in the core.

By Claim \ref{clm: vec adjust}, the property that the grand coalition has no deviation from $\ba$ implies that $\ba$ is Pareto efficient.
Thus, by Lemma \ref{lem: down dev} any deviation $\ba'$ from $\ba$ satisfies $\ba' \lneq \ba$.
Since no singleton coalition has a deviation from $\ba$ we have $\bz \in \bD_{\ba}$, and since $\bD_{\ba}$ is path connected there is a path contained in $\bD_{\ba}$ with endpoints $\ba, \ba$.
Thus, by Lemma \ref{lem: path down dev}, we further have that a deviation $\ba'$ must satisfy $\ba' \lneq \bz$.
Since no such point exists, it follows that no deviations from $\ba$ exist, and so $\ba$ is a core outcome.
\end{proof}

\subsection{Sketch of Forwards Implication of Theorem \ref{thm: core classification}}

We will denote by $\bd_{\bv} \bu(\ba)$ the one-sided directional derivative of $\bu$ at $\ba$ in the direction $\bv$.
In other words:
$$\bd_{\bv} \bu(\ba) := \lim_{\lambda \to 0^{+}} \frac{\bu(\ba + \lambda \bv) - \bu(\ba)}{\lambda}.$$
A nontrivial but standard fact from analysis is that, since $\bu$ is concave and well-defined everywhere, this limit is well-defined for all $\ba$, except when excluded by a boundary condition (e.g. if $v_i < 0$ but $a_i = 0$ for some agent $i$) -- see \cite{Lewin03}.


Our key lemma is:
\begin{lemma} \label{lem: deriv types}
At any outcome $\bz < \ba < \bo$, exactly one of the following three conditions holds:

\begin{enumerate}
\item There exist directions $\bv_{up} > \bz, \bv_{down} < \bz$ such that $\bd_{\bv_{up}} \, \bu(\ba) > \bz$ and $\bd_{\bv_{down}} \bu(\ba) < \bz$,

\item There exist directions $\bv_{up} > \bz, \bv_{down} < \bz$ such that $\bd_{\bv_{up}} \, \bu(\ba) \le \bz$ and $\bd_{\bv_{down}} \bu(\ba) \le \bz$, or

\item There exist directions $\bv_{up} > \bz, \bv_{down} < \bz$ such that $\bd_{\bv_{up}} \, \bu(\ba) < \bz$ and $\bd_{\bv_{down}} \bu(\ba) > \bz$.
\end{enumerate}
\end{lemma}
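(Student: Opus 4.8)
Throughout I would write $g_i(\bv) := \bd_{\bv} u_i(\ba)$ for the $i$th component of the directional derivative, so that each of the three conditions is purely a statement about the sign pattern of the vector $\bd_{\bv}\bu(\ba) = (g_1(\bv),\dots,g_n(\bv))$. Two elementary facts drive everything. First, since each $u_i$ is concave, $g_i$ is concave and positively homogeneous in $\bv$; in particular it is superadditive, $g_i(\bv+\bw)\ge g_i(\bv)+g_i(\bw)$, which at $\bw=-\bv$ gives $g_i(\bv)+g_i(-\bv)\le g_i(\bz)=0$. Second, positive externalities upgrade to the monotonicity $g_i(\bv)\ge g_i(\bv')$ whenever $\bv\gneq\bv'$ and $v_i=v'_i$ (apply the externality hypothesis to $\ba+\lambda\bv\gneq\ba+\lambda\bv'$ and let $\lambda\to0^{+}$). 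The two consequences I will use repeatedly are the \emph{zero-coordinate facts}: if $\bv\gneq\bz$ with $v_j=0$ then $g_j(\bv)\ge g_j(\bz)=0$, and if $\bv\lneq\bz$ with $v_j=0$ then $g_j(\bv)\le 0$.

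\textbf{Reduction to two dichotomies.} Next I would collapse the three conditions. Let $P^{+}$ abbreviate the existence of $\bv>\bz$ with $\bd_{\bv}\bu(\ba)>\bz$, and $Q^{+}$ the existence of $\bv>\bz$ with $\bd_{\bv}\bu(\ba)\le\bz$; define $P^{-},Q^{-}$ analogously over directions $\bv<\bz$. Using $g_i(\bv)+g_i(-\bv)\le 0$, any witness $\bv$ for $P^{+}$ makes $-\bv$ a strictly-negative down-direction, so Condition~1 is equivalent to $P^{+}$ alone; symmetrically Condition~3 is equivalent to $P^{-}$, while Condition~2 is exactly $Q^{+}\wedge Q^{-}$. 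Moreover $P^{+}\Rightarrow Q^{-}$ and $P^{-}\Rightarrow Q^{+}$ (same witnesses). Hence ``exactly one of the three conditions'' reduces to the two statements: exactly one of $P^{+},Q^{+}$ holds, and exactly one of $P^{-},Q^{-}$ holds (these, together with $P^{+}\Rightarrow Q^{-}$, also rule out the stray combination $P^{+}\wedge P^{-}$).

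\textbf{Exclusivity (the clean core).} The exclusivity halves are the part I am confident about, and they all follow from one scaling trick. Suppose $P^{+}$ and $Q^{+}$ both held, witnessed by $\bv>\bz$ (all $g_i>0$) and $\bv'>\bz$ (all $g_i\le 0$); they cannot be proportional, since homogeneity would force equal signs, so set $c:=\max_j v_j/v'_j$, giving $c\bv'\ge\bv$ with equality in some coordinate $j^{*}$. The difference $\bw:=c\bv'-\bv$ then satisfies $\bw\gneq\bz$ and $w_{j^{*}}=0$. Superadditivity gives $g_{j^{*}}(\bw)\le c\,g_{j^{*}}(\bv')-g_{j^{*}}(\bv)\le-g_{j^{*}}(\bv)<0$, contradicting $g_{j^{*}}(\bw)\ge 0$ from the zero-coordinate fact. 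The symmetric scaling (taking $c:=\min_j w_j/w'_j$, again engineered so the difference is nonnegative with a vanishing coordinate) rules out $P^{-}\wedge Q^{-}$; and $P^{+}\wedge P^{-}$ is impossible because $P^{+}$ forces $Q^{-}$, which by the previous line excludes $P^{-}$.

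\textbf{Exhaustiveness (the main obstacle).} What remains is that at least one condition always holds, i.e. $P^{+}\vee Q^{+}$ and its down-analogue. Let $m:=\max\{\min_i g_i(\bv):\bv\ge\bz,\ \sum_j v_j=1\}$, attained by compactness. If $m>0$ then all $g_i$ are positive at the maximizer, and by continuity they remain positive after a small interior perturbation, producing a strictly-positive witness for $P^{+}$. The difficulty is the case $m\le 0$: this only says that every direction leaves \emph{some} agent nonimproved, whereas $Q^{+}$ demands a single interior direction leaving \emph{every} agent nonimproved. The obstruction is precisely the boundary behavior forced by externalities — a silent coordinate $j$ always has $g_j\ge 0$, so an all-nonpositive witness can never sit on the boundary of the simplex and must be produced in the interior. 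I would establish its existence by a Perron--Frobenius / Brouwer-type fixed-point argument applied to the externality-monotone, concave, positively homogeneous map $\bv\mapsto(g_1(\bv),\dots,g_n(\bv))$: an $\eps$-regularized normalized map on the simplex has an interior fixed point at which all coordinates share a common ``eigenvalue,'' this common value is nonpositive when $m\le 0$, and a limiting argument as $\eps\to0$ extracts the desired $Q^{+}$ witness. Keeping that limit strictly interior, rather than letting it collapse onto the boundary where the silent-agent inequalities obstruct $Q^{+}$, is the main obstacle, and is exactly where the interplay of concavity with positive externalities must be used in full.
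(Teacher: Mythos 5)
Your reduction of the lemma to the two dichotomies (exactly one of $P^{+},Q^{+}$; exactly one of $P^{-},Q^{-}$) is correct, and your exclusivity argument is essentially identical to the paper's: the paper also takes an all-positive witness $\bv$ and an all-nonpositive witness $\bv'$, rescales so that the difference $\lambda\bv'-\bv \gneq \bz$ has a vanishing coordinate, and derives a contradiction from superadditivity of directional derivatives plus positive externalities at that coordinate. So the setup and the ``clean core'' of your proposal track the paper faithfully.

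The genuine gap is the exhaustiveness half, which is the technical heart of the lemma and which you leave as an unexecuted sketch. In the case $m \le 0$ you must exhibit a \emph{single} direction $\bv > \bz$ with $\bd_{\bv}\bu(\ba) \le \bz$, and your proposed route --- an $\eps$-regularized normalized map on the simplex, a Brouwer fixed point with a ``common eigenvalue,'' and a limit as $\eps \to 0$ --- is never constructed or verified; you yourself label its key step ``the main obstacle.'' This is exactly where the paper does its real work: color each simplex point $\bs$ by $\arg\min_i (\bd_{\bs}\,\bu(\ba))_i$; positive externalities force the face $F_i$ to contain no point of color $i$, which is the Sperner boundary condition; Sperner's Lemma then yields rainbow simplices at every scale $\eps$; Lipschitz continuity of $\bv \mapsto \bd_{\bv}\bu(\ba)$ (a finite concave function, Lipschitz on bounded sets) turns a rainbow simplex of diameter $\eps$ into a point where every coordinate of the derivative is at most $\eps K_S$; and Cantor's Intersection Theorem applied to the nested closed sets $Z_{\lambda} = \{\bs : \bd_{\bs}\bu(\ba) \le \lambda\bo\}$ produces a point of $Z_0$. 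Your instinct that a Brouwer-type tool is needed is right (Sperner is its combinatorial form), but the instantiation is missing, and a fixed-point map whose fixed points provably equalize all the $g_i$ would itself require a nontrivial construction. Finally, the interiority issue you single out as the central difficulty is in fact a non-issue once you strengthen your zero-coordinate fact from $g_j(\bv)\ge 0$ to $g_j(\bv) > 0$: for $\bv \gneq \bz$ with $v_j = 0$, concavity makes the derivative dominate the secant, $d_{\bv}\,u_j(\ba) \ge \bigl(u_j(\ba+\lambda\bv)-u_j(\ba)\bigr)/\lambda > 0$ for small $\lambda>0$, so \emph{any} point of $Z_0$ is automatically strictly positive --- this one line is how the paper concludes, whereas with only your weak $\ge 0$ version the limit really could collapse onto the boundary and your argument would not close.
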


The three categories of Lemma \ref{lem: deriv types} carry a useful geometric intuition.
Specifically:
\begin{lemma} \label{lem: pf category}
The points in the second category of Lemma \ref{lem: deriv types} are precisely the Pareto Frontier.
\end{lemma}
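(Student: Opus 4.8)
The plan is to route the equivalence through a single intermediate object: a \emph{strictly monotone improving direction}, by which I mean a vector $\bv$ with either $\bv > \bz$ or $\bv < \bz$ such that $\bd_{\bv} \bu(\ba) > \bz$. I would prove the two equivalences (i) $\ba$ admits a strictly monotone improving direction $\iff$ $\ba$ is not Pareto efficient, and (ii) $\ba$ admits a strictly monotone improving direction $\iff$ $\ba$ falls in category $1$ or category $3$ of Lemma~\ref{lem: deriv types}. Combining (i) and (ii) with the trichotomy of Lemma~\ref{lem: deriv types} (exactly one category holds), the interior outcomes that are \emph{not} in category $1$ or $3$ are exactly category $2$, and these are exactly the Pareto efficient ones. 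Throughout I would use two standard consequences of concavity at an interior point: for every direction $\bv$ one has $\bd_{\bv} \bu(\ba) + \bd_{-\bv} \bu(\ba) \le \bz$, and whenever $\ba + \bv$ is feasible one has the difference-quotient bound $\bd_{\bv} \bu(\ba) \ge \bu(\ba + \bv) - \bu(\ba)$.

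Equivalence (ii), and the easy half of (i), are quick. If $\ba$ is in category $1$ its witness $\bv_{up} > \bz$ already has $\bd_{\bv_{up}} \bu(\ba) > \bz$, hence is a strictly monotone improving direction (symmetrically $\bv_{down}$ in category $3$). Conversely, given a strictly monotone improving direction $\bv$, the companion $-\bv$ satisfies $\bd_{-\bv} \bu(\ba) \le -\bd_{\bv} \bu(\ba) < \bz$ by the first concavity fact; taking $(\bv, -\bv)$ as $(\bv_{up}, \bv_{down})$ when $\bv > \bz$ places $\ba$ in category $1$, and as $(\bv_{down}, \bv_{up})$ when $\bv < \bz$ places it in category $3$. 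For the easy half of (i): along a strictly monotone improving direction a sufficiently small step strictly increases every coordinate of $\bu$ (the step is feasible since $\ba$ is interior), which is a Pareto improvement, so $\ba$ is not Pareto efficient.

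The substantive step is the forward direction of (i): upgrading an arbitrary Pareto improvement to a strictly monotone one. Given $\ba'$ with $\bu(\ba') \gneq \bu(\ba)$, I would first make it weakly monotone using positive externalities. If $\ba'$ raises at least one coordinate, replace it by $\max(\ba', \ba)$ (coordinatewise); a case check on the sign of each coordinate's movement—each agent's own action is either left at its $\ba'$-value or reset to its $\ba$-value, while the other coordinates only weakly increase—shows via positive externalities that $\bu(\max(\ba', \ba)) \gneq \bu(\ba)$ with $\max(\ba', \ba) \gneq \ba$; otherwise $\ba' \le \ba$ already and $\ba'$ is itself a downward monotone improvement. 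Write the resulting weakly monotone improvement as $\ba^{\circ}$, say with $\ba^{\circ} \gneq \ba$ (the downward case is symmetric and leads to category $3$). To create room to strictify I would slide to $\ba^{\circ}_t := \ba + t(\ba^{\circ} - \ba)$ for some $t \in (0,1)$: concavity gives $\bu(\ba^{\circ}_t) \gneq \bu(\ba)$, and because $\ba$ is interior and $t < 1$ the point $\ba^{\circ}_t$ is interior as well. Now strictify exactly as in the proof of Claim~\ref{clm: vec adjust}: nudging up the action of a single already-improved agent strictly raises every other agent's utility by positive externalities while keeping that agent above threshold by continuity, and a final small increase of the remaining coordinates yields $\ba^{*} > \ba$ with $\bu(\ba^{*}) > \bu(\ba)$. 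Applying the difference-quotient bound to $\bv := \ba^{*} - \ba > \bz$ gives $\bd_{\bv} \bu(\ba) \ge \bu(\ba^{*}) - \bu(\ba) > \bz$, the desired strictly monotone improving direction.

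I expect the forward direction of (i) to be the main obstacle, and within it the monotonization-and-strictification is where all the care lives. The two devices that make it work—taking a coordinatewise $\max$ so that only ``other-agent'' coordinates move (letting positive externalities carry the utility inequality), and sliding into the interior before strictifying so that no coordinate is pinned at a boundary—each require bookkeeping: the $\max$ step's inequality must be checked agent by agent according to whether each coordinate rose, fell, or stayed fixed, and during strictification one must be careful to perturb the action of an already-strict agent rather than risk erasing a not-yet-strict agent's gain through an own-action change. I would also rely on the standard analytic facts, guaranteed by the discussion preceding Lemma~\ref{lem: deriv types}, that at an interior point the one-sided directional derivative of the concave map $\bu$ is finite in every direction and that the two concavity inequalities above hold.
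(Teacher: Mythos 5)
Your proof is correct and follows essentially the same route as the paper's: the paper likewise upgrades an arbitrary Pareto improvement to a strictly monotone, strictly improving one (its Claim~\ref{clm: hilo pf} is exactly your coordinatewise-max step, followed by the strictification of Claim~\ref{clm: vec adjust}), converts it via concavity into a monotone direction with strictly positive directional derivative, and then rules out coexistence of such a direction with the category-2 witnesses. The only cosmetic difference is the final step: the paper derives the contradiction by invoking its Claim~\ref{clm: up or down} directly, whereas you build explicit category-1/3 witnesses from the antisymmetry fact $\bd_{\bv}\bu(\ba) + \bd_{-\bv}\bu(\ba) \le \bz$ and appeal to the exclusivity in Lemma~\ref{lem: deriv types} --- the same underlying machinery, packaged one level removed.
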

The proofs of these two lemmas are quite technical, and can be found in Appendix \ref{app:class proof}.
With these in mind, we define
\begin{definition}
We will say that a point in the first category of Lemma \ref{lem: deriv types} is \emph{below the Pareto Frontier}, and a point in the third category of Lemma \ref{lem: deriv types} is \emph{above the Pareto Frontier} (and by Lemma \ref{lem: pf category}, the second category of points in Lemma \ref{lem: deriv types} are on the Pareto Frontier).
\end{definition}
The geometric intuition behind this definition is that, starting from a point $\bx$ in the first category, one can continuously follow the gradient $\bv_{up}$ to eventually obtain a Pareto efficient Pareto improvement $\bx' > \bx$ (we do not prove this fact formally; it is perhaps useful intuition but not essential to our main results).
Similarly, starting from $\bx$ in the third category, we can continuously follow the gradient $\bv_{down}$ to obtain a Pareto efficient Pareto improvement $\bx' < \bx$.

We then show:
\begin{proof} [Proof Sketch of Theorem \ref{thm: core classification}, Forwards Implication (Full proof in Appendix \ref{app:class proof})]
Suppose $\ba$ is a core outcome, and our goal is to show path connectedness of $\bD_{\ba}$.
First, we note that $\bz \in \bD_{\ba}$, since otherwise a singleton coalition can deviate from $\ba$ (and $\ba$ is in the core, so no such deviation is possible).
To show path connectedness of $\bD_{\ba}$, we consider an arbitrary point $\bx \in \bD_{\ba}$ and construct a path in $\bD_{\ba}$ from $\bx$ to $\bz$, thus implying that any two such points $\bx, \bx' \in \bD_{\ba}$ have a connecting path in $\bD_{\ba}$ via $\bz$.

We show the existence of the $\bx \leadsto \bz$ path with a careful repeated application of Lemma \ref{lem: deriv types}.
Informally speaking, we progressively slide $\bx > \bz$ a little bit closer to $\bz$ while maintaining the property $\bx \in \bD_{\ba}$.
If we ever hit $x_i = 0$ for some agent $i$, then we restrict our attention to the projected economy discluding agent $i$ and continue.
If we eventually exclude all agents in this manner, then we have $\bx = \bz$ and the process is complete.
Otherwise, suppose towards a contradiction that at some $\bx$, we cannot slide $\bx$ any closer to $\bz$ while maintaining $\bx \in \bD_{\ba}$.
We make two observations here: 
(1) $\bx$ must be above the Pareto frontier (else we could slide $\bx$ in the appropriate direction $\bv_{down}$) and so it belongs to the third; and (2) for all agents $i$ still being considered, we have $u_i(x) = u_i(a)$ (else, by the positive externalities assumption, we can unilaterally decrease the action of agent $i$ without destroying $\bx \in \bD_{\ba}$).
Hence, by moving $\bx$ slightly in the direction $\bv_{down}$ (which \emph{improves} the utility of all agents being considered), we have $u_i(x) > u_i(a)$ for all agents being considered, and so the new $\bx$ is a deviation from $\ba$.
Since we have assumed that $\ba$ is a core outcome, this is a contradiction, and so the process of sliding $\bx$ towards $\bz$ can never get stuck in this way.
\end{proof}

\subsection{Connection to Lindahl Equilibria}

Before proceeding towards our algorithm, we take a brief detour in this subsection to observe an interesting implication of Theorem \ref{thm: core classification} that helps illustrate its broader appeal.
Elliot and Golub \cite{EG13} show the following result:
\begin{theorem} [\cite{EG13}] \label{thm:eg}
The Lindahl equilibria of a public goods economy with a differentiable utility function are precisely the outcomes $\ba$ for which $\bd_{\ba} \bu(\ba) = \bz$. 
\end{theorem}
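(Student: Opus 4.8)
The plan is to read this not as a from-scratch derivation of the Lindahl first-order conditions, but as a restatement of Elliot and Golub's characterization in the directional-derivative language of this paper; the work is therefore translation rather than new analysis. Elliot and Golub \cite{EG13} phrase their result through a \emph{benefits matrix} $B(\ba)$ attached to each outcome, whose off-diagonal entries record the marginal rate at which one agent's action benefits another, normalized against the marginal effect of an agent's own action on their own utility; in the standard normalization $b_{ij} = -\frac{\partial u_i / \partial a_j}{\partial u_i / \partial a_i}$ for $i \neq j$ and $b_{ii} = 0$. Their theorem states that $\ba$ is a Lindahl equilibrium exactly when $\ba$ is an eigenvector of $B(\ba)$ with eigenvalue $1$, i.e. $B(\ba)\,\ba = \ba$. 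I would take this matrix equation as the starting point.

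The core of the argument is then a short algebraic manipulation. Reading off coordinate $i$ of $B(\ba)\,\ba = \ba$ gives $\sum_{j \neq i} b_{ij} a_j = a_i$; substituting the definition of $b_{ij}$ and multiplying through by $\partial u_i/\partial a_i$ turns this into $(\partial u_i/\partial a_i)\,a_i + \sum_{j \neq i} (\partial u_i/\partial a_j)\,a_j = 0$, i.e. $\sum_j (\partial u_i / \partial a_j)\,a_j = 0$. Stacking the $n$ coordinates, this is precisely $J(\ba)\,\ba = \bz$, where $J(\ba) = [\partial u_i/\partial a_j]_{i,j}$ is the Jacobian of $\bu$ at $\ba$. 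Finally, since $\bu$ is differentiable, the one-sided directional derivative defined in the excerpt coincides with the Jacobian action, $\bd_{\bv}\bu(\ba) = J(\ba)\,\bv$ for every $\bv$; the boundary caveat ($v_i < 0$ with $a_i = 0$) cannot arise for $\bv = \ba$ since then $v_i = a_i \ge 0$, so $\bd_{\ba}\bu(\ba) = J(\ba)\,\ba$, and $\ba$ is a Lindahl equilibrium iff $\bd_{\ba}\bu(\ba) = \bz$, as claimed.

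The step I expect to be the main obstacle is the multiplication by $\partial u_i/\partial a_i$, which is illegitimate at degenerate outcomes where this derivative vanishes and the normalization defining $b_{ij}$ breaks down. The cleanest way around this is to observe that the unnormalized identity $J(\ba)\,\ba = \bz$ is the genuinely primitive object: it is finite and well-defined everywhere, whereas the benefits-matrix form requires dividing by $\partial u_i/\partial a_i$. Thus the directional-derivative phrasing is exactly the formulation that removes the division, and verifying the equivalence reduces to checking that Elliot and Golub's hypotheses rule out (or render harmless) the locus where $\partial u_i/\partial a_i = 0$; this is the one point where I would return to the precise assumptions of \cite{EG13}. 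A fully self-contained alternative would bypass the benefits matrix entirely and derive $J(\ba)\,\ba = \bz$ directly from the Lindahl first-order conditions, balancing each agent's personalized price against their marginal rates of substitution, but given \cite{EG13} the translation above is the shorter route.
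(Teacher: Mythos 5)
Your proposal is correct and takes essentially the same route as the paper: the paper does not prove this theorem at all, but states it as a citation of \cite{EG13} with the remark that the benefits-network phrasing has been translated into directional-derivative language, and your algebra passing from the eigenvector condition $B(\ba)\,\ba = \ba$ to $J(\ba)\,\ba = \bz$ (hence $\bd_{\ba}\bu(\ba) = \bz$ under differentiability) is precisely that unstated translation. Your concern about dividing by $\partial u_i/\partial a_i$ is also resolved the way you suspect: in the model of \cite{EG13} each agent's own action is strictly costly, so this derivative is strictly negative and the normalization defining the benefits matrix never degenerates.
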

They phrase this theorem in different language related to the ``benefits network'' of the economy, but this formulation will suit our purposes better.
We refer the reader to their paper for a more in-depth discussion of the economic role of the Lindahl equilibria.

Combining Theorem \ref{thm:eg} with our machinery for Theorem \ref{thm: core classification}, we obtain:
\begin{theorem} \label{thm: local core}
In a public goods economy with a differentiable utility function, the Lindahl equilibria are precisely the core outcomes $\ba$ whose membership can be certified by examining only local information at $\ba$.
\end{theorem}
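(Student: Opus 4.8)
The plan is to read ``certifiable by local information at $\ba$'' through the lens of Theorem~\ref{thm: core classification}, which splits core membership into one genuinely local condition and two a priori global ones, and then to match the local condition against Elliot and Golub's derivative characterization (Theorem~\ref{thm:eg}). Concretely, core membership is equivalent to Pareto efficiency together with individual rationality and path-connectedness of $\bD_{\ba}$. By Lemma~\ref{lem: pf category}, Pareto efficiency is exactly membership in the second category of Lemma~\ref{lem: deriv types}, a condition depending only on the one-sided directional derivatives $\bd_{\bv}\bu(\ba)$, i.e. on the germ of $\bu$ at $\ba$; individual rationality and path-connectedness, by contrast, reference the behavior of $\bu$ at points (the agent-alone outcomes, and the global sweep toward $\bz$) that may lie far from $\ba$. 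The content of the theorem is therefore that, among core points, those two remaining global conditions become redundant --- derivable from local data --- exactly for the Lindahl equilibria.

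For the forward inclusion (Lindahl $\Rightarrow$ locally-certifiable core point) I would simply combine the two cited facts. By Theorem~\ref{thm:eg}, a Lindahl equilibrium is characterized by $\bd_{\ba}\bu(\ba)=\bz$, which is manifestly a statement about the first-order behavior of $\bu$ at $\ba$ alone. By Foley's theorem \cite{Foley70}, every Lindahl equilibrium lies in the core. Hence the purely local condition $\bd_{\ba}\bu(\ba)=\bz$ already certifies core membership, so a Lindahl equilibrium is a core point certifiable from local information. The only mildly technical point here is to check that $\bd_{\ba}\bu(\ba)=\bz$ does land $\ba$ in the second category of Lemma~\ref{lem: deriv types}: taking $\bv_{up}=\ba>\bz$ and $\bv_{down}=-\ba<\bz$ and using that one-sided derivatives in opposite directions are negatives of one another for differentiable $\bu$, both derivatives vanish and in particular are $\le \bz$, confirming Pareto efficiency via Lemma~\ref{lem: pf category}.

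For the converse (locally-certifiable core point $\Rightarrow$ Lindahl) I would argue the contrapositive. Suppose $\ba$ is in the core but $\bd_{\ba}\bu(\ba)\neq\bz$, so by Theorem~\ref{thm:eg} it is not a Lindahl equilibrium. The goal is to exhibit a second economy $\bu'$ satisfying all of the model's standing assumptions (continuity, positive externalities, concavity), agreeing with $\bu$ to all orders on a neighborhood of $\ba$ --- so that every local certificate evaluated at $\ba$ returns the same verdict for $\bu$ and $\bu'$ --- yet for which $\ba$ admits a deviation and is thus not in the core. Since no local test can distinguish $\bu$ from $\bu'$ near $\ba$, no local certificate can be sound, and $\ba$'s core membership is not locally certifiable. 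The construction should preserve the local first-order data (which pins down the supporting-price structure) while reshaping $\bu$ far from $\ba$ to open up a deviating coalition; the inequality $\bd_{\ba}\bu(\ba)\neq\bz$ is what guarantees that the local prices fail to support $\ba$ against all coalitions globally, leaving room for such a far deviation.

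I expect this converse to be the main obstacle: one must modify $\bu$ away from $\ba$ to manufacture a genuine deviation while simultaneously maintaining global concavity and positive externalities and not perturbing the germ at $\ba$. The cleaner route is likely to avoid an explicit surgery and instead phrase ``local certification'' directly in terms of supporting prices: by concavity, the first-order data at $\ba$ rules out a deviation for a coalition $C$ if and only if the gradient at $\ba$ furnishes a price vector under which $C$ cannot profitably secede, and Foley's pricing argument shows this holds for all $C$ simultaneously precisely when $\bd_{\ba}\bu(\ba)=\bz$. Framed this way, both directions collapse into Theorem~\ref{thm:eg} together with the concavity-to-global-certificate passage, matching the paper's assertion that the result is essentially immediate once Theorem~\ref{thm: core classification} and a suitably technical restatement of Elliot and Golub's theorem are in hand.
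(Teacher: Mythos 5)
Your forward inclusion is essentially sound, though it takes a different route from the paper: you outsource the claim that Lindahl equilibria lie in the core to Foley's theorem, whereas the paper proves it self-containedly from its own machinery --- the local condition $\bd_{-\ba}\,\bu(\ba) = \bz$ together with global concavity forces $\bu(\bp) \le \bu(\ba)$ on the whole segment from $\bz$ to $\ba$, and then Lemmas \ref{lem: down dev} and \ref{lem: path down dev} kill every candidate deviation. Either way the certificate is local, so this half is acceptable (your check via Lemma \ref{lem: pf category} that the point is Pareto efficient matches the paper's).

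The converse is where you have a genuine gap, and your proposed escape route does not close it. To show that a non-Lindahl core point $\ba$ is not locally certifiable, one must exhibit a second admissible economy that is indistinguishable from $\bu$ by local data at $\ba$ but in which $\ba$ is \emph{not} in the core; you state this correctly, flag the construction as the main obstacle, and then try to avoid it with a supporting-price argument showing that the gradient at $\ba$ supports $\ba$ against all coalitions iff $\bd_{\ba}\bu(\ba)=\bz$. That argument only shows that one \emph{particular} local certificate (the price-support one) fails when $\ba$ is not Lindahl; it says nothing about whether some other function of the local data could still soundly certify membership, which is exactly what must be ruled out. The indistinguishability construction is unavoidable --- and it is much easier than you fear, because ``local information at $\ba$'' is the value and derivative of $\bu$ at the single point $\ba$, not the germ of $\bu$ on a neighborhood; your requirement of agreement ``to all orders on a neighborhood'' is stronger than needed and is what makes the surgery look hard. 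The paper's counterexample is simply the utility function $\bu'$ that is affine-linear everywhere and matches the value and gradient of $\bu$ at $\ba$: it is trivially concave, and --- after dispensing with the case $\bd_{-\ba}\,\bu(\ba) \lneq \bz$, in which $\ba$ is not even Pareto efficient and hence not in the core --- there is an agent $i$ with $d_{-\ba}\,u_i(\ba) > 0$, so linearity gives $u'_i(\bz) = u_i(\ba) + d_{-\ba}\,u_i(\ba) > u'_i(\ba)$, meaning $\ba$ fails individual rationality in the $\bu'$-economy. Hence no sound local certificate for $\bu$ at $\ba$ can exist, which is the statement you needed.
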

The proof of this theorem will use Theorem \ref{thm:eg} as a black box, and so we will not actually need to appeal to the formal definition of the Lindahl equilibria in its proof.
\begin{proof}
If $\ba$ is a Lindahl equilibrium, then by Theorem \ref{thm:eg} we have $\bd_{\ba} \, \bu(\ba) = \bd_{-\ba} \, \bu(\ba) = \bz$ (the first equality comes from the assumption that $\bu$ is differentiable).
We claim that any $\ba$ satisfying $\bd_{\ba} \, \bu(\ba) = \bd_{-\ba} \, \bu(\ba) = \bz$ is in the core, and thus its core membership can be verified by examining only these local derivatives.
First, note that $\ba$ is Pareto efficient by Lemma \ref{lem: pf category}.
Therefore, by Lemma \ref{lem: down dev}, any possible deviation $\ba'$ from $\ba$ satisfies $\ba' \lneq \ba$.
Now let $P$ be the line segment from $\bz$ to $\ba$.
By the assumption of concavity and the fact that $\bd_{-\ba} \, \bu(\ba) = \bz$, we have $\bu(\bp) \le \bu(\ba)$ for all $\bp \in P$.
Thus, by Lemma \ref{lem: path down dev}, we have $\ba' \lneq \bz$ and so $\ba'$ cannot exist and $\ba$ is a core outcome.

Now suppose that $\ba$ is not a Lindahl equilibrium, and so $\bd_{-\ba} \, \bu(\ba) \ne \bz$.
If we have $\bd_{-\ba} \, \bu(\ba) \lneq \bz$, then we have $\bd_{\ba} \, \bu(\ba) \gneq \bz$ (by differentiability) which implies that $\ba$ is not Pareto efficient, and hence is not a core outcome.
On the other hand, suppose we have $\bd_{-\ba} \, \bu(\ba) \gneq \bz$.
In this case, it is impossible to distinguish $\bu$ from the utility function $\bu'$ that is affine-linear everywhere and agrees with $\bu$ at $\ba$ using solely local information.
Note that $\ba$ is \emph{not} in the core of the economy defined by $\bu'$, since we have $u_i(\bz) > u_i(a)$ for whichever agent $i$ satisfies $d_{-a} \, u_i(\ba) > 0$.
Thus, if $\ba$ is in fact in the core of the economy defined by $\bu$, we will need to inspect non-local information about the economy to differentiate $\bu$ from $\bu'$.
\end{proof}
We note that it is possible to prove Theorem \ref{thm: local core} as a corollary directly from Theorem \ref{thm: core classification}, but this proof using the underlying machinery is simpler.

\section{Algorithm for Testing Core Membership}

Our main algorithmic result is:
\begin{theorem} \label{thm:algfull}
Given an outcome $\ba$ in a public goods economy, there is an algorithm (in the real-RAM model) that decides whether or not $\ba$ is in the core by solving $O(n)$ convex optimization problems and using $O(n)$ additional computation time.
\end{theorem}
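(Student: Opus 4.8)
The plan is to reduce the problem to path-connectedness of $\bD_{\ba}$ using Theorem~\ref{thm: core classification}, and then to decide path-connectedness by algorithmically implementing the constructive ``slide every point to $\bz$'' argument that underlies the forward implication of that theorem. After verifying that $\ba$ is individually rational and Pareto efficient, I want to maintain a shrinking coalition $C$, initialized to the grand coalition $N$, representing the agents whose coordinates have not yet been driven to zero. In each round I either certify that the surviving coalition admits a deviation (so $\ba$ is not in the core), or I identify a single \emph{least-valuable player} whose coordinate can be slid to zero, delete it from $C$, and recurse. After at most $n$ rounds either $C = \emptyset$ (so the slide reached $\bz$ and, by Theorem~\ref{thm: core classification}, $\ba$ is in the core) or a deviating coalition has been exhibited explicitly.

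The per-round computation is a single convex program. In the projected economy for the current coalition $C$, I solve
\[
t^{*}_C \;=\; \max_{\ba' \in [0,1]^n,\ \ba'_{N \setminus C} = \bz}\ \min_{i \in C}\ \bigl(u_i(\ba') - u_i(\ba)\bigr),
\]
which is a concave maximization (a pointwise minimum of the concave functions $u_i$, maximized over a box), hence a convex program on the utility oracle. If $t^{*}_C > 0$ then every agent in $C$ strictly improves at the optimizer, so $C$ has a deviation and I halt with ``not in the core''; otherwise I use the optimizer to select an agent to delete, by a rule discussed below. Since there are at most $n$ rounds, each using one convex program plus $O(1)$ bookkeeping, the total cost is $O(n)$ convex programs and $O(n)$ additional real-RAM operations. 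I note that the round $C = N$ is exactly the Pareto-efficiency test (via Claim~\ref{clm: vec adjust}) and the singleton rounds are exactly the individual-rationality tests, so those two preliminary checks are subsumed by the same loop.

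For correctness I would first record a clean structural fact: the family of deviating coalitions is closed under union. Indeed, if $\ba'$ and $\ba''$ are deviations for $C'$ and $C''$, then their coordinatewise maximum is supported on $C' \cup C''$ and, by positive externalities, strictly improves every agent of $C' \cup C''$. Hence there is a unique maximal deviating coalition $C^{\max}$ (the union of all of them), $C^{\max}$ is itself deviating, and $\ba$ is in the core if and only if $C^{\max} = \emptyset$. The algorithm will be correct provided it maintains the invariant $C^{\max} \subseteq C$: under this invariant, whenever $t^{*}_C \le 0$ we have $C \ne C^{\max}$ (if $C = C^{\max}$ then $C$ is itself a deviating coalition, forcing $t^{*}_C > 0$), so $C^{\max} \subsetneq C$ and there is an agent in $C \setminus C^{\max}$ available to delete; deleting only such agents preserves the invariant and eventually exposes $C^{\max}$ itself, at which point the convex program reports $t^{*}_C > 0$.

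The main obstacle is precisely the deletion rule: I must extract from the optimizer an agent that is provably \emph{outside} $C^{\max}$, i.e.\ one that participates in no deviation whatsoever. The naive choice --- the binding agent achieving the minimum surplus $t^{*}_C$ --- does not suffice, since a convexity argument shows only that its surplus can be raised by mixing with a deviation, not that the minimum over all of $C$ improves, so a min-surplus agent may still lie in $C^{\max}$. Resolving this is exactly where the directional-derivative machinery of the forward implication is needed. Reinterpreting a round geometrically, getting ``stuck'' corresponds to reaching a point above the Pareto frontier (the third category of Lemma~\ref{lem: deriv types}, cf.\ Lemma~\ref{lem: pf category}) at which every surviving agent is tight, which by the slide-and-push argument certifies a deviation for the surviving coalition; conversely, being able to drive a coordinate continuously to zero along a path inside $\bD_{\ba}$ certifies that the corresponding agent can be removed. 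The crux of the proof is to show that the convex program faithfully discretizes this continuous slide --- that the agent it selects is one whose coordinate the slide would zero out, and is therefore outside $C^{\max}$ --- so that the discrete greedy never deletes a member of a surviving deviation.
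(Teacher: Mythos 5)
Your framework matches the paper's at the top level: a greedy loop that maintains a shrinking active coalition, solves per round the concave program maximizing $\min_{i}\,\bigl(u_i(\ba') - u_i(\ba)\bigr)$ over points supported on the surviving coalition (this is exactly the paper's Program~\ref{prg:PE}), halts with an explicit deviation when the optimum is positive, and otherwise deletes one agent. Your union-closure observation (deviating coalitions are closed under coordinatewise maximum, hence a unique maximal deviating coalition $C^{\max}$ exists and the invariant is $C^{\max} \subseteq C$) is correct and is a clean reformulation of the invariant the paper maintains. However, there is a genuine gap exactly where you write ``the crux of the proof is to show'': you never actually specify the deletion rule, and your final paragraph is a statement of the remaining proof obligation rather than a proof of it. Since the entire correctness of the greedy hinges on never deleting a member of $C^{\max}$, and since you correctly argue that the natural candidate (the binding, minimum-surplus agent of the optimizer) is \emph{not} provably safe, the proposal as written does not determine an algorithm, and the missing step is the hardest part of the paper's argument.

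The paper closes this gap with a \emph{second} convex program per round and a ratio rule. Given the optimizer $\bx^*$ of Program~\ref{prg:PE} with $\bu(\bx^*) \le \bu(\ba)$ (which by Lemma~\ref{lem:deriv-quality} is Pareto efficient in the projected economy), one solves Program~\ref{prg:downdir}: maximize $\min_i d_v\, u_i(x^*)$ over downward directions $\bv \le \bz$; Lemma~\ref{lem:down deriv neg} guarantees the optimum $\bv^*$ satisfies $\bd_{\bv^*}\,\bu(\bx^*) \le \bz$. The deleted agent is $i = \arg\min_i x^*_i / v^*_i$, i.e.\ the first coordinate to hit zero along the ray from $\bx^*$ in direction $\bv^*$. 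Correctness is then a path argument: by concavity, every point $\bp$ on the segment $P$ from $\bx^*$ to that first-zero point $\bp^*$ satisfies $\bu(\bp) \le \bu(\bx^*) \le \bu(\ba)$; by Lemma~\ref{lem:proj down dev} (which rests on Lemma~\ref{lem: down dev}) any deviation supported in the active coalition lies coordinatewise strictly below $\bx^*$; and Lemma~\ref{lem: path down dev}, applied along $P$, forces any such deviation to lie below $\bp^*$ as well, hence to assign action $0$ to agent $i$. This is precisely the ``discretized slide'' you hoped the single program would yield, but it cannot be extracted from Program~\ref{prg:PE} alone: one needs the auxiliary optimization producing a descent direction that is simultaneously nonpositive in every coordinate of the derivative, plus the path lemma. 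Two smaller notes: the paper uses two programs per round (plus two preprocessing programs, whose derivative-based form is what bounds the subsequent search boxes via Lemma~\ref{lem: down dev}), still $O(n)$ in total, so the theorem statement is unaffected; and your claim that the singleton and grand-coalition checks are ``subsumed by the same loop'' is fine only because the deletion rule---once supplied---guarantees those coalitions are never incorrectly eliminated before being tested.
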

%


The algorithm is fairly straightforward.
We maintain an ``active coalition'' $C_A$ throughout, as well as a proof that any agent $i \notin C_A$ must play action $a'_i = 0$ in any deviation $\ba'$ from $\ba$.
It is thus safe to assume that any deviating coalition $C$ satisfies $C \subseteq C_A$.
Initially $C_A \gets N$, so this invariant is trivially satisfied.
After each round, we either find a deviation for $C_A$ from $\ba$, or we remove one new agent from $C_A$.
Thus, if we make it $n$ rounds without finding a deviation, then we have $C_A = \emptyset$ and so no deviation from $\ba$ is possible.

\subsection{Preprocessing: Confirm Pareto Efficiency of $\ba$}

Before starting the main algorithm, we run the following two programs, with the purpose of testing whether or not $\ba$ is Pareto efficient.
\begin{program} \label{prg:uppe}
\begin{center}
Choose $\bv$ to maximize $\min \limits_{i} d_{v} \, u_i(a)$

Subj. to $\bv \ge \bz, \sum \limits_i v_i = 1$
\end{center}
\end{program}

\begin{program} \label{prg:downpe}
\begin{center}
Choose $\bv$ to maximize $\min \limits_{i} d_{v} \, u_i(a)$

Subj. to $\bv \le \bz, \sum \limits_i v_i = 1$
\end{center}
\end{program}
Note that the concavity of the optimized function $f(v) := \min_i d_v \, u_i(a)$ is immediate from the concavity of $\bu$.
By Lemma \ref{lem: deriv types}, we may immediately conclude that $\ba$ is not Pareto efficient (and thus not in the core) iff either of these programs optimizes at a point $\bv^*$ satisfying $f(v^*) > 0$.
Otherwise, we proceed with the knowledge that $\ba$ is Pareto efficient.
A key advantage of this is that, by Lemma \ref{lem: down dev}, we may now restrict our search for a deviation $\ba'$ to the bounded box $\bz \le \ba' \le \ba$.
This opens up the ability to use convex programming algorithms, which typically require bounded domains, in the remainder of the algorithm.\footnote{This detail is precisely why we use Programs \ref{prg:uppe} and \ref{prg:downpe} to check the Pareto efficiency of $\ba$, rather than the ostensibly simpler method of searching for $\bx^*$ that maximizes $\min_i u_i(x^*) - u_i(a)$: the latter method requires a search for $\bx^*$ over an unbounded search space, which rules out many popular methods of convex optimization that we wish to keep available.}

\subsection{Main Loop: Shrinking $C_A$}

Each of the $n$ rounds of the algorithm consists of three steps.
First, we restrict our attention to the projected economy for the coalition $C_A$.
Second, we run the following program:

\begin{program} \label{prg:PE}
\begin{center}
Choose $\bx$ to maximize $\min \limits_{i}  u_i(x) - u_i(a)$

Subj. to $\bz \le \bx \le \ba_{C_A}$
\end{center}
\end{program}

Let $\bx^*$ be a maximizing point of Program \ref{prg:PE}.
We have:
\begin{lemma} \label{lem:deriv-quality}
Either $\bu^{C_A}(\bx^*) > \bu_{C_A}(\ba)$ or $\bu^{C_A}(\bx^*) \le \bu_{C_A}(\ba)$, and $\bx^*$ is Pareto efficient.\footnote{Note that these statements hold specifically in the projected economy for $C_A$.}
\end{lemma}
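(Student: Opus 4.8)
The plan is to treat Program~\ref{prg:PE} as a maximization of the concave objective $f(\bx) := \min_i \left( u_i(\bx) - u_i(\ba) \right)$ (a minimum of concave functions is concave) over the compact convex box $\bz \le \bx \le \ba_{C_A}$, so that a maximizer $\bx^*$ exists; write $f^* := f(\bx^*)$ for the optimal value. The three assertions to establish are the two halves of the dichotomy — that $\bu^{C_A}(\bx^*) > \bu_{C_A}(\ba)$ or $\bu^{C_A}(\bx^*) \le \bu_{C_A}(\ba)$ — together with Pareto efficiency of $\bx^*$ in the projected economy. The case $f^* > 0$ is immediate: $\min_i(u_i(\bx^*) - u_i(\ba)) > 0$ means every coordinate is strictly positive, i.e. $\bu^{C_A}(\bx^*) > \bu_{C_A}(\ba)$. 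All of the work is in the case $f^* \le 0$, where I must rule out a ``mixed'' optimum in which some agent strictly exceeds its target utility $u_i(\ba)$ while the minimum is nonpositive.

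For Pareto efficiency I would invoke Lemma~\ref{lem: deriv types} and Lemma~\ref{lem: pf category} in the projected economy. If $\bx^*$ lies strictly inside the box and were \emph{below} the Pareto frontier, then by category~$1$ there is a direction $\bv_{up} > \bz$ with $\bd_{\bv_{up}} \bu(\bx^*) > \bz$; a small step along $\bv_{up}$ stays feasible and strictly raises every $u_i$, hence strictly raises $f$, contradicting optimality. Symmetrically, if $\bx^*$ were \emph{above} the frontier, category~$3$ supplies $\bv_{down} < \bz$ that strictly raises every utility, again feasible for an interior point and again contradicting optimality. Hence $\bx^*$ sits in category~$2$ and is Pareto efficient by Lemma~\ref{lem: pf category}. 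Boundary coordinates (agents with $x_i^* = 0$ or $x_i^* = a_i$) are handled by restricting attention to the active coordinates, a further projected economy where the same directional argument applies; I expect this bookkeeping to be the routine-but-tedious part.

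For the dichotomy in the case $f^* \le 0$, suppose toward a contradiction that $S := \{ i : u_i(\bx^*) > u_i(\ba) \}$ is nonempty (and proper, since $f^* \le 0$). The main lever is that positive externalities make the cross-derivatives strictly positive: increasing any single action $x_j$ strictly increases $u_i$ for every $i \ne j$. First I would show each $k \in S$ is pinned at its upper bound $x_k^* = a_k$, since otherwise the feasible direction $+e_k$ strictly raises $u_i$ for all $i \ne k$ while $k$, being strictly above, stays off the minimum, so the minimum rises. The same move pins every below-target agent that is not currently tight. Finally, using that Pareto efficiency (category~$2$) forces each own-action derivative $\bd_{e_i} u_i(\bx^*)$ to be negative, I would exhibit a feasible move on the remaining tight coordinates that strictly raises the minimum, contradicting optimality and forcing $S = \emptyset$, i.e. $\bu^{C_A}(\bx^*) \le \bu_{C_A}(\ba)$.

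The main obstacle is precisely this last step. Because we are in a public-goods economy, the actions are fully coupled — adjusting one agent's action moves everyone's utility — so one cannot independently tune the tight agents upward without risking some other tight agent. The delicate point is the ambiguous sign of each agent's own-action derivative together with the possibility of several simultaneously tight agents, which is exactly where a careful feasible-direction and boundary case analysis is needed; I expect this to be the heart of the proof, with everything else reducing to the concavity setup and the already-established Lemmas~\ref{lem: deriv types}, \ref{lem: pf category}, and \ref{lem: down dev}.
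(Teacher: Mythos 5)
Your decomposition matches the lemma, and parts of your plan are sound: the case $f^* > 0$ is indeed immediate, and your category-based argument for Pareto efficiency works at points strictly inside the box (the paper reaches the same conclusion more directly, by upgrading a weak Pareto improvement to a strict one as in Claim~\ref{clm: vec adjust} and observing that it would beat $\bx^*$ in Program~\ref{prg:PE}; your detour through Lemmas~\ref{lem: deriv types} and~\ref{lem: pf category} is a legitimate alternative there). But the two places where you defer the work are exactly where your plan breaks. For the dichotomy, your first step is correct: any $k \in S$ with $x^*_k < a_k$ contradicts optimality, since bumping $x_k$ up strictly raises every other agent's utility while $k$ stays above the nonpositive minimum; hence every $k \in S$ would have to be tight, $x^*_k = a_k$. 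The continuation, however, does not go through: below-target agents need \emph{not} be tight (raising such an agent's action can lower its own utility, which may be the minimum, so ``the same move'' proves nothing about them); Pareto efficiency does not force every own-action derivative to be negative (it forces only a weighted combination of gradients to vanish); and there is no feasible move on the tight coordinates that raises the minimum --- upward moves are infeasible, and any downward move on a coordinate in $S$ strictly lowers every \emph{other} agent's utility, in particular the minimizing agent's. What you are missing is a one-line observation that kills $S$ outright: an above-target agent can never be tight. If $x^*_k = a_k$, then the full-economy points $\bx^* \cdot \bz_{N \setminus C_A} \le \ba$ agree in coordinate $k$, so positive externalities forces $u^{C_A}_k(\bx^*) \le u_k(\ba)$, i.e. $k \notin S$. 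Combined with your own pinning argument this yields $S = \emptyset$ immediately; it is also exactly the fact that justifies the feasibility of the paper's bump of the argmax agent (a point the paper itself leaves implicit).

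The second gap is the claim that the boundary cases of the Pareto-efficiency argument are ``routine bookkeeping'' handled by restricting to active coordinates. Freezing the tight agents at $a_i$ and rerunning the directional argument in that sub-economy only excludes Pareto improvements that keep those coordinates at $a_i$; it says nothing about improvements that push a pinned coordinate \emph{above} its cap. Such improvements lie outside the feasible box of Program~\ref{prg:PE}, so optimality of $\bx^*$ cannot be invoked against them, and the improving direction $\bv_{up} > \bz$ supplied by Lemma~\ref{lem: deriv types} is infeasible in precisely this situation --- this is the crux of the claim, not bookkeeping. Moreover, no purely local analysis at $\bx^*$ can close it: whether a utility function admits improvements beyond the box is global information, and any complete argument has to bring in properties of $\ba$ itself (the preprocessing guarantee that $\ba$ is Pareto efficient, via tools like Lemma~\ref{lem: down dev} and Claim~\ref{clm: hilo pf}), which your sketch never uses. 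The paper's two-line proof also leaves the feasibility of its improved point unaddressed, but it does not assert that the boundary situation is routine; for either approach, this is the step a rigorous write-up must actually supply.
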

\begin{proof}
First we argue Pareto efficiency.
If $\bx'$ is a Pareto improvement on $\bx^*$, then by Claim \ref{clm: vec adjust} there is another point $\bx''$ with $\bu(\bx'') > \bu(x^*)$.
This $\bx''$ would be a superior maximizing point for Program \ref{prg:PE}, so there can be no Pareto improvement on $\bx^*$.

Next, let $i := \arg \max_i u_i(x^*) - u_i(a)$ and $j := \arg \min_j u_j(x^*) - u_j(a)$.
If $u_i(x^*) - u_i(a) > u_j(x^*) - u_j(a)$, then (by the same argument used in Claim \ref{clm: vec adjust}) we can again obtain a superior maximizing point $\bx^{**}$ by slightly increasing the action of agent $i$ from $\bx^*$.
Thus we have $u_i(x^*) - u_i(a) = u_j(x^*) - u_j(a)$, and it follows that either $\bu(\bx^*) > \bu(\ba)$ or $\bu(\bx^*) \le \bu(\ba)$.
\end{proof}

In the former case where $\bu(\bx^*) > \bu(\ba)$, it follows that $\bx^*$ is a deviation from $\ba$ for the coalition $C_A$, so we may halt the algorithm.
Otherwise, we have $\bu(\bx^*) \le \bu(\ba)$.
We then observe:

\begin{lemma} \label{lem:proj down dev}
If $\bu^{C_A}(\bx^*) \le \bu_{C_A}(\ba)$, then in the full (non-projected) economy, any deviation $\ba'$ from $\ba$ for a coalition $C \subseteq C_A$ satisfies $\ba'_{C_A} \lneq \bx^*$.
\end{lemma}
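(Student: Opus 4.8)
The plan is to pass to the projected economy for $C_A$ and observe that, inside that economy, $\ba'_{C_A}$ is itself a deviation from the Pareto efficient point $\bx^*$; the desired bound then falls out of Lemma \ref{lem: down dev} applied in the projected economy. The point of doing this is that the hypothesis $\bu^{C_A}(\bx^*) \le \bu_{C_A}(\ba)$ lets me replace the reference point $\ba$ by $\bx^*$ in the deviation inequality, and $\bx^*$ (unlike $\ba$) is known to be Pareto efficient in the projected economy.

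First I would record the two facts about $\bx^*$ that I reuse: by Lemma \ref{lem:deriv-quality}, $\bx^*$ is Pareto efficient in the projected economy for $C_A$, and the standing hypothesis $\bu^{C_A}(\bx^*) \le \bu_{C_A}(\ba)$ gives $\bu_C(\bx^*) \le \bu_C(\ba)$ componentwise for every $C \subseteq C_A$. I would also note briefly that the projected economy is a genuine public goods economy, so that the earlier lemmas may be read inside it: fixing the actions of $N \setminus C_A$ at $\bz$ preserves positive externalities (raising some agent's action in $C_A$ while holding agent $i$ fixed still strictly increases $u_i$) and preserves concavity (a concave function restricted to an affine subspace is concave).

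Next I would take an arbitrary deviation $\ba'$ from $\ba$ for a coalition $C \subseteq C_A$ and translate it into the projected economy. Because $C \subseteq C_A$, the condition $\ba'_{N \setminus C} = \bz$ forces $\ba'$ to vanish outside $C_A$, so $\ba'_{C_A}$ is a legitimate point of the projected economy with $\bu^{C_A}_C(\ba'_{C_A}) = \bu_C(\ba')$, and $\ba'_{C_A}$ vanishes on $C_A \setminus C$. The crux is the chain
$$\bu^{C_A}_C(\ba'_{C_A}) = \bu_C(\ba') > \bu_C(\ba) \ge \bu_C(\bx^*) = \bu^{C_A}_C(\bx^*),$$
whose strict step is the deviation hypothesis $\bu_C(\ba') > \bu_C(\ba)$ and whose weak step is $\bu^{C_A}(\bx^*) \le \bu_{C_A}(\ba)$ restricted to $C$. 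This exhibits $\ba'_{C_A}$ as a deviation from $\bx^*$ for the coalition $C$ in the projected economy.

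Finally, since $\bx^*$ is Pareto efficient there, Lemma \ref{lem: down dev} (with $\bx^*$ in the role of the reference outcome) immediately yields $\ba'_{C_A} \lneq \bx^*$, which is the claim. I expect the only real obstacle to be the bookkeeping in the first two steps: one must keep straight that the correct reference point for the deviation is $\bx^*$ rather than $\ba$, and that restricting to $C_A$ and re-embedding by appending zeros on $N \setminus C_A$ is exactly consistent with the definition of $\bu^{C_A}$. No new analytic content beyond the already-established Lemmas \ref{lem:deriv-quality} and \ref{lem: down dev} should be required.
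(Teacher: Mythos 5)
Your proposal is correct and follows essentially the same route as the paper: translate the deviation into the projected economy for $C_A$ via the chain $\bu_C(\ba') > \bu_C(\ba) \ge \bu_C(\bx^*)$, recognize $\ba'_{C_A}$ as a deviation from $\bx^*$ there, and conclude with Lemma \ref{lem: down dev} using the Pareto efficiency of $\bx^*$ from Lemma \ref{lem:deriv-quality}. Your write-up is in fact slightly more careful than the paper's, since you explicitly verify that the projected economy inherits positive externalities and concavity (so the earlier lemmas really do apply inside it) — a point the paper leaves implicit.
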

\begin{proof}
The deviation $\ba'$ satisfies $\ba'_{N \setminus C} = \bz$ and $\bu_{C}(\ba') > \bu_C(\ba) \ge \bu_C(\bx^*)$.
It follows that $\ba'_C$ is also a deviation for $C$ from $\bx^*$ in the projected economy for $C_A$.
The claim is then immediate from Lemma \ref{lem: down dev}.
\end{proof}
One step remains.
We run:
\begin{program} \label{prg:downdir}
\begin{center}
Choose $\bv$ to maximize $\min \limits_i d_v \, u_i(x^*)$

Subj. to $\bv \le \bz, \sum \limits_i v_i = 1$
\end{center}
\end{program}

Let $\bv^*$ be a point that maximizes Program \ref{prg:downdir}.
We have
\begin{lemma} \label{lem:down deriv neg}
$$\bd_{\bv^*} \, \bu(\bx^*) \le \bz$$
\end{lemma}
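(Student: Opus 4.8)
The plan is to reduce the claim to two facts about the maximizer $\bv^*$ of Program~\ref{prg:downdir}: (i) the optimal value $\min_i d_{\bv^*} u_i(\bx^*)$ is at most $0$, and (ii) at the optimum all of the directional derivatives $d_{\bv^*} u_i(\bx^*)$ coincide across agents. Granting these, the common value of the $d_{\bv^*} u_i(\bx^*)$ equals the optimal value, which is $\le 0$ by (i), so every coordinate of $\bd_{\bv^*}\bu(\bx^*)$ is $\le 0$ and the lemma follows.

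For (i) I would first invoke Lemma~\ref{lem:deriv-quality} to recall that $\bx^*$ is Pareto efficient, so by Lemma~\ref{lem: pf category} it lies in the second category of Lemma~\ref{lem: deriv types}. If the optimal value were strictly positive, there would be a feasible downward direction $\bv^*$ with $\bd_{\bv^*}\bu(\bx^*) > \bz$; then, directly from the definition of the one-sided directional derivative, $\bu(\bx^* + \lambda \bv^*) > \bu(\bx^*)$ for all sufficiently small $\lambda > 0$, exhibiting a Pareto improvement on $\bx^*$ and contradicting its Pareto efficiency (any coordinate with $v^*_i<0$ has $x^*_i>0$, else that derivative would be excluded by a boundary condition, so the perturbed point stays in the domain). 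Equivalently, a strictly positive optimum would place $\bx^*$ in the third rather than the second category. Membership in the second category moreover guarantees that at least one feasible $\bv_{down} < \bz$ already satisfies $\bd_{\bv_{down}}\bu(\bx^*) \le \bz$, which serves as a useful witness.

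For (ii) I would mimic the balancing argument from the proof of Lemma~\ref{lem:deriv-quality}. Suppose some agent $k$ has $d_{\bv^*} u_k(\bx^*)$ strictly above the minimum. The aim is to perturb $\bv^*$ into another feasible direction that strictly raises the minimum, contradicting optimality: one redistributes a small amount of the ``downward budget'' away from agents whose derivative is already above the minimum and toward slowing the descent of the below-average coordinates, so that by positive externalities every below-average agent's directional derivative strictly increases while the above-average agents retain enough slack to remain above the new minimum. Combining (i) and (ii) then yields $\bd_{\bv^*}\bu(\bx^*) \le \bz$.

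The main obstacle is executing the perturbation in (ii) cleanly. In Lemma~\ref{lem:deriv-quality} the feasible region is a separable box, so one can simply increase a single agent's action and invoke positive externalities; here the normalization constraint $\sum_i v_i = \text{const}$ couples all coordinates, so any increase in one direction must be compensated elsewhere, and the compensating decrease can itself depress other agents' derivatives (the sign of an agent's own-action derivative is not controlled by the model). I would handle this by confining the redistribution to coordinates strictly above the current minimum, reducing the existence of a strictly improving perturbation to a Gordan/Farkas-type separation condition, and then arguing that failure of that condition already forces the derivatives to be balanced; boundary directions where $x^*_i = 0$ forbids decreasing coordinate $i$ are absorbed by passing to the projected economy on the still-decreasable coordinates, exactly as the surrounding algorithm does. (One could sidestep (ii) by taking the algorithm's output to be the guaranteed second-category witness $\bv_{down}$ instead of an arbitrary maximizer, but justifying the lemma as literally stated for $\bv^*$ requires the balancing step.)
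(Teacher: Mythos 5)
Your part (i) is correct, and it matches what the paper intends: Pareto efficiency of $\bx^*$ (Lemma \ref{lem:deriv-quality}) rules out a strictly positive optimal value for Program \ref{prg:downdir}. The genuine gap is part (ii), and you have located it exactly --- but it is not an obstacle that a Gordan/Farkas argument can clear. In the paper's stated generality, where $\bu$ is concave but \emph{not} assumed differentiable, the balancing claim is false, and so is the lemma itself. Take $n = 2$ and (after a common positive affine rescaling into $[0,1]$, which changes nothing)
$$u_1(\bx) = \min\left(-2x_1 + 2x_2,\ 3x_1 + x_2\right), \qquad u_2(\bx) = x_1 - 1.8\,x_2,$$
both concave with strictly positive externalities, and let $\ba = \bx^*$ be an interior point on the kink line $x_2 = 5x_1$ of $u_1$. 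Then $\bx^*$ is Pareto efficient: $u_1 + u_2 = \min(-x_1 + 0.2x_2,\ 4x_1 - 0.8x_2)$, and since $0.8\,(-1,0.2) + 0.2\,(4,-0.8) = \bz$, this sum is maximized exactly on the kink line, so no Pareto improvement exists; consequently $\bx^* = \ba$ is also a legitimate output of Program \ref{prg:PE} (its optimal value is $0$, attained at $\ba$). Reading the normalization of Program \ref{prg:downdir} as $\sum_i v_i = -1$ (as feasibility requires; the issue is scale-invariant) and parametrizing the feasible segment by $\bv(t) = (-t, -(1-t))$, the directional derivatives at $\bx^*$ are
$$d_{\bv(t)}\, u_1(\bx^*) = \min\left(4t - 2,\ -2t - 1\right), \qquad d_{\bv(t)}\, u_2(\bx^*) = 1.8 - 2.8\,t.$$
The first is at most $-4/3$ on $[0,1]$ with equality only at $t = 1/6$, while the second is at least $-1$ everywhere, so the objective $\min_i d_{\bv(t)} u_i(\bx^*)$ equals $d_{\bv(t)} u_1(\bx^*)$ and has the \emph{unique} maximizer $t^* = 1/6$, i.e.\ $\bv^* = (-1/6, -5/6)$, where
$$\bd_{\bv^*}\,\bu(\bx^*) = \left(-4/3,\ +4/3\right) \not\le \bz.$$
The coordinates at the optimum neither coincide nor are all nonpositive. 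Your perturbation scheme cannot get started because the minimizing agent's own derivative function is kinked at $\bv^*$: its superdifferential in $t$ is $[-2,4] \ni 0$, so $\bv^*$ is optimal on agent $1$'s account alone with agent $2$ strictly slack, and no redistribution of the downward budget raises the minimum even though nothing is balanced.

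So the defect is in the target statement, not merely in your execution; note that the paper's own ``proof'' is only a pointer to the balancing argument of Lemma \ref{lem:deriv-quality}, and it founders on the same example. Both your outline and the paper's can be completed only under an added hypothesis such as differentiability of $\bu$ at $\bx^*$ (which this section of the paper never assumes; it is invoked only for Theorem \ref{thm: local core}). In that case the argument is clean: Pareto efficiency yields weights $\mu > \bz$ with $\sum_i \mu_i \nabla u_i(\bx^*) = \bz$, hence $\sum_i \mu_i\, d_{\bv} u_i(\bx^*) = 0$ for every direction $\bv$; combined with the category-2 witness of Lemmas \ref{lem: deriv types} and \ref{lem: pf category}, this forces the optimal value of Program \ref{prg:downdir} to be exactly $0$ and every maximizer to satisfy $\bd_{\bv^*}\bu(\bx^*) = \bz$. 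Without differentiability only the one-sided inequality $\sum_i \mu_i\, d_{\bv} u_i(\bx^*) \le 0$ survives, the optimal value can be strictly negative, and maximizers need not be balanced. Your parenthetical fallback --- returning the guaranteed category-2 witness $\bv_{down}$ rather than a maximizer of Program \ref{prg:downdir} --- is the right instinct, since such a witness does satisfy the lemma's conclusion; but computing one amounts to minimizing $\max_i d_{\bv} u_i(\bx^*)$, which is no longer a concave maximization for nondifferentiable $\bu$, so it cannot simply be slotted into the algorithm. The gap is not cosmetic: shifting the kink off the origin (take $u_1 = \min(-2(x_1 - 0.05) + 2x_2,\ 3(x_1 - 0.05) + x_2)$ and $\ba = (0.15, 0.5)$) makes the deletion rule remove agent $2$, after which the algorithm answers ``in the core'' even though coalition $\{2\}$ has a deviation at $\bz$.
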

The proof is very similar to the proof of Lemma \ref{lem:deriv-quality}, so we omit it for now.
We then finally have:
\begin{lemma}
Let $i := \arg \min_i x^*_i / v^*_i$.
Then any deviation $\ba'$ from $\ba$ has $a'_i = 0$.
\end{lemma}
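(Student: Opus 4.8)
The plan is to certify $a'_i = 0$ by exhibiting an explicit path inside the dominated set and feeding it to Lemma~\ref{lem: path down dev}. Everything takes place in the projected economy for $C_A$, where $\bx^*$ and $\bv^*$ live. Recall that by Lemma~\ref{lem:proj down dev} every deviation $\ba'$ from $\ba$ for a coalition $C \subseteq C_A$ (and, by the maintained invariant, every deviating coalition is of this form) already satisfies $\ba'_{C_A} \lneq \bx^*$. Hence it suffices to produce a point $\by$ with $y_i = 0$ together with a path from $\bx^*$ to $\by$ that stays in $\bD_{\ba}$: Lemma~\ref{lem: path down dev} will then promote $\ba'_{C_A} \lneq \bx^*$ to $\ba'_{C_A} \lneq \by$, which forces $a'_i \le y_i = 0$ and therefore $a'_i = 0$.

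For the path I would follow the downward ray $\lambda \mapsto \bx^* + \lambda \bv^*$, $\lambda \ge 0$. Lemma~\ref{lem:down deriv neg} supplies $\bd_{\bv^*} \bu(\bx^*) \le \bz$, so for each agent $k$ the one-variable function $\lambda \mapsto u_k(\bx^* + \lambda \bv^*)$ is concave with nonpositive right derivative at $\lambda = 0$. A concave function whose derivative starts nonpositive is nonincreasing, so this function never exceeds $u_k(\bx^*)$; combining over all $k$ and using the case hypothesis $\bu^{C_A}(\bx^*) \le \bu_{C_A}(\ba)$ gives $\bu^{C_A}(\bx^* + \lambda \bv^*) \le \bu_{C_A}(\ba)$ for every in-domain $\lambda$. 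Thus the entire in-domain portion of the ray lies in $\bD_{\ba}$.

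It remains to pick the endpoint. Since $\bv^* \le \bz$, each coordinate is nonincreasing along the ray, and coordinate $k$ with $v^*_k < 0$ reaches $0$ at parameter $x^*_k / (-v^*_k) > 0$. I would set $\lambda^*$ to the smallest of these values — the first time the ray touches the boundary of the nonnegative orthant $\rrp$ — let $i$ be a coordinate attaining this minimum (which is precisely the rule defining $i$ by the ratio test), and take $\by := \bx^* + \lambda^* \bv^*$. By construction $\by \ge \bz$, the segment from $\bx^*$ to $\by$ is a path contained in $\bD_{\ba}$ by the previous paragraph, and $y_i = 0$. Applying Lemma~\ref{lem: path down dev} in the projected economy with this segment as the path $P$ then yields $\ba'_{C_A} \lneq \by$ and hence $a'_i = 0$, completing the argument.

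The step I expect to demand the most care is the passage from the single inequality $\bd_{\bv^*} \bu(\bx^*) \le \bz$ at the basepoint to monotonicity of $\bu$ along the whole ray. This relies on the standard fact that concavity makes one-sided directional derivatives nonincreasing as one advances along a fixed direction, and it requires attention to the well-definedness of these one-sided derivatives near the boundary of the domain — the same boundary subtlety noted after the definition of $\bd_{\bv}\bu$ — especially at the endpoint $\by$, where coordinate $i$ becomes $0$. A more routine point to verify is that Lemmas~\ref{lem:proj down dev} and~\ref{lem: path down dev}, being statements about dominated sets, transfer verbatim to the projected economy for $C_A$ (which is itself a public goods economy), and that the invariant $C \subseteq C_A$ indeed lets us confine attention to the coordinates in $C_A$.
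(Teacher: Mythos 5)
Your proposal is correct and follows essentially the same route as the paper's proof: bound the deviation by $\ba'_{C_A} \lneq \bx^*$ via Lemma~\ref{lem:proj down dev}, take the segment from $\bx^*$ in direction $\bv^*$ until the first coordinate (namely $i = \arg\min_i x^*_i/v^*_i$) hits zero, argue via concavity and Lemma~\ref{lem:down deriv neg} that this segment stays in the dominated set, and invoke Lemma~\ref{lem: path down dev} to conclude $a'_i = 0$. The only difference is that you spell out the concavity-implies-monotonicity step that the paper compresses into a single phrase.
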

\begin{proof}
By Lemma \ref{lem:proj down dev}, we have $\ba'_{C_A} \lneq \bx^*$.
Let $P$ be the line segment starting at $\bx^*$, extending in the direction $\bv^*$ until a point $\bp^*$ is reached where $p^*_i = 0$ for some agent $i$; note that this will specifically be $i = \arg \min_i x^*_i / v^*_i$.
By concavity, all $\bp \in P$ satisfies $\bu(\bp) \le \bu(\bx^*) \le \bu(\ba)$.
Noting once again that $\ba'_{C_A}$ is a deviation for $C$ from $\bx^*$ in the projected economy for $C_A$, it follows from Lemma \ref{lem: path down dev} that $\ba'_{C_A} \lneq \bp^*$, and so $a'_i = 0$.
\end{proof}

With this in mind, the final step in the loop is to delete $i$ from $C_A$ and repeat.
After $n$ repetitions, we have $C_A = \emptyset$, so we may halt the algorithm and report that $\ba$ is in the core.

\subsection{Algorithm Pseudocode}

To recap the algorithm, which has been interspersed with proofs of correctness above, we give full pseudocode here.

\begin{algorithm}
Let $\bv^*_1 \gets \text{output of Program } \ref{prg:uppe}$\;
Let $\bv^*_2 \gets \text{output of Program } \ref{prg:downpe}$\;
\If{$\bd_{\bv^*_1} \, \bu(\ba) > \bz$ or $\bd_{\bv^*_1} \, \bu(\ba) > \bz$}{
\Return{``$\ba$ is not in the core''};
}

$C_A \gets N$\;
\While{$C_A \ne \emptyset$}{
$\bx^* \gets \text{output of Program } \ref{prg:PE}$\;
\If{$\bu^{C_A}(\bx^*) > \bu_{C_A}(\ba)$}{
\Return{``$\ba$ is not in the core''};
}
$\bv^* \gets \text{output of Program } \ref{prg:downdir}$\;
$C_A \gets C_A \setminus \left\{\arg\min_{i \in C_A} x^*_i / v^*_i \right\}$\;
}
\Return{``$\ba$ is in the core''};

\caption{Testing Core Membership of $\ba$}
\end{algorithm}

\subsection{Conclusion: Adapting the Algorithm for Approximate Optimization}

Our algorithm implies that core membership testing is efficient under any utility function that admits quick solving of convex programs as described above.
However, it may still be desirable to test core membership as best as possible when the underlying utility function is either unknown or badly behaved and so exact convex optimization is impossible.
Our algorithm can indeed be adapted to this effect, with a few significant points of caution, by substituting in modern approximate optimization algorithms.
Due to space constraints, we include a discussion of this adaptation in Appendix \ref{app:approx}.

\section{Acknowledgements}

I am very grateful to Ben Golub, who introduced me to this area of game theory and many of the problems addressed in this paper, and who helped advise an early version of this research project.
I am also grateful to Ben Hescott for mentorship and writing advice during the early stages of this project.
Finally, I thank an anonymous reviewer for useful criticism and corrections on an earlier draft of this paper.

\bibliography{workBIB}
	\bibliographystyle{plain}

\appendix
\section{Remaining Proof of Theorem \ref{thm: core classification} \label{app:class proof}}

\subsection{Proof of Lemmas \ref{lem: deriv types} and \ref{lem: pf category}}

We begin with some useful well-known technical claims.
Proofs can be found in most textbooks on analysis and/or combinatorics, see e.g. \cite{Lewin03, Cameron94}.
\begin{claim} \label{clm: dir derivs}
For any fixed $\ba > \bz$, the function $\beff(\bv) := \bd_{\bv} \bu(\ba)$ takes values in $\rr^n$ for all $\bv$ (i.e. the limit is well-defined and not infinite on all indices) and $\beff$ is concave (which follows from the assumption that $\bu$ is concave).
\end{claim}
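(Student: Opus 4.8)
The plan is to reduce this multivariate statement to a claim about one-dimensional concave functions and then invoke the standard theory of such functions. Fix $\ba > \bz$ and a direction $\bv$. For each coordinate $i$, define the scalar function $h_i(\lambda) := u_i(\ba + \lambda \bv)$ on the set of $\lambda$ for which $\ba + \lambda \bv \in [0,1]^n$; since $u_i$ is concave and $\lambda \mapsto \ba + \lambda \bv$ is affine, each $h_i$ is concave on an interval of $\lambda$ containing $0$. By definition $\beff(\bv)$ is, coordinatewise, the one-sided derivative $h_i'(0^+)$, so it suffices to establish both claimed facts for scalar concave functions.

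First I would show the limit exists and is finite. The key property of a concave $h_i$ is the \emph{monotonicity of secant slopes}: for admissible nonzero $\lambda_1 < \lambda_2$ the slope $s_i(\lambda) := (h_i(\lambda) - h_i(0))/\lambda$ is non-increasing in $\lambda$. Hence $s_i(\lambda)$ increases as $\lambda \downarrow 0^+$, and $h_i'(0^+) = \sup_{\lambda > 0} s_i(\lambda)$ exists in $\rr \cup \{+\infty\}$; it is bounded below by $s_i(\lambda_0)$ for any fixed $\lambda_0 > 0$, so it is not $-\infty$. To rule out $+\infty$ I use the interiority furnished by $\ba > \bz$: for small $\eps > 0$ the point $\ba - \eps\bv$ still lies in the domain, so $s_i(-\eps)$ is defined, and monotonicity gives $s_i(-\eps) \ge s_i(\lambda)$ for every $\lambda > 0$, whence $h_i'(0^+) \le s_i(-\eps) < \infty$. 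Thus $\beff(\bv) \in \rr^n$.

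For concavity of $\beff$ I would prove two properties that jointly imply it. Positive homogeneity, $\beff(t\bv) = t\,\beff(\bv)$ for $t > 0$, is immediate by substituting $\lambda \mapsto \lambda/t$ in the defining limit. For superadditivity, writing $\ba + \lambda(\bv_1 + \bv_2) = \tfrac12(\ba + 2\lambda\bv_1) + \tfrac12(\ba + 2\lambda\bv_2)$ and applying concavity of $\bu$ gives, after subtracting $\bu(\ba)$ and dividing by $\lambda$,
$$\frac{\bu(\ba + \lambda(\bv_1+\bv_2)) - \bu(\ba)}{\lambda} \ge \frac{\bu(\ba+2\lambda\bv_1) - \bu(\ba)}{2\lambda} + \frac{\bu(\ba+2\lambda\bv_2)-\bu(\ba)}{2\lambda};$$
letting $\lambda \to 0^+$ (the limits exist by the previous paragraph) yields $\beff(\bv_1+\bv_2) \ge \beff(\bv_1) + \beff(\bv_2)$. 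A positively homogeneous superadditive map satisfies $\beff(\lambda\bv_1 + (1-\lambda)\bv_2) \ge \beff(\lambda\bv_1) + \beff((1-\lambda)\bv_2) = \lambda\beff(\bv_1) + (1-\lambda)\beff(\bv_2)$, so $\beff$ is concave.

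I expect the finiteness argument to be the one place deserving real care, and it is where the hypothesis $\ba > \bz$ is genuinely used: even a \emph{bounded} concave function can have a directional derivative diverging to $+\infty$ at a boundary point (as with $\sqrt{\cdot}$), so one truly needs room to step in the $-\bv$ direction to bound the forward slope. The assumption $\ba > \bz$ secures this against the lower boundary, and in the applications (Lemma \ref{lem: deriv types}) one works at interior points $\bz < \ba < \bo$, so the upper boundary poses no difficulty either. The remaining ingredients, namely secant-slope monotonicity and the homogeneity/superadditivity decomposition, are entirely routine.
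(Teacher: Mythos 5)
Your proof is correct, and it supplies an argument the paper itself omits: the paper states Claim \ref{clm: dir derivs} as a ``well-known technical claim'' and defers entirely to textbook references (e.g.\ \cite{Lewin03}), so there is no in-paper proof to diverge from. Your route --- reduction to scalar concave functions, monotonicity of secant slopes for existence, interiority ($\ba > \bz$) to bound forward slopes by a backward slope for finiteness, and positive homogeneity plus superadditivity for concavity of $\bv \mapsto \bd_{\bv}\bu(\ba)$ --- is exactly the standard argument those references contain. You also correctly isolated the one genuinely delicate point: the finiteness claim fails at boundary points (the paper itself only gestures at this, noting the exclusion ``if $v_i < 0$ but $a_i = 0$''), and your caveat that the upper boundary needs $\ba < \bo$ (or an unbounded action space) is a real gap in the paper's own statement of the claim, since Lemma \ref{lem: deriv types} is stated with $\bz < \ba < \bo$ in the body but only $\ba > \bz$ in the appendix.
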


\begin{claim} \label{clm: lipschitz}
Any concave function $\beff : \rr^n \to \rr$ with domain restricted to a bounded subset of $\rr_n$ is Lipschitz continuous.
That is, for any bounded domain $D \subset \rr_n$, there is a constant $K_D$ such that for any $\bv_1, \bv_2 \in \rr^n$ we have $\| \beff(\bv_1) - \beff(\bv_2) \| \le K_D \cdot \|\bv_1 - \bv_2\|$.
\end{claim}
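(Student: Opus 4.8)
The plan is to establish the claim in two standard steps: first that $\beff$ is \emph{bounded}, both above and below, on a bounded ball comfortably containing $D$, and then to upgrade this local boundedness into a uniform Lipschitz estimate on $D$ via a short collinearity argument. Throughout I would work directly from the defining concavity inequality $\beff(\lambda \bv_1 + (1-\lambda) \bv_2) \ge \lambda \beff(\bv_1) + (1-\lambda) \beff(\bv_2)$; the convex case reduces to this one by passing to $-\beff$. Since $D$ is bounded, fix a center $\bv_0$ and radius $R$ with $D \subseteq B(\bv_0, R)$, and let $B := B(\bv_0, 3R)$ be the ball on which I will obtain two-sided bounds.

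For the lower bound, I would enclose $B$ in a simplex with vertices $\bs_0, \dots, \bs_n$. Every $\bv \in B$ is a convex combination of these vertices, so iterating the concavity inequality gives $\beff(\bv) \ge \min_i \beff(\bs_i) =: m$ for all $\bv \in B$; this value is finite precisely because $\beff$ is assumed real-valued on all of $\rr^n$. For the matching upper bound I would reflect through the center: for $\bv \in B$ the point $\bp := 2\bv_0 - \bv$ again lies in $B$, and since $\bv_0 = \tfrac12 \bv + \tfrac12 \bp$, concavity yields $\beff(\bv_0) \ge \tfrac12 \beff(\bv) + \tfrac12 \beff(\bp)$, hence $\beff(\bv) \le 2\beff(\bv_0) - \beff(\bp) \le 2\beff(\bv_0) - m =: M$. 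Writing $B_0 := \max(|m|, |M|)$, we therefore have $|\beff| \le B_0$ throughout $B$.

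For the Lipschitz step, take distinct $\bv_1, \bv_2 \in D$, set $t := \|\bv_1 - \bv_2\|$, and push past $\bv_2$ to $\bq := \bv_2 + \tfrac{R}{t}(\bv_2 - \bv_1)$, which satisfies $\|\bq - \bv_2\| = R$ and hence $\bq \in B(\bv_0, 2R) \subseteq B$. A direct computation shows that $\bv_2 = \tfrac{R}{t+R}\bv_1 + \tfrac{t}{t+R}\bq$ is a convex combination, so concavity gives $\beff(\bv_2) \ge \tfrac{R}{t+R}\beff(\bv_1) + \tfrac{t}{t+R}\beff(\bq)$. Rearranging yields $\beff(\bv_1) - \beff(\bv_2) \le \tfrac{t}{t+R}\big(\beff(\bv_1) - \beff(\bq)\big)$, and then bounding $\beff(\bv_1) - \beff(\bq) \le 2B_0$ (both points lie in $B$) and $\tfrac{t}{t+R} \le \tfrac{t}{R}$ gives $\beff(\bv_1) - \beff(\bv_2) \le \tfrac{2B_0}{R}\|\bv_1 - \bv_2\|$. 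Swapping the roles of $\bv_1$ and $\bv_2$ gives the reverse inequality, so $\|\beff(\bv_1) - \beff(\bv_2)\| \le K_D \|\bv_1 - \bv_2\|$ with $K_D := 2B_0 / R$, a constant depending only on $D$.

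The one step I would be most careful about is the radius bookkeeping: the enlarging factors (the ball $B$ of radius $3R$ and the push distance $R$) must be fixed at the outset so that for \emph{every} pair $\bv_1, \bv_2 \in D$ the auxiliary points $\bp$ and $\bq$ remain inside the region $B$ where the uniform bound $|\beff| \le B_0$ is available; otherwise $K_D$ would fail to be uniform over $D$. Everything else is a routine manipulation of the concavity inequality, and the hypothesis that $\beff$ takes finite real values everywhere on $\rr^n$ is exactly what removes the usual domain subtleties that would otherwise complicate the two-sided boundedness of a concave function.
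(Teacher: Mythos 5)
Your proof is correct, but there is nothing in the paper to compare it against: the paper never proves this claim, listing it among ``useful well-known technical claims'' and deferring to textbook references. What you have supplied is precisely the standard textbook argument, filled in completely. The two-sided bound on the enlarged ball $B(\bv_0, 3R)$ is sound: Jensen's inequality at the vertices of an enclosing simplex gives the lower bound $m$, reflection through the center ($\bv_0 = \tfrac12 \bv + \tfrac12 \bp$) gives the upper bound $2\beff(\bv_0) - m$, and both steps use exactly the hypothesis that $\beff$ is real-valued on all of $\rr^n$. The collinearity step also checks out: $\bv_2 = \tfrac{R}{t+R}\bv_1 + \tfrac{t}{t+R}\bq$ is indeed a convex combination, $\bq$ lies in $B(\bv_0, 2R) \subseteq B$, and the constant $K_D = 2B_0/R$ is uniform over pairs in $D$. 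One small point of care in the final estimate: since $\beff(\bv_1) - \beff(\bq)$ may be negative, you must first replace it by the nonnegative bound $2B_0$ and only then enlarge the factor $\tfrac{t}{t+R}$ to $\tfrac{t}{R}$; your write-up does perform these two bounds in the correct order, so the argument stands. A last contextual remark: in the paper this claim is applied to $\beff(\bv) = \bd_{\bv}\,\bu(\ba)$, which is vector-valued, but your scalar argument suffices since concavity and the Lipschitz estimate are inherited coordinatewise.
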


\begin{lemma} [Sperner's Lemma] \label{lem: sperner}
Let $S$ be an $(n-1)$-simplex.\footnote{An $(n-1)$-simplex is an $n-1$ dimensional analogue of a triangle defined by $n$ vertices; e.g. a $2$-simplex is a triangle, a $3$-simplex is a tetrahedron, etc.}
Let $\tee$ be a partition of $S$ into smaller $(n-1)$-simplices (a ``simplicization'' of $S$), and color each point in $S$ that is a vertex of some sub-simplex $T \in \tee$ by one of $n$ colors.
If there is no vertex on the face of $S$ opposite its $i^{th}$ vertex that is given color $i$, then there exists a ``rainbow'' simplex $T \in \tee$ for which the $n$ vertices of $T$ have all $n$ possible colors.
\end{lemma}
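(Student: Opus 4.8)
The plan is to prove Sperner's Lemma by the classical double-counting (parity) argument, inducting on the dimension $n-1$ of the simplex. The statement I would actually establish is the stronger one that the number of rainbow sub-simplices is \emph{odd}, which immediately forces it to be nonzero; this strengthening is what makes the induction close. Label the vertices of $S$ as $V_1, \dots, V_n$, so that the face opposite $V_i$ is exactly the facet whose vertices may never receive color $i$. Call a full-dimensional sub-simplex $T \in \tee$ a \emph{rainbow} cell if its vertices realize all $n$ colors, and call an $(n-2)$-dimensional face of some sub-simplex a \emph{door} if the $n-1$ vertices spanning it carry exactly the colors $\{1, \dots, n-1\}$, each once.

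The heart of the argument is a purely local counting fact: every full-dimensional sub-simplex $T$ contains $0$, $1$, or $2$ doors among its $n$ facets, and it contains an \emph{odd} number (namely exactly $1$) of them if and only if $T$ is rainbow. Indeed, if $T$ is rainbow then deleting its unique vertex of color $n$ produces the single door; if the color multiset of $T$ is exactly $\{1, \dots, n-1\}$ with one color repeated, then deleting either copy of the repeated color produces two doors; in every remaining case $T$ has none. I would then double-count incidences between cells and doors: each door interior to $S$ is shared by exactly two sub-simplices, whereas each door lying on $\partial S$ belongs to exactly one. Reducing modulo $2$, the number of rainbow cells of $S$ is congruent to the number of boundary doors.

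It remains to count boundary doors modulo $2$, and this is where the induction and the Sperner boundary condition combine. A door uses every color in $\{1, \dots, n-1\}$ and omits color $n$; were such a door to lie on a facet $F_k$ with $k \le n-1$, the Sperner condition would forbid color $k$ there, contradicting that the door realizes color $k$. Hence every boundary door lies on the single facet $F_n$ opposite $V_n$. That facet is itself an $(n-2)$-simplex with vertices $V_1, \dots, V_{n-1}$, and the triangulation and coloring restrict to a genuine Sperner instance on $F_n$ in the $n-1$ colors $\{1, \dots, n-1\}$. A boundary door is precisely a rainbow cell of this lower-dimensional instance, so by the inductive hypothesis their number is odd. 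Chaining the congruences shows the number of rainbow cells of $S$ is odd, which completes the induction once the base case $n=1$ (a single colored point, trivially one rainbow cell) is verified.

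The main obstacle, and the step I would argue most carefully, is the boundary analysis — that a door on $\partial S$ can sit on no facet other than $F_n$. This is exactly where the Sperner hypothesis is consumed, and it is easy to wave past; the clean argument is the color-forbidding contradiction above, together with the observation that an $(n-2)$-dimensional door cannot lie in the intersection of two distinct facets, since that intersection has dimension only $n-3$. The surrounding bookkeeping — that interior doors are shared by exactly two cells and boundary doors by exactly one, and that the restricted coloring on $F_n$ still satisfies the Sperner condition one dimension down — is routine, but I would state it explicitly so that the induction is airtight.
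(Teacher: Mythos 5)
Your proof is correct: it is the classical parity (``doors'') argument, establishing the stronger claim that the number of rainbow cells is odd, and the key steps --- the local count of $0$, $1$, or $2$ doors per cell with parity detecting rainbow cells, the mod-$2$ incidence count between cells and doors, and the confinement of boundary doors to the single facet $F_n$ via the Sperner condition --- are all handled soundly, including the dimension argument that a door cannot lie in two distinct facets. For context, the paper does not prove this lemma at all; it invokes it as a well-known black-box result with citations to standard textbooks, and your argument is precisely the canonical proof found in those references, so there is no genuinely different approach to compare against.
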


\begin{lemma} [Cantor's Intersection Theorem] \label{lem: cantor}
Let $\{X_i\}$ be an infinite sequence of nonempty sets that are closed and bounded, with $X_i \supseteq X_{i+1}$ for all $i$.
Then
$\bigcap \limits_i X_i \ne \emptyset$.
\end{lemma}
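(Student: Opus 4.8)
The plan is to prove this via the Bolzano--Weierstrass theorem, exploiting the nesting to confine a single witnessing sequence to one bounded set. First, since each $X_i$ is nonempty, I would select a point $\bx_i \in X_i$ for every $i$ (invoking countable choice), producing an infinite sequence $(\bx_i)_{i \ge 1}$ in $\rr^n$. The nesting hypothesis $X_i \supseteq X_{i+1}$ gives $X_i \subseteq X_1$ for all $i$, so the entire sequence lies in $X_1$, which is bounded by assumption. This is the step that converts boundedness of the individual sets into boundedness of a single sequence.

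Next, I would apply Bolzano--Weierstrass to extract a convergent subsequence $\bx_{i_k} \to \bx^*$ as $k \to \infty$, with $\bx^* \in \rr^n$; this is exactly where boundedness is used. It then remains to show $\bx^* \in X_j$ for every fixed index $j$. The key observation is that, by nesting, $\bx_{i_k} \in X_{i_k} \subseteq X_j$ whenever $i_k \ge j$, which holds for all sufficiently large $k$. Thus the tail of the subsequence $(\bx_{i_k})_k$ lies entirely in $X_j$, and since $\bx_{i_k} \to \bx^*$, the closedness of $X_j$ forces $\bx^* \in X_j$. As $j$ was arbitrary, I conclude $\bx^* \in \bigcap_i X_i$, so the intersection is nonempty.

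The one step requiring care---though it is not deep---is the final argument that $\bx^*$ belongs to \emph{every} $X_j$: one must not try to conclude membership from the full sequence $(\bx_i)$, whose early terms need not lie in $X_j$, and should instead restrict attention to the tail of the convergent subsequence, where nesting guarantees containment in $X_j$. The essential use of the nesting hypothesis occurs precisely here; closedness and boundedness alone, without nesting, would not suffice, since they yield only the finite intersection property rather than a common limit point. I expect no genuine obstacle beyond correctly marshalling Bolzano--Weierstrass together with the closedness of each $X_j$.
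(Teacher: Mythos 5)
Your proof is correct, and it is the standard Bolzano--Weierstrass argument: pick $\bx_i \in X_i$, note all terms lie in the bounded set $X_1$, extract a convergent subsequence, and use nesting plus closedness of each $X_j$ to place the limit in every set. The paper does not prove this lemma at all---it states it as a well-known fact and defers to textbook references---so there is nothing to diverge from; your argument is exactly the textbook proof being cited, valid in the paper's ambient space $\rr^n$ where closed and bounded implies sequential compactness, and you correctly flag the one delicate point (arguing membership in $X_j$ via the tail of the subsequence rather than the full sequence).
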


We now proceed with some original observations:

\begin{claim} \label{clm: up or down}
For any outcome $\ba > \bz$, if there exists a direction $\bv > \bz$ ($\bv < \bz$) for which $\bd_{\bv} \bu(\ba) \gneq \bz$, then there is no direction $\bv' > \bz$ ($\bv' < \bz$) for which $\bd_{\bv'} \bu(\ba) \le \bz$.
\end{claim}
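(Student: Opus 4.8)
The plan is to argue by contradiction, proving the displayed ``up'' statement (the parenthetical ``down'' statement being the delicate point, discussed at the end). Write $\beff(\bw) := \bd_{\bw}\bu(\ba)$ with components $f_i$. By Claim \ref{clm: dir derivs}, $\beff$ is concave, and directly from the definition it is positively homogeneous ($\beff(c\bw) = c\,\beff(\bw)$ for $c > 0$); together these give superadditivity, $\beff(\bx + \by) \ge \beff(\bx) + \beff(\by)$, since $\beff(\bx+\by) = 2\beff(\tfrac{\bx+\by}{2}) \ge \beff(\bx) + \beff(\by)$. I would also use positive externalities in marginal form: if $\bd \gneq \bz$ with $d_i = 0$, then $f_i(\bd) > 0$. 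This follows because $\ba + \lambda \bd \gneq \ba$ agrees with $\ba$ in coordinate $i$, so positive externalities gives a strictly positive secant slope for each small $\lambda > 0$; by concavity of $u_i$ these slopes are nonincreasing in $\lambda$, so the limit $f_i(\bd)$ is at least any such slope, hence strictly positive. (One must also check the one-sided derivatives are admissible, which holds in the relevant interior regime by Claim \ref{clm: dir derivs}.)

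So I would suppose, for contradiction, that both a $\bv > \bz$ with $\beff(\bv) \gneq \bz$ and a $\bv' > \bz$ with $\beff(\bv') \le \bz$ exist. The idea is to slide the ``bad'' witness $\bv'$ toward the origin along the ``good'' direction until a coordinate first vanishes. Concretely, set $\bv'(s) := \bv' - s\bv$ and $s^* := \min_j v'_j / v_j > 0$; at $s = s^*$ some coordinate $m$ satisfies $v'_m - s^* v_m = 0$ while all coordinates stay nonnegative, so $\bv'(s^*) \ge \bz$ with its $m$-th coordinate $0$. Moreover $\bv'(s^*) \ne \bz$: otherwise $\bv' = s^* \bv$, whence $\beff(\bv') = s^* \beff(\bv) \gneq \bz$, contradicting $\beff(\bv') \le \bz$. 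Thus $\bv'(s^*) \gneq \bz$ with $m$-th coordinate equal to $0$.

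Now the two tools collide. On one hand, positive externalities in marginal form gives $f_m(\bv'(s^*)) > 0$. On the other hand, applying superadditivity to the decomposition $\bv' = \bv'(s^*) + s^*\bv$ gives $f_m(\bv') \ge f_m(\bv'(s^*)) + s^* f_m(\bv)$, i.e. $f_m(\bv'(s^*)) \le f_m(\bv') - s^* f_m(\bv)$; since $f_m(\bv') \le 0$ and $f_m(\bv) \ge 0$, the right-hand side is $\le 0$, so $f_m(\bv'(s^*)) \le 0$. This contradiction establishes the ``up'' statement.

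The hard part will be the parenthetical ``down'' statement, for $\bv, \bv' < \bz$. It is tempting to call it symmetric, but the symmetry is not exact: the strict positive-externality inequality has a fixed sign (increasing others' actions strictly helps, decreasing strictly hurts), so sliding a nonpositive witness toward the origin lands on a face where both the positive-externality bound and the superadditive bound merely assert ``$\le 0$'', and no contradiction results. Forcing the orientations to clash in the negative orthant is exactly the subtle point, and I expect this is where the heavier machinery stated above becomes necessary: one can instead set up a Sperner coloring (Lemma \ref{lem: sperner}) on the simplex of directions — whose boundary hypotheses are supplied by the coexistence of a $\gneq \bz$ and a $\le \bz$ direction — and pass to the limit of a refining triangulation via Cantor's Intersection Theorem (Lemma \ref{lem: cantor}) to locate a boundary direction whose forced marginal sign contradicts positive externalities, using the continuity of $\beff$ from Claim \ref{clm: lipschitz}.
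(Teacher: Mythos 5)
Your proof of the displayed ``up'' case is correct and is essentially the paper's own argument: your decomposition $\bv' = \bv'(s^*) + s^*\bv$ is, up to rescaling, the paper's decomposition $\lambda\bv' = \bv + \bx$ where $\bx \gneq \bz$ has a vanishing coordinate, and both proofs then play superadditivity of $\beff$ against the strict marginal form of positive externalities at that coordinate. Your diagnosis of the parenthetical case is also half right: the \emph{naive} mirror of the slide, which lands on the boundary of the negative orthant, really does fail, because for $\bd \lneq \bz$ with $d_m = 0$ positive externalities only yields $f_m(\bd) \le 0$, which cannot clash with the superadditivity bound.

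The genuine gap is what you do next: the ``down'' case is simply not proven, and your prescription (a Sperner coloring plus Cantor's Intersection Theorem) is the wrong cure --- that machinery is what the paper uses to prove \emph{existence} of directions in Lemma~\ref{lem: deriv types}, not exclusivity, and your sketch of it is too vague to check. The correct fix is elementary and uses exactly your two tools; the only change is that the connector between the two witnesses must be taken in the \emph{positive} orthant. Concretely, suppose $\bv, \bv' < \bz$ with $\beff(\bv) \gneq \bz$ and $\beff(\bv') \le \bz$. If $\bv'$ is a positive multiple of $\bv$, homogeneity gives an immediate contradiction as in your write-up. Otherwise set $\lambda := \min_j v_j/v'_j > 0$ and $\bx := \lambda\bv' - \bv$. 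For every $j$ we have $\lambda \le v_j/v'_j$, and multiplying by $v'_j < 0$ flips the inequality to $\lambda v'_j \ge v_j$; hence $\bx \ge \bz$, with $x_m = 0$ at the minimizing index $m$, and $\bx \ne \bz$ by non-parallelism, so $\bx \gneq \bz$. Now your marginal positive-externality lemma applies with its \emph{strict} sign, since $\bx$ lies in the positive orthant: $f_m(\bx) > 0$. On the other hand, superadditivity applied to $\lambda\bv' = \bv + \bx$ gives $f_m(\bx) \le \lambda f_m(\bv') - f_m(\bv) \le 0$, since $f_m(\bv') \le 0$ and $f_m(\bv) \ge 0$. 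Contradiction. This is precisely why the paper can assert the two cases follow from ``an identical argument'': its vector $\bx$ solving $\bv + \bx = \lambda\bv'$ with a vanishing coordinate lies in the positive orthant in \emph{both} cases, so the strict inequality is always available; it is only your choice to slide within the negative orthant that breaks the symmetry.
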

\begin{proof}
We will prove the claim for $\bv, \bv' > \bz$; the case where $\bv, \bv' < \bz$ follows from an identical argument.

Let $\bv > \bz$ such that $\bd_{\bv} \bu(\ba) \gneq \bz$, and let $\bv' > \bz$ be an arbitrary direction that is not a scalar multiple of $\bv$.
We can then find a vector $\bx > \bz$ for which $\bv + \bx = \lambda \bv'$ for some scalar $\lambda > 0$, such that $x_i = 0$ for some $i$.
We have that $(d_v u(a))_i > 0$ (by hypothesis) and $(d_x u(a))_i > 0$ (by the positive externalities assumption).
Thus, by concavity of $\bu$, we have $(d_{v + x} u(a))_i > 0$, and so $(d_{\lambda v'} u(a))_i > 0$.
Therefore we cannot have $\bd_{\bv'} \bu(\ba) \le \bz$.
\end{proof}

Now we are ready to show:
\begin{lemma} [Lemma \ref{lem: deriv types} in the body of the paper]
At any outcome $\ba > \bz$, exactly one of the following three conditions holds:

\begin{enumerate}
\item There exist directions $\bv_{up} > \bz, \bv_{down} < \bz$ such that $\bd_{\bv_{up}} \, \bu(\ba) > \bz$ and $\bd_{\bv_{down}} \bu(\ba) < \bz$,

\item There exist directions $\bv_{up} > \bz, \bv_{down} < \bz$ such that $\bd_{\bv_{up}} \, \bu(\ba) \le \bz$ and $\bd_{\bv_{down}} \bu(\ba) \le \bz$, or

\item There exist directions $\bv_{up} > \bz, \bv_{down} < \bz$ such that $\bd_{\bv_{up}} \, \bu(\ba) < \bz$ and $\bd_{\bv_{down}} \bu(\ba) > \bz$.
\end{enumerate}
\end{lemma}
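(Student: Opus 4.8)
The plan is to make the concave function $\beff(\bv) := \bd_{\bv}\bu(\ba)$ (Claim~\ref{clm: dir derivs}) the central object and to read off the trichotomy from its sign pattern on the ``up'' simplex $\Delta_{+} = \{\bv \ge \bz : \sum_i v_i = 1\}$ and the ``down'' simplex $\Delta_{-} = -\Delta_{+}$. Before the main argument I would record two elementary facts. First, interiority is exactly what makes the boundary signs \emph{strict}: if $\bv \in \Delta_{+}$ has $v_i = 0$, then $t \mapsto u_i(\ba + t\bv)$ is concave and, by positive externalities, strictly increasing near $0$, and a concave strictly increasing function has strictly positive right derivative at an interior point, so $\beff(\bv)_i > 0$; symmetrically $\beff(\bv)_i < 0$ for $\bv \in \Delta_{-}$ with $v_i = 0$. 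Second, applying concavity of $\bu$ along a single line gives the opposite-direction inequality $\beff(\bv) + \beff(-\bv) \le \bz$, so a nonnegative derivative in one direction forces a nonpositive derivative in its reverse.

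The engine of the proof is a Sperner/Cantor existence statement: \emph{if no $\bv \in \Delta_{+}$ has $\beff(\bv) > \bz$, then some $\bv \in \Delta_{+}$ has $\beff(\bv) \le \bz$} (and, mirrored on $\Delta_{-}$, if no $\bv \in \Delta_{-}$ has $\beff(\bv) < \bz$ then some has $\beff(\bv) \ge \bz$). To prove the first I would color each vertex $\bv$ of a fine simplicization of $\Delta_{+}$ by an index $i$ with $\beff(\bv)_i \le 0$, which exists by hypothesis; the strict boundary sign $\beff(\bv)_i > 0$ on the face $\{v_i = 0\}$ forbids color $i$ there, so Sperner's Lemma (Lemma~\ref{lem: sperner}) produces a rainbow sub-simplex. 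Refining the simplicization and invoking Cantor's Intersection Theorem (Lemma~\ref{lem: cantor}) on the resulting nested closed sets — with the Lipschitz continuity of $\beff$ (Claim~\ref{clm: lipschitz}) used to pass the inequalities $\beff(\cdot)_i \le 0$ to the limiting point — yields a single $\bv^{*}$ with $\beff(\bv^{*}) \le \bz$.

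With these in hand the trichotomy assembles as follows. Mutual exclusivity is immediate from Claim~\ref{clm: up or down}: a strictly positive derivative direction on either side precludes a nonpositive one on that same side, and this separates the three conditions pairwise. For exhaustiveness I would split on whether a strictly improving up direction exists. If some $\bv$ has $\beff(\bv) > \bz$, then the opposite-direction inequality together with Claim~\ref{clm: up or down} rules out any $\bv' \in \Delta_{-}$ with $\beff(\bv') \ge \bz$, so the mirrored Sperner statement forces a $\bv' \in \Delta_{-}$ with $\beff(\bv') < \bz$, i.e.\ the first condition. The symmetric argument starting from a strictly improving down direction gives the third condition. When neither exists, the up Sperner statement supplies a $\bv \in \Delta_{+}$ with $\beff(\bv) \le \bz$; the remaining task is to produce a down direction with $\beff \le \bz$, which is where the delicate case analysis lives.

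The main obstacle is precisely this last production, together with the degenerate directions at which $\beff$ vanishes on some coordinates. The clean Sperner outputs only give $\beff \le \bz$ on $\Delta_{+}$ and $\beff \ge \bz$ on $\Delta_{-}$, so a down direction with $\beff \le \bz$ must be extracted either from a genuinely strictly-negative direction or from a direction with $\beff = \bz$. The dangerous configuration is a down direction with $\beff \gneq \bz$ that has some zero coordinates but admits no strict down-improvement: by the down version of Claim~\ref{clm: up or down} this would forbid \emph{every} down direction with $\beff \le \bz$, breaking the second condition, while the absence of a strict up-improvement also blocks the first and third. I expect this configuration can be ruled out using the strict boundary signs from interiority: writing the weak-improving direction as a convex combination of the vertices $-\be_i$ and using concavity of $\beff$ shows that the own-action coordinate $\beff(-\be_i)_i$ is strictly positive for each vanishing index, after which a careful convex mixing lifts all vanishing coordinates to obtain a genuine strict improvement, contradicting the assumption and collapsing the degenerate case. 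Making this mixing argument precise, in tandem with the limiting step in the Sperner/Cantor construction, is the part I expect to require the most care; the hypothesis $\ba > \bz$ enters decisively through the strict boundary signs that both drive Sperner and power the weak-to-strict upgrade.
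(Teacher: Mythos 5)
Your proposal runs on the same engine as the paper's proof: the Sperner--Cantor--Lipschitz construction on the up simplex (identical coloring, identical limiting argument), mutual exclusivity via Claim \ref{clm: up or down}, and a case assembly using the concavity inequality $\beff(\bv) + \beff(-\bv) \le \bz$. On one point you are actually \emph{more} careful than the paper, which dispatches the down simplex with the phrase ``exactly symmetric argument'': you correctly notice that the honest mirror of the Sperner construction on $\Delta_-$ (coloring by indices with $\beff(\bv)_i \ge 0$, forbidden on the faces by the strict boundary signs) outputs a point with $\beff(\bv) \ge \bz$ rather than $\le \bz$, so that the entire difficulty concentrates in the degenerate output $\beff(\bv) \gneq \bz$ with some vanishing coordinates. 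Isolating that configuration as the crux is exactly right, and your observation that it would simultaneously falsify all three categories (hence must be impossible) is also correct.

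However, your proposed resolution of that degenerate case contains a genuine gap. You claim that writing the weak improver $\bv \in \Delta_-$ as a convex combination of the vertices $-\be_i$ and using concavity of $\beff$ yields $\beff(-\be_i)_i > 0$ for each vanishing index $i$. This does not follow, and it is false in general: positive externalities constrains only cross-effects ($j \ne i$), never the own-action derivative, so $\beff(-\be_i)_i$ can have either sign --- for $u_i(\ba) = \sum_{j \ne i} a_j + a_i - a_i^2/2$ one gets $\beff(-\be_i)_i = a_i - 1 < 0$ at any interior $\ba$. Moreover, concavity gives $\beff(\bv) \ge \sum_i \lambda_i \beff(-\be_i)$, an inequality pointing the wrong way for extracting positivity of any individual term. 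The repair is to show that $\beff$ itself inherits \emph{strict} positive externalities: $\beff$ is concave (Claim \ref{clm: dir derivs}) and positively homogeneous, hence superadditive, $\beff(\bv + \bx) \ge \beff(\bv) + \beff(\bx)$; combining this with your strict boundary signs ($\beff(\bx)_i > 0$ whenever $\bx \gneq \bz$ and $x_i = 0$) gives $\beff(\bv + \bx)_i > \beff(\bv)_i$ for such $\bx$. Now given $\bv$ with $\beff(\bv) \gneq \bz$ (the boundary signs force $\bv < \bz$), pick $j$ with $\beff(\bv)_j > 0$ and set $\bv'' := \bv + \eps \be_j$: for small $\eps$ we still have $\bv'' < \bz$, coordinate $j$ stays strictly positive by continuity of $\beff$ (Claims \ref{clm: dir derivs} and \ref{clm: lipschitz}), and every other coordinate becomes strictly positive by the inherited externalities, so $\beff(\bv'') > \bz$ --- a strict down-improvement contradicting your case hypothesis and collapsing the degenerate case as you intended. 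This is precisely the mechanism of Claim \ref{clm: vec adjust} (and of the paper's proof of Claim \ref{clm: up or down}) transplanted from $\bu$ to $\beff$; with that substitution your argument is complete, and on the down-simplex issue it is more rigorous than the paper's own write-up.
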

\begin{proof}
First, we observe that Claim \ref{clm: up or down} implies that no $\ba$ can satisfy two of these conditions simultaneously: if $\ba$ satisfies the first condition then there is no $\bv_{up}$ satisfying the latter two conditions; if $\ba$ satisfies the last condition then there is no $\bv_{down}$ satisfying either of the former two conditions.
Thus, it suffices to show that each point $\ba$ satisfies \emph{at least} one of these three conditions, and we immediately have that it satisfies \emph{exactly} one of these three conditions.

Second, by Claim \ref{clm: vec adjust}, it suffices to show the existence of $\bv_{up} \gneq \bz, \bv_{down} \lneq \bz$ satisfying these conditions.

Now we begin the main proof.
Let $S$ be the $(n-1)$-simplex in $\rr^n$ with vertex set equal to the $n$ basis vectors $\{\be_i\}$ (where $\be_i \in \rr^n$ is the vector with a $1$ in the $i^{th}$ place and a $0$ elsewhere).
If $\min_i (d_s \, u(a))_i > 0$ for any $\bs \in S$, then we have $\bd_{\bs} \, \bu(\ba) > \bz$.
By convex preferences we also have that $\bv_{down} := -\bv_{up}$ satisfies $\bd_{\bv_{down}} \bu(\ba) < \bz$, and so we may place $\ba$ into the first category of Lemma \ref{lem: deriv types}.

Otherwise, assume that $\min_i (d_s \, u (a))_i \le 0$ for all $\bs \in S$.
In this case, our goal is to show that there is $\bv_{up}$ with $\bd_{\bv_{up}} \bu(\ba) \le \bz$.
Assign each agent a color, and color each point $\bs \in S$ by the agent $\arg\min_i (d_s \, u(a))_i$, breaking ties arbitrarily.
Let $F_i$ be the face of $S$ opposite $\be_i$; that is, $F_i$ is the set of $\bs \in S$ for which $s_i = 0$.
Let $\beff$ be any point in $F_i$.
Note that, by the assumption of positive externalities, we have $(d_f \, u(a))_i > 0$; since we have assumed that $\min_i (d_f \, u (a))_i \le 0$, this implies that $\beff$ is not colored by agent $i$.

Let $\tee$ be a simplicization of $S$ such that each simplex $T \in \tee$ fits in a ball of radius $\eps$, for some $\eps > 0$ that we will pick later.
Since the face $F_i$ does not include any points with color $i$, we may apply Sperner's Lemma (Lemma \ref{lem: sperner}) to conclude that there is a simplex $T \in \tee$ whose $n$ vertices $\{\bt^1, \dots, \bt^n\}$ are rainbow (we assume without loss of generality that vertex $\bt^i$ has color $i$).
Let $\bx$ be a point within distance $\eps$ of each vertex $\bt^i$.
Since $(d_{t^i} u(a))_i \le 0$, and since $\beff(\bd) = \bd_{\bv} \bu(\ba)$ is Lipschitz continuous on $S$ (from Claims \ref{clm: dir derivs} and \ref{clm: lipschitz}), it follows that there is a constant $K_S$ independent of $\eps$ such that $(d_x u(a))_i \le \eps K_S$ for all $i$.
In other words, we have that $\bu(\bx) \le \eps K_S \bo$, and so the set
$$Z_{\lambda} := \{\bs \in S \ \mid \ \bd_{\bs} \bu(\ba) \le \lambda \bo \}$$
is nonempty for all $\lambda > 0$, since it contains $\bx$ when we choose $\eps \le \frac{\lambda}{K_S}$.

Since the function $f(\bs) = \bd_{\bs} \bu(\ba)$ is continuous, $Z_{\lambda}$ for any $\lambda > 0$ is closed, bounded, and nonempty.
Thus, if we express the set
$$Z_0 := \{\bs \in S \ \mid \ \bd_{\bs} \bu(\ba) \le \bz\} \qquad  \text{    as    } \qquad Z_0 = \bigcap \limits_{\lambda > 0} Z_{\lambda}$$
then we see that $Z_0$ is the intersection of a descending sequence of sets that are closed, bounded, and nonempty.
By Cantor's Intersection Theorem (Lemma \ref{lem: cantor}), $Z_0$ is nonempty.
Note that any $\bv \in Z_0$ satisfies $\bv > \bz$, since if $v_i = 0$ then by the positive externalities assumption we have $(d_v u(a))_i \le 0$.
We may thus take $\bv_{up}$ to be any element of $Z_0$, and we have $\bv_{up} > 0$ and $\bd_{\bv_{up}} \bu(\ba) \le \bz$.

We have now shown that there is a vector $\bv_{up} > \bz$ that satisfies either $\bd_{\bv_{up}} \bu(\ba) > \bz$ or $\bd_{\bv_{up}} \bu(\ba) \le \bz$.
By an exactly symmetric argument, we have that there is a vector $\bv_{down} < \bz$ that satisfies either $\bd_{\bv_{down}} \bu(\ba) > \bz$ or $\bd_{\bv_{down}} \bu(\ba) \le \bz$.
Thus, if $\bd_{\bv_{up}} \bu(\ba) > \bz$ then we may place $\ba$ into the first category, if $\bd_{\bv_{down}} \bu(\ba) > \bz$ then we may place $\ba$ into the last category, and if neither of these are true then we simultaneously have $\bd_{\bv_{up}} \bu(\ba) \le \bz$ and $\bd_{\bv_{down}} \bu(\ba) \le \bz$ and so we may place $\ba$ into the second category.
\end{proof}

\begin{claim} \label{clm: hilo pf}
If there is a Pareto improvement $\ba'$ on an outcome $\ba$ satisfying $\ba' \not \lneq \ba$, then there is a Pareto improvement $\ba''$ on $\ba$ satisfying $\ba'' > \ba$.
\end{claim}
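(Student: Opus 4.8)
The plan is to graft the ``good'' coordinates of $\ba'$ onto $\ba$ to first produce a weak Pareto improvement lying weakly above $\ba$, and then to sharpen it into a point strictly above $\ba$ by the same small-perturbation device used in Claim \ref{clm: vec adjust}. Write $I := \{ i \mid a'_i > a_i \}$ for the set of coordinates on which $\ba'$ strictly exceeds $\ba$. Since $\ba'$ is a Pareto improvement we have $\bu(\ba') \gneq \bu(\ba)$, hence $\ba' \ne \ba$; combined with the hypothesis $\ba' \not\lneq \ba$, this forces $I$ to be nonempty. I would then define $\hat{\ba}$ by $\hat{a}_i := a'_i$ for $i \in I$ and $\hat{a}_i := a_i$ for $i \notin I$. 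By construction $\hat{\ba} \ge \ba$ everywhere and strictly on the nonempty set $I$, so $\hat{\ba} \gneq \ba$; and since $a'_i \le a_i$ for every $i \notin I$, we also have $\hat{\ba} \ge \ba'$.

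The crux is to verify that $\hat{\ba}$ is still a (weak) Pareto improvement on $\ba$, i.e. $\bu(\hat{\ba}) \gneq \bu(\ba)$, and I expect the careful bookkeeping here to be the main obstacle, since positive externalities does all the work but must be applied in two different ways. For an agent $k \notin I$ we have $\hat{a}_k = a_k$ while $\hat{\ba} \gneq \ba$, so positive externalities gives $u_k(\hat{\ba}) > u_k(\ba)$, strictly. For an agent $k \in I$ the subtlety is that its \emph{own} coordinate was altered, so I would instead compare $\hat{\ba}$ to $\ba'$: since $\hat{\ba} \ge \ba'$ with $\hat{a}_k = a'_k$, positive externalities (or plain equality, in the case $\hat{\ba} = \ba'$) yields $u_k(\hat{\ba}) \ge u_k(\ba')$, and chaining with $u_k(\ba') \ge u_k(\ba)$ from the Pareto-improvement hypothesis gives $u_k(\hat{\ba}) \ge u_k(\ba)$. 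Assembling the two cases, $\bu(\hat{\ba}) \ge \bu(\ba)$ with strict inequality on all of $N \setminus I$.

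It remains to upgrade $\hat{\ba} \gneq \ba$ to a witness strictly above $\ba$. If $I = N$ then $\hat{\ba} = \ba' > \ba$ already, and I would simply take $\ba'' := \ba'$. Otherwise $N \setminus I \ne \emptyset$, so the previous paragraph gives $\bu(\hat{\ba}) \gneq \bu(\ba)$, and I would set $\ba'' := \hat{\ba} + \delta \sum_{j \notin I} \be_j$ for a sufficiently small $\delta > 0$. Raising only the $N \setminus I$ coordinates makes every coordinate of $\ba''$ strictly exceed that of $\ba$, so $\ba'' > \ba$; by positive externalities the perturbation only helps the agents of $I$ (keeping $u_i(\ba'') \ge u_i(\ba)$), and by continuity of $\bu$ the strict inequalities $u_j(\hat{\ba}) > u_j(\ba)$ on $N \setminus I$ survive for small $\delta$, so $\bu(\ba'') \gneq \bu(\ba)$. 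This final step is exactly the device in the proof of Claim \ref{clm: vec adjust}, and the only place requiring care is the implicit use of the grafted coordinates of $\ba$ lying strictly below $\bo$, so that the perturbation remains a feasible outcome.
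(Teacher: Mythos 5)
Your proposal is correct and follows essentially the same route as the paper: your grafted point $\hat{\ba}$ is exactly the paper's coordinate-wise maximum $\ba^{\max}$, the two-case application of positive externalities (comparing to $\ba$ off $I$ and to $\ba'$ on $I$) matches the paper's argument, and your final perturbation step is the same device the paper invokes via Claim \ref{clm: vec adjust}. The only cosmetic difference is that you inline that perturbation rather than citing the claim.
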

\begin{proof}
Let $C$ be the largest set of agents for which $\ba'_C > \ba$.
If $C$ is empty then we have $\ba' \lneq \ba$ and we are done.
Otherwise, suppose $C$ is nonempty.
Define the point $\ba^{\max}$ such that $a^{\max}_j := \max\{a'_i, a_i\}$ for all $j$.
By positive externalities, we have $\bu_C(a^{\max}) \ge \bu_C(\ba') \ge \bu_C(\ba)$ (where the latter inequality follows from the fact that $\ba'$ is a Pareto improvement on $\ba$).
Also by positive externalities, we have $\bu_{N \setminus C}(\ba^{\max}) > \bu_{N \setminus C}(\ba')$.
Thus, $\ba^{\max}$ is a Pareto improvement on $\ba$ which satisfies $\ba^{\max} \gneq \ba$.
By Claim \ref{clm: vec adjust}, we may further assume that $\ba^{\max} > \ba$.
\end{proof}

\begin{lemma} [Lemma \ref{lem: pf category} in the body of the paper]
The points in the second category of Lemma \ref{lem: deriv types} are precisely the Pareto Frontier.
\end{lemma}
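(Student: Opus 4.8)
The plan is to establish the two inclusions separately, using the trichotomy of Lemma~\ref{lem: deriv types} as the organizing tool. Throughout I write $\beff(\bv) := \bd_{\bv} \bu(\ba)$, which is well-defined and concave by Claim~\ref{clm: dir derivs}, and I recall that membership in the second category requires \emph{both} an up-direction $\bv > \bz$ and a down-direction $\bv < \bz$ whose directional derivatives are $\le \bz$.

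First I would show that the first and third categories consist of points that are \emph{not} Pareto efficient; combined with the trichotomy, this yields the implication $\ba \text{ Pareto efficient} \Rightarrow \ba \text{ in the second category}$. This direction is routine. If $\ba$ lies in the first category, there is $\bv_{up} > \bz$ with $\bd_{\bv_{up}} \bu(\ba) > \bz$, so for sufficiently small $\lambda > 0$ the difference quotient remains strictly positive in every coordinate and $\bu(\ba + \lambda \bv_{up}) > \bu(\ba)$ is a Pareto improvement (the point is feasible since $\ba < \bo$). The third category is symmetric, using $\bv_{down} < \bz$ and the feasibility guaranteed by $\ba > \bz$.

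For the converse I would show that if $\ba$ is not Pareto efficient then it is not in the second category. Fix a Pareto improvement $\ba'$, so $\bu(\ba') \gneq \bu(\ba)$, and split on $C := \{i : a'_i > a_i\}$. If $C \ne \emptyset$ then $\ba' \not\lneq \ba$, so Claim~\ref{clm: hilo pf} supplies a Pareto improvement $\ba'' > \ba$; setting $\bv := \ba'' - \ba > \bz$, the monotonicity of the difference quotient (from concavity of $\bu$) gives $\beff(\bv) \ge \bu(\ba'') - \bu(\ba) \gneq \bz$. Thus some strict up-direction has weakly positive derivative, and Claim~\ref{clm: up or down} then forbids \emph{any} up-direction with $\beff \le \bz$, so the second category's up-requirement fails. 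If instead $C = \emptyset$, then $\ba' \lneq \ba$, and I claim $\bv := \ba' - \ba$ is in fact strictly negative; once that is established, the same $\beff(\bv) \ge \bu(\ba') - \bu(\ba) \gneq \bz$ together with the down-version of Claim~\ref{clm: up or down} defeats the second category's down-requirement.

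The crux — and the one genuinely delicate point — is exactly this strictness upgrade when $C = \emptyset$. Claim~\ref{clm: up or down} requires a \emph{strictly} negative direction, but a priori $\ba' - \ba$ is only weakly negative, since some coordinates might be tight ($a'_i = a_i$). The observation that rescues the argument is that positive externalities forbid tight coordinates outright: if $\ba \gneq \ba'$ and $a'_i = a_i$ for some $i$, then $u_i(\ba) > u_i(\ba')$, which contradicts $\bu(\ba') \gneq \bu(\ba)$. Hence every Pareto improvement lying weakly below $\ba$ must lie \emph{strictly} below it, so $\ba' - \ba < \bz$ as needed. I expect essentially all of the subtlety to live here: the upward case is handled cleanly by Claim~\ref{clm: hilo pf}, whereas the downward case has no such symmetric companion and instead leans on positive externalities, an asymmetry that is inherent to the public-goods setting.
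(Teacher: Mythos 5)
Your proposal is correct and takes essentially the same route as the paper: both show that categories 1 and 3 cannot be Pareto efficient, then reduce an alleged Pareto improvement on a category-2 point $\ba$ to one lying strictly above or strictly below $\ba$, lower-bound the directional derivative in that direction via concavity, and contradict Claim \ref{clm: up or down}. The only difference is minor and local: in the downward case you upgrade $\ba' \lneq \ba$ to $\ba' < \ba$ directly from positive externalities, whereas the paper obtains strictness in both cases uniformly by invoking Claim \ref{clm: vec adjust} --- so the asymmetry you flag as the ``crux'' is a one-line step either way rather than a genuinely delicate point.
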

\begin{proof}
It is clear that no outcome $\ba$ in the first or third category of Lemma \ref{lem: deriv types} is Pareto efficient, since these points have a direction $\bv$ for which $\bd_{\bv} \bu(\ba) > \bz$.
Therefore the second category in Lemma \ref{lem: deriv types} contains the Pareto frontier.
Let $\ba$ be an outcome in the second category, and suppose towards a contradiction that $\ba'$ is a Pareto improvement on $\ba$.

By Claim \ref{clm: hilo pf}, we may assume that $\ba' \lneq \ba$ or $\ba' \gneq \ba$; by Claim \ref{clm: vec adjust} we may assume more strongly that $\ba > \ba'$ or $\ba < \ba'$.
By concavity of $\bu$, we have that $\bd_{\ba' - \ba} \bu(\ba) > \bz$.
There also exist directions $\bv_{up}, \bv_{down}$ for which $\bd_{\bv_{up}} \bu(\ba) \le \bz$ and $\bd_{\bv_{down}} \bu(\ba) \le \bz$.
Since $\ba' - \ba > \bz$ or $\ba' - \ba < \bz$, we have a contradiction to Claim \ref{clm: up or down}.
Thus, there can be no Pareto improvement $\ba'$ on $\ba$.
\end{proof}

\subsection{Proof of Theorem \ref{thm: core classification}}

We assume that $\ba$ is a core outcome.
By definition this implies that the grand coalition and singleton coalitions have no deviation from $\ba$.
Thus, it only remains to show that $\bD_{\ba}$ is path connected.

\begin{claim} \label{clm: le points}
Let $\bx$ be a point on the lower envelope of a path connected component of $\bD_{\ba}$ (i.e. there is no point $\bx' \lneq \bx$ in this component).\footnote{The existence of $\bx$ follows straightforwardly from the fact that the path connected components of $\bD_{\ba}$ are topologically closed (since $\bu$ is continuous).}
Then for some (possibly empty) set of agents $C \subseteq N$, we have $\bu_C(\bx) = \bu_C(\ba)$ and $\bx_{N \setminus C} = \bz_{N \setminus C}$.
\end{claim}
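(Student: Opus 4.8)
The plan is to take $C$ to be exactly the set of agents whose dominance constraint is tight at $\bx$, and then argue that any agent outside $C$ must be idle, since otherwise the minimality of $\bx$ in its component would be violated by a downward slide.

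First I would define $C := \{ i \in N : u_i(\bx) = u_i(\ba) \}$. Because $\bx \in \bD_{\ba}$ forces $u_i(\bx) \le u_i(\ba)$ for every $i$, the complementary set $N \setminus C$ is precisely the set of agents enjoying strict slack, $u_i(\bx) < u_i(\ba)$. With this choice the equality $\bu_C(\bx) = \bu_C(\ba)$ holds by definition, so the entire content of the claim reduces to showing that $x_i = 0$ for each $i \in N \setminus C$.

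The main step is a contradiction argument: suppose some $i \in N \setminus C$ has $x_i > 0$. I would slide the $i$-th coordinate downward, setting $\bx(\delta) := \bx - \delta \be_i$ for $\delta \in [0, x_i]$, and show this short segment stays inside $\bD_{\ba}$ for small $\delta$. For each $j \ne i$ the point $\bx(\delta)$ agrees with $\bx$ in coordinate $j$ while $\bx \gneq \bx(\delta)$, so positive externalities yields $u_j(\bx(\delta)) \le u_j(\bx) \le u_j(\ba)$; for agent $i$, the strict slack $u_i(\bx) < u_i(\ba)$ together with continuity of $\bu$ gives a threshold $\delta_0 > 0$ with $u_i(\bx(\delta)) < u_i(\ba)$ throughout $[0, \delta_0]$. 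Hence $\bx(\delta) \in \bD_{\ba}$ for all $\delta \in [0, \delta_0]$, so this segment is a path in $\bD_{\ba}$ joining $\bx$ to $\bx(\delta_0)$, placing $\bx(\delta_0)$ in the same path connected component as $\bx$. Since $\bx(\delta_0) \lneq \bx$, this contradicts $\bx$ being on the lower envelope of its component, and I conclude $x_i = 0$, i.e. $\bx_{N \setminus C} = \bz_{N \setminus C}$.

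The one delicate point, though still routine, is arranging the downward slide to remain in $\bD_{\ba}$ for the binding and slack agents simultaneously: the agents $j \ne i$ are kept dominated by positive externalities for every $\delta > 0$, whereas agent $i$ is kept dominated only for small $\delta$, and it is exactly the strictness of the slack for $i \notin C$ that makes this continuity step available. No boundary issue arises because $x_i > 0$ ensures $\bx(\delta)$ stays in the valid action domain.
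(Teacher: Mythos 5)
Your proof is correct and follows essentially the same argument as the paper: both hinge on showing no agent $i$ can have $x_i > 0$ and $u_i(\bx) < u_i(\ba)$ simultaneously, via the same downward slide on coordinate $i$ (continuity handles agent $i$, positive externalities handle the rest) contradicting the lower-envelope property. The only cosmetic difference is that you define $C$ as the set of tight agents and show the slack agents are idle, whereas the paper defines $C$ as the support of $\bx$ and shows those agents are tight; both choices satisfy the claim.
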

\begin{proof}
Define $C$ to be the set of largest agents for which $\bx_C > \bz_C$.
By definition we have $\bx_{N \setminus C} = \bz_{N \setminus C}$, so we only need to prove that $\bu_C(\bx) = \bu_C(\ba)$. 

Suppose towards a contradiction that there is an agent $i$ for whom $x_i > 0$ and $u_i(x) < u_i(a)$.
We can then modify $\bx$ to a new point $\bx'$ by slightly decreasing the action of agent $i$.
Since $\bu$ is continuous, if we make this decrease small enough then we still have $u_i(x') < u_i(a)$.
By positive externalities, we then have $\bu(\bx') < \bu(\ba)$.
Since $\bx' \lneq \bx$, this contradicts the fact that $\bx$ is on the lower envelope of its connected component.
\end{proof}

\begin{lemma} \label{lem: da connected}
If $\ba$ is a core outcome, then $\bD_{\ba}$ is path connected.
\end{lemma}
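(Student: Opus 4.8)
The plan is to prove that \emph{every} path-connected component of $\bD_{\ba}$ contains the origin $\bz$; since distinct path-connected components are disjoint, this forces $\bD_{\ba}$ to consist of a single component and hence to be path connected. First I would record that $\bz \in \bD_{\ba}$: because no singleton coalition $\{i\}$ deviates from the core outcome $\ba$, individual rationality gives $u_i(\bz) \le u_i(\ba)$ for every $i$, so $\bu(\bz) \le \bu(\ba)$. It therefore suffices to rule out the existence of a component that avoids $\bz$.

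So I would fix an arbitrary path-connected component and let $\bx$ be a point on its lower envelope, whose existence is guaranteed by the footnote to Claim~\ref{clm: le points} (the components are closed, so minimizing $\sum_i x_i$ over the compact portion of the component lying below any chosen point yields a minimal element). By Claim~\ref{clm: le points} there is a set $C \subseteq N$ with $\bu_C(\bx) = \bu_C(\ba)$ and $\bx_{N \setminus C} = \bz_{N \setminus C}$, and by the construction of $C$ we have $\bx_C > \bz_C$. If $C = \emptyset$ then $\bx = \bz$ and the component contains $\bz$, as desired, so the entire task reduces to deriving a contradiction in the case $C \neq \emptyset$.

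For that case I would pass to the projected economy for $C$ and apply Lemma~\ref{lem: deriv types} to $\bx_C$ (legitimate since $\bx_C > \bz_C$), splitting into its three categories. In the first two categories there is a direction $\bv_{down} < \bz$ with $\bd_{\bv_{down}}\bu^C(\bx_C) \le \bz$; moving a small step $\lambda > 0$ along $\bv_{down}$ keeps $\bx_C$ strictly positive and, since $\bu^C$ is concave, weakly decreases every utility coordinate, so the resulting full point $\bx'$ (taking value $\bz$ off $C$) still lies in $\bD_{\ba}$, and $\bx' \lneq \bx$ with the connecting segment contained in $\bD_{\ba}$ --- contradicting that $\bx$ is on the lower envelope of \emph{its own} component. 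In the third category there is instead $\bv_{down} < \bz$ with $\bd_{\bv_{down}}\bu^C(\bx_C) > \bz$, so a small step along $\bv_{down}$ \emph{strictly raises} $u_i$ for every $i \in C$ above $u_i(\bx) = u_i(\ba)$; then $\bx'$ is a deviation from $\ba$ for the nonempty coalition $C$, contradicting the assumption that $\ba$ is in the core. Either way $C \neq \emptyset$ is impossible, so the component contains $\bz$, completing the argument.

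The step I expect to require the most care --- and the main obstacle --- is verifying in the first two categories that $\bx'$ satisfies the \emph{full} membership constraint $\bu(\bx') \le \bu(\ba)$ on all coordinates, not merely those in $C$. For $i \in C$ this comes from the sign of the projected directional derivative together with concavity, which gives $\bu^C(\bx_C + \lambda\bv_{down}) \le \bu^C(\bx_C) = \bu_C(\ba)$ along the whole segment; for $i \notin C$ it comes from positive externalities, since $\bx \gneq \bx'$ while $x_i = x'_i = 0$ yields $u_i(\bx') < u_i(\bx) \le u_i(\ba)$. Keeping the projected-versus-full bookkeeping straight, and confirming that the directional-derivative guarantees of Lemma~\ref{lem: deriv types} transfer cleanly to the projected economy $\bu^C$, is where the bulk of the routine but delicate work lies.
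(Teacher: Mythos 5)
Your proof is correct and follows essentially the same route as the paper's: both take a lower-envelope point $\bx$ of a path-connected component, invoke Claim~\ref{clm: le points}, pass to the projected economy for the coalition $C$, and use Lemma~\ref{lem: deriv types} to either contradict minimality of $\bx$ (first two categories) or produce a deviation for $C$ (third category). The only cosmetic difference is that you argue directly that every component contains $\bz$, whereas the paper phrases the same argument as a contrapositive starting from a component that avoids $\bz$.
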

\begin{proof}
We prove the contrapositive.
Suppose $\bD_{\ba}$ is not path connected, and let $\bar{\bD}_{\ba}$ be a path connected component of $\bD_{\ba}$ that does not include $\bz$, and let $\bx$ be a point on the lower envelope of $\bar{\bD}_{\ba}$.
Let $C$ be the smallest coalition such that $\bu_C(\bx) = \bu_C(\ba)$ and $\bx_{N \setminus C} = \bz_{N \setminus C}$ (the existence of such a coalition is given by Claim \ref{clm: le points}); note that $\bx \ne \bz$ and so $C$ is nonempty.
Now consider the projected economy for $C$.
We have $\bx_C > \bz_C$, so Lemma \ref{lem: deriv types} applies to the outcome $\bx_C$ in this economy.

Our next step is to figure out which of the three categories in Lemma \ref{lem: deriv types} contains $\bx_C$ in the projected economy.
We observe that $\bx_C$ cannot fall into the first or second category of Lemma \ref{lem: deriv types}.
This holds because we would have $\bv_{down} < \bz$ with $\bd_{\bv_{down}} \bu(\ba) \le \bz$, which means that we can slightly decrease the actions of the coalition $C$ in the \emph{original} economy along the direction $\bv_{down}$, giving a new point $\bx' \lneq \bx$.
By concavity of $\bu$, we still have $\bu(\bx') \le \bu(\ba)$ and so $\bx'$ belongs to the same connected component of $\bar{D}_{\ba}$; this contradicts the assumption that $\bx$ is on the lower envelope of $\bar{D}_{\ba}$.
Thus $\bx_C$ is in the third category of Lemma \ref{lem: deriv types} in the projected economy, so there is a direction $\bv_{down} < \bz$ with $\bd_{\bv_{down}} \bu(\ba) > \bz$.
If we move $\bx_C$ slightly in this direction $\bv_{down}$, we obtain an outcome $\bx'_C$ with $\bu_C(\bx'_C) > \bu_C(\ba_C)$.
By setting $\bx'_{N \setminus C} = \bz$, we then concatenate $\bx'_C$ and $\bx'_{N \setminus C}$ to obtain a point $\bx'$, which is a deviation in the original economy for the coalition $C$.
\end{proof}

\section{Adapting the Algorithm for Approximate Optimization \label{app:approx}}

We conclude by discussing the feasibility of the algorithm on utility functions that are either unknown or badly behaved, where exact convex optimization might not be easily solvable.
Here, one can use modern approximate convex optimization algorithms in the place of the exact solutions assumed in the above algorithm.
These algorithms query the utility function and its derivatives at various points in the domain, ultimately finding a point that comes within an additive $\eps$ of the optimal value.
Their theoretical runtime is typically a large polynomial in the dimensionality of the search space ($n$) but quite good in the accuracy parameter ($\eps$); for example, randomized center-of-gravity methods use $\Oish(n^5 \log(\eps^{-1}))$ time \cite{Bubeck15}, leading to a runtime of $\Oish(n^6 \log(\eps^{-1}))$ for our algorithm.
Further improvements are possible via more specialized optimization algorithms if one wishes to assume nice properties for the utility function.
If one does not wish to assume that the utility function derivatives can be easily measured, these algorithms can be adapted to avoid this necessity, although at the cost of an additional factor of $n$ \cite{Protasov96}.
For a general reference on convex optimization, see \cite{BV04}.

Let us consider a natural relaxation in which the goal is to distinguish between core outcomes and outcomes that are $\eps$-far from being in the core (i.e. there is an ``$\eps$-deviation'' $\ba'$ from $\ba$ satisfying $\bu_C(\ba') > \bu_C(\ba) + \eps \bo$ for the deviating coalition $C$).
Two tweaks to the algorithm/model will be needed to solve this relaxed problem with an approximate convex optimization algorithm.

\subsection{First Change: A Bounded Action Space}
First, caution is needed in the preprocessing stage where the Pareto efficiency of $\ba$ is tested.
It is no longer feasible to use the derivative-testing technique employed here in Programs \ref{prg:uppe} and \ref{prg:downpe}: the unbounded domain means that even a \emph{slightly} positive entry in the directional derivatives can translate to an indefinitely large Pareto improvement by following this gradient out far enough.
In other words, approximately negative directional derivatives do not translate to approximately Pareto efficient points.
Perhaps the most natural workaround here is to simply enforce a bounded domain, i.e. assume that actions lie in the interval $[0, 1]$ rather than $[0, \infty)$.
It seems fairly economically natural to assume a maximum amount of possible work for each agent.

With this assumption in hand, we can actually skip the entire preprocessing phase and move straight to the main loop where we iteratively shrink $C_A$.
The only purpose of the processing phase in the original algorithm is to bound the deviation search space via Lemma \ref{lem: down dev}; if we assume a bounded action space in the first place, it is no longer needed.

\subsection{Second Change: Dependence on the Condition Number}

The second change needed in the algorithm is that we must incurr an additional dependence in runtime on the condition number of the economy
$$\kappa := \max_{v, x, i} d_v \, u_i(x).$$
This dependence comes from the interplay between Programs \ref{prg:PE} and \ref{prg:downdir}.
To demonstrate the issue, suppose we run Program \ref{prg:PE} up to accuracy $\eps$, returning a point $\widehat{\bx^*}$.
We can easily examine $\widehat{\bx^*}$ to distinguish the cases in which the current active coalition $C_A$ has an $\eps$-deviation versus no deviation at all.
It is also quite straightforward to extend Lemmas \ref{lem: down dev} and \ref{lem: path down dev}.
The trouble is that Lemma \ref{lem:down deriv neg} breaks horribly; in particular, the vector $\bv^* \le \bz$ returned by Program \ref{prg:downdir} might not even approximately satisfy $\bd_{\bv^*} \, \bu(\widehat{\bx^*}) \le \bz$.

We propose the following solution.
First, we improve the accuracy of Program \ref{prg:PE} from $\eps$ to $\eps/(2\kappa)$, and again let $\widehat{\bx^*}$ be its output.
Thus if $\widehat{\bx^*}$ is not a deviation, then there is no $\eps/(2\kappa)$-deviation from $\ba$ for $C_A$.
Second, we begin an iterative process where we repeatedly select an agent $i$ where $u_i(\widehat{\bx^*}) \le u_i(a) + \eps/2$, and we modify $\widehat{\bx^*}$ by decreasing the action of agent $i$ by $\eps/(2\kappa)$.
Note that we now (quite generously) have $u_i(\widehat{\bx^*}) \le u_i(a) + 2\eps/3$, and we may repeat the process on a new agent.
In the worst case, we may need to repeat this process $O(n\kappa / \eps)$ times before an agent's action reaches $0$.
However, we can halt the process as soon as we slide $\widehat{\bx^*}$ to a point below the Pareto frontier, at which point we may simply compute the appropriate downwards derivative by Program \ref{prg:downdir}.
We were unable to determine if there actually exist utility functions in which the additional dependence of $O(n\kappa / \eps)$ will actually be realized in most iterations of the algorithm.

\end{document}